\theoremstyle{definition}
\newtheorem{theorem}{Theorem}
\newtheorem{proposition}[theorem]{Proposition}
\newtheorem*{theorem*}{Theorem}
\newtheorem{remark}[theorem]{Remark}
\newtheorem*{example*}{Example}
\newtheorem*{examples*}{Examples}
\newtheorem{example}[theorem]{Example}
\newtheorem{examples}[theorem]{Examples}
\numberwithin{equation}{section}
\numberwithin{theorem}{section}
\numberwithin{figure}{section}
\DeclareMathOperator{\Hom}{Hom}
\DeclareMathOperator{\id}{id}
\newcommand{\one}{\mathbf{1}}
\newcommand{\zz}{\mathcal{Z}}
\newcommand{\Bord}{\textrm{Bord}}
\newcommand{\Borddef}{\Bord_{n,n-1}^{\textrm{def}}(\mathds{D})}
\newcommand{\Vect}{\operatorname{Vect}}
\def\lra{\longrightarrow}
\newcommand{\btimes}{\mathbin{\square}} 
\tikzset{%
	string/.style={draw=#1, postaction={decorate}, decoration={markings,mark=at position .51 with {\arrow[draw=#1]{>}}}},
	costring/.style={draw=#1, postaction={decorate}, decoration={markings,mark=at position .51 with {\arrow[draw=#1]{<}}}},
	directed/.style={string=blue!50!black},
	redirected/.style={costring= blue!50!black}}
\g@addto@macro\bfseries{\boldmath}
\let\emph\undefined
\newcommand{\emph}[1]{\textsl{#1}}
\let\textit\undefined
\newcommand{\textit}[1]{\textsl{#1}}
\newcommand{\topD}[2]{\mathcal{D}_{#1} #2}
\begin{document} 
\thispagestyle{empty}
\begin{center}{\Large \textbf{
Topological defects}}
\end{center}

\begin{center}
Nils Carqueville\textsuperscript{$\alpha$} \quad 
Michele Del Zotto\textsuperscript{$\beta,\beta'$} \quad
Ingo Runkel\textsuperscript{$\gamma$}\footnote{
Emails: 
{\sf nils.carqueville@univie.ac.at}\,,~
{\sf michele.delzotto@math.uu.se}\,,~
{\sf ingo.runkel@uni-hamburg.de} 
}
\end{center}

\begin{center}
	{${}^{\alpha}$}Universit\"at Wien, Fakult\"at f\"ur Physik, \\
	Boltzmanngasse 5, 1090 Wien,  
	Austria
\\[1em]
	{${}^{\beta}$} Institute for Mathematics, Uppsala University,\\
	Box 480, SE-75106 Uppsala, Sweden \\
	{${}^{\beta'}$} Department of Physics and Astronomy, Uppsala University,\\
	Box 516, SE-75120 Uppsala, Sweden 
\\[1em]
	{${}^{\gamma}$}Fachbereich Mathematik, Universität Hamburg,\\
Bundesstraße 55, 20146 Hamburg, Germany
\end{center}


\section*{Abstract}
This is a survey article for the Encyclopedia of Mathematical Physics, 2nd Edition. 
Topological defects are described in the context of the 2-dimensional Ising model on the lattice, in 2-dimensional quantum field theory, in topological quantum field theory in arbitrary dimension, and in higher-dimensional quantum field theory with a focus on 4-dimensional quantum electrodynamics.
\setcounter{tocdepth}{2}


\tableofcontents

\newpage

\section{Introduction}

A \textsl{defect} in a $d$-dimensional Euclidean quantum field theory is an inhomogeneity localised on a submanifold of positive codimension.
For codimension~$1$ this includes interfaces between different theories. 
A defect is \textsl{topological} if the precise location of the defect submanifold does not matter in the sense that it can be deformed without affecting correlation functions, as long as it does not cross other defects or field insertions.
Defects, topological or not, also exist in statistical lattice models and quantum spin systems, though the localisation to submanifolds should be understood to hold up to some fixed number of lattice spacings.

\medskip

Conceptually, topological defects generalise the notion of a group symmetry.
An element~$g$ of the symmetry group corresponds to a topological defect of codimension $1$ across which the value of field operators jumps by the action of~$g$. 
In this generalisation, the mathematical language of group theory is upgraded to that of (higher) category theory. 
Namely, in an $n$-dimensional field theory or statistical model, topological defects are expected to form an $n$-category, with defects localised on $k$-manifolds providing the $(n-k)$-morphisms. 
Thinking of a group as a category with one object where all morphisms are invertible illustrates that the $n$-category of topological defects provides much finer information, since
\begin{itemize}
	\item
it includes $p$-morphism with $p>1$, corresponding to lower-dimensional defects joining those of higher dimensions into defect networks,
\item
it allows for non-invertible morphisms, corresponding to defects of a given dimension which cannot be fused with another defect to arrive at the trivial defect. 
\end{itemize}

Topological defects appear in a variety of settings and using different languages. 
It is remarkable that the same mathematical structure -- higher categories -- appears in all of these descriptions. This survey presents several of these settings in their language of choice -- lattice models, functorial field theory, and action functionals -- and sketches how topological defects and higher categories arise in each case. 
While each approach looks quite distinct at the outset, they are joined by the underlying mathematical structure of higher categories: 

\medskip

\noindent
\textbf{Section~\ref{sec:Ising-lattice}.} The starting point of this survey is the most famous of all statistical models, the 2-dimensional Ising model on a square lattice. Understanding topological defects in this language requires only finite sums, and no knowledge of quantum field theory or category theory. Yet both of the above generalisations appear naturally: a 2-category of topological defects, some of which are non-invertible.  

\medskip

\noindent
\textbf{Section~\ref{sec:2dQFT}.} 
Staying in two dimensions, we take an axiomatic approach to topological defects in QFT and explain how they give rise to a pivotal 2-category. 
A key application of topological defects -- invertible or non-invertible -- is their gauging. 
The datum of a gaugeable symmetry in two dimensions is a symmetric $\Delta$-separable Frobenius algebra. 
Examples under good mathematical control are provided by 
2-dimensional conformal field theories.

\medskip

\noindent
\textbf{Section~\ref{sec:TFT}.} Going beyond dimension two while staying on solid mathematical ground is possible in topological QFTs,
defined as functors out of stratified bordisms categories. 
We discuss the resulting $n$-categories of topological defects for $n=2,3$. 
The seemingly simple notion of an equivalence of objects leads to the important concept of condensations by dropping part of the data and conditions in an equivalence. 
Condensations are related to gaugeable symmetries, which we discuss for both framed and oriented theories. 

\medskip

\noindent
\textbf{Section~\ref{sec:QFT+}.} Much recent progress has been made in the description of topological and non-topological defects in quantum field theory. Higher categories and non-invertible topological defects turn out to be ubiquitous, and we use Maxwell theory and quantum electrodynamics in four dimensions to illustrate some of their features.

\bigskip

\noindent
\textbf{Acknowledgements.}
We thank 
    Lakshya Bhardwaj, 
    Ilka Brunner, 
    Vincentas Mulevi\v{c}ius, 
    Lukas Müller, 
    Kantaro Ohmori,
    and 
    Shu-Heng Shao 
for helpful discussions and comments on an earlier draft of this manuscript. 
We furthermore thank our section editor Kasia Rejzner for her support.
N.\,C.\ acknowledges support from the DFG Heisenberg Programme. 
The work of M.\,D.\,Z.\ has received funding from
the European Research Council (ERC) under the European Union’s Horizon 2020 research and innovation program (grant agreement No.\ 851931). M.\,D.\,Z.\ also acknowledges support from the Simons Foundation (grant 888984, Simons Collaboration on Global Categorical Symmetries).
I.\,R.\ is partially supported by the Deutsche Forschungsgemeinschaft via the Cluster of Excellence EXC 2121 ``Quantum Universe'' - 390833306.

\section{The 2-dimensional Ising model on a lattice}
\label{sec:Ising-lattice}

Despite its simplicity, the 2-dimensional statistical Ising model already possesses topological defects which exhibit several interesting features also observed in continuum models. Kramers--Wannier duality on the lattice is reviewed in \cite{Savit:1979ny}, and topological defects in the lattice Ising model are discussed in detail in \cite{Aasen:2016dop}, which is the main reference for this section.
For a similar discussion of other 2-dimensional models see \cite{Freed:2018cec,Aasen:2020jwb}, and for models with topological surface defects in three dimensions see \cite{Inamura:2023qzl}.

\subsection{The Ising model and its $\mathds{Z}_2$-symmetry}\label{sec:Ising-and-Z2}

We consider the 2-dimensional statistical Ising model on a square lattice $\Lambda$. The degrees of freedom are lattice spins $s_i \in \{ \pm1\}$ located on lattice sites $i \in \Lambda$. To an edge $\langle ij \rangle$ between sites $i$ and $j$ one assigns the edge weight $e^{\beta s_i s_j}$, where $\beta$ is the inverse temperature. The partition function is a sum over spin configurations $s\colon \Lambda \to \{ \pm 1\}$ of the product of the edge weights.
For the description of dualities below it is useful to include a constant weight $V \in \mathds{C}^\times$ per site (this will be a free parameter in the value of duality defect loops). Altogether, the partition function is 
\begin{align}\label{eq:Ising-PF}
  Z = \sum_s \prod_{ \langle ij \rangle } \textrm{e}^{\beta s_i s_j} \prod_i V~.
\end{align}
We assume that $\Lambda$ has a finite number of sites, e.g.\ a finite square with free or fixed boundary conditions, or a torus. The expectation value of lattice spins at sites $j_1,\dots,j_n$ is given by the ratio
\begin{align}
	\big\langle s_{j_1} \cdots s_{j_n} \big\rangle
	= \frac{1}{Z} \sum_s s_{j_1} \cdots s_{j_n} \prod_{ \langle ij \rangle } \textrm{e}^{\beta s_i s_j} 
 \prod_i V
 ~.
\end{align}
We refer to $s_{j_1}, \dots, s_{j_n}$ as operator insertions. 

The Ising model has a global symmetry consisting of changing the sign of all spins in a given configuration. This symmetry can be recovered from a topological defect line on the dual lattice $\widetilde\Lambda$ (which by definition is again a square lattice, but its vertices sit at the centre of the plaquettes of $\Lambda$). 
Namely, fix a path $\gamma$ on $\widetilde\Lambda$. For each edge $\langle ij \rangle$ of $\Lambda$ which intersects the path $\gamma$, the edge weight is changed to $\textrm{e}^{- \beta s_i s_j}$. The partition function does not change if one moves the path across one square of $\widetilde\Lambda$ as in example (a) below:
\begin{align}\label{eq:diag-S-move-1-square}
\raisebox{1.7em}{\text{(a)}}~~
\raisebox{-0.45\height}{\scalebox{.25}{\includegraphics{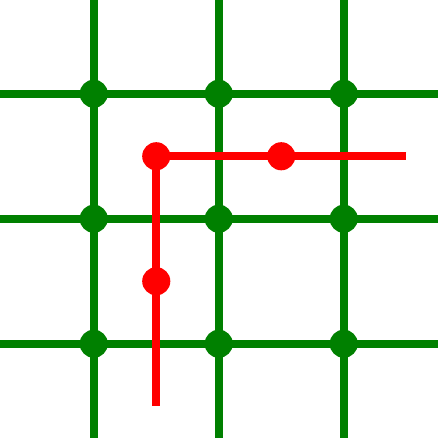}}}
~=~
\raisebox{-0.45\height}{\scalebox{.25}{\includegraphics{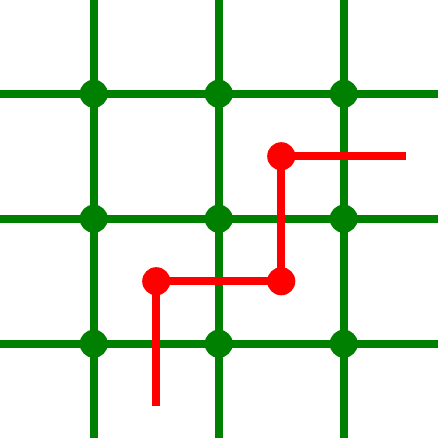}}}
\hspace{4em}
\raisebox{1.7em}{\text{(b)}}~~
\raisebox{-0.45\height}{\scalebox{.25}{\includegraphics{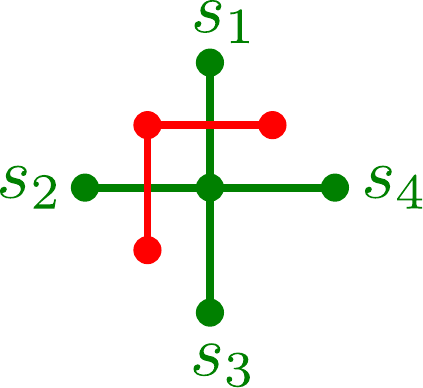}}}
~=~
\raisebox{-0.45\height}{\scalebox{.25}{\includegraphics{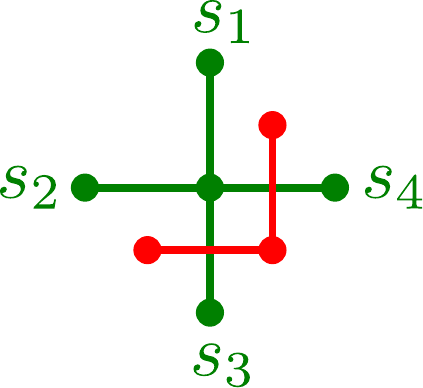}}}
\end{align}
The move in (a) follows from the local identity (b), where the spins at the labelled sites are kept fixed and the spin at the unlabelled site is summed over. If we write $t$ for the spin at the unlabelled site, (b) represents the identity
\begin{align}
V^5
\!\!\sum_{t \in \{\pm 1\}}\!\! \textrm{e}^{-\beta s_1 t} \textrm{e}^{-\beta s_2 t} \textrm{e}^{\beta s_3 t} \textrm{e}^{\beta s_4 t}
\,=\,
V^5
\!\!\sum_{t \in \{\pm 1\}}\!\! \textrm{e}^{\beta s_1 t} \textrm{e}^{\beta s_2 t} \textrm{e}^{-\beta s_3 t} \textrm{e}^{-\beta s_4 t} \ ,
\end{align}
which follows from renaming $t \leadsto -t$.
Analogously one shows
\begin{align}\label{eq:diag-S-loop+reconnect}
\raisebox{-0.45\height}{\scalebox{.25}{\includegraphics{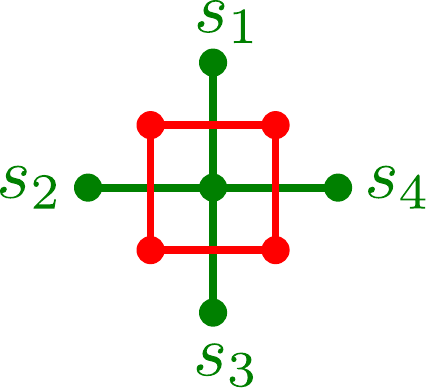}}}
~=~
\raisebox{-0.45\height}{\scalebox{.25}{\includegraphics{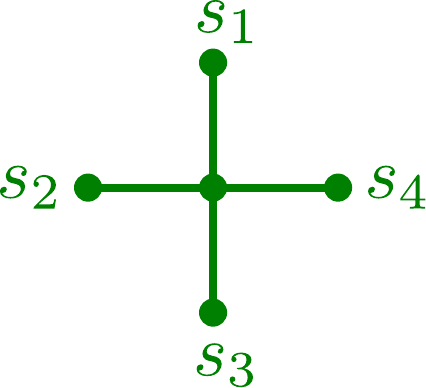}}}
\quad , \qquad
\raisebox{-0.45\height}{\scalebox{.25}{\includegraphics{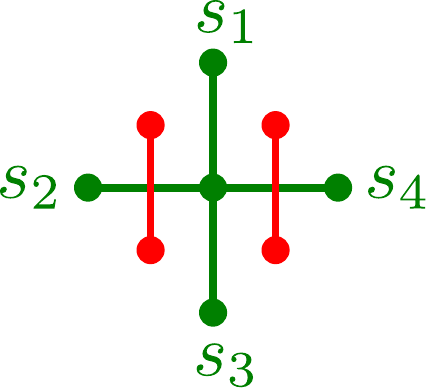}}}
~=~
\raisebox{-0.45\height}{\scalebox{.25}{\includegraphics{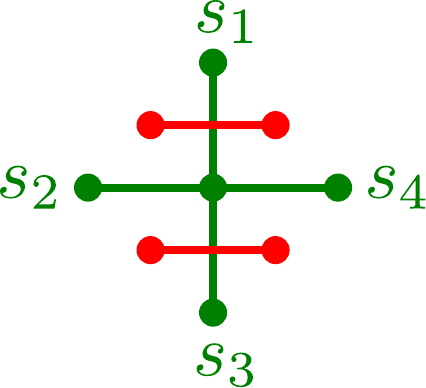}}}
\quad .
\end{align}
If we zoom out and do not display the details of the lattice, the above identities imply 
\begin{align}\label{eq:lattice-S-defect-zoom-out}
\raisebox{-0.45\height}{\scalebox{.25}{\includegraphics{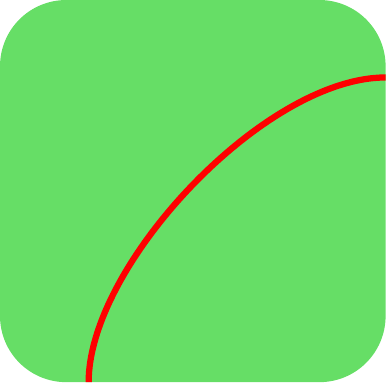}}}
\,=\,
\raisebox{-0.45\height}{\scalebox{.25}{\includegraphics{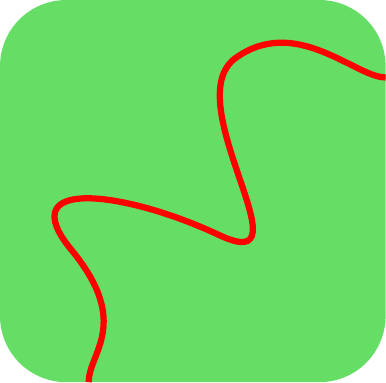}}}
~~ , ~~~
\raisebox{-0.45\height}{\scalebox{.25}{\includegraphics{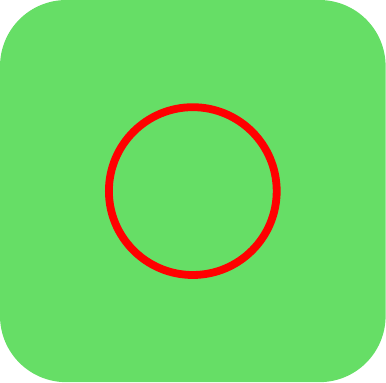}}}
\,=\,
\raisebox{-0.45\height}{\scalebox{.25}{\includegraphics{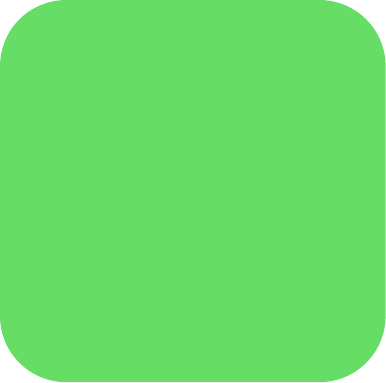}}}
~~ , ~~~
\raisebox{-0.45\height}{\scalebox{.25}{\includegraphics{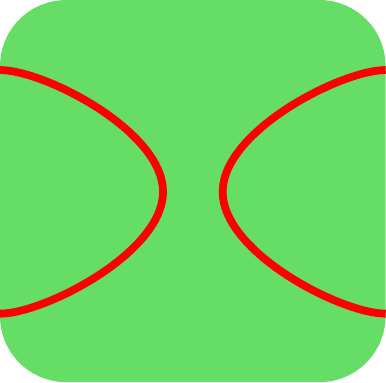}}}
\,=\,
\raisebox{-0.45\height}{\scalebox{.25}{\includegraphics{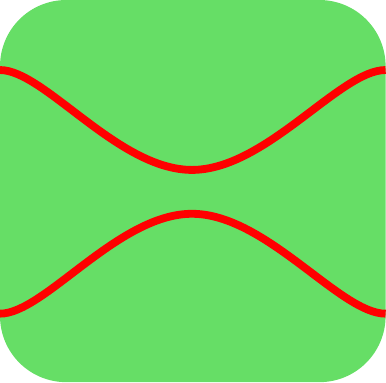}}}
~~.
\end{align}

The identities in \eqref{eq:diag-S-move-1-square} and \eqref{eq:diag-S-loop+reconnect} also hold for expectation values, provided that there are no operator insertions at the sites involved in the local configurations shown. Else for example in \eqref{eq:diag-S-move-1-square} one would also have to change the sign of the operator insertion at the site over which the path is moved. 
In the following we will always assume that there are no operator insertions at sites shown in the local diagrams.

The identity \eqref{eq:diag-S-move-1-square}\,(b) implies that expectation values do not change if the defect line $\gamma$ is deformed on $\widetilde\Lambda$ 
as in the first equality of \eqref{eq:lattice-S-defect-zoom-out}, provided
it stays away from operator insertions. 
In this sense it is a topological line defect, and we refer to it as the spin-flip defect $S$. The second and third identity in \eqref{eq:lattice-S-defect-zoom-out} imply that $S$ is \textit{(weakly) invertible}, in particular two parallel $S$ lines cancel.
These are examples of the \textit{bubble} and \textit{reconnection} condition, see Section~\ref{subsec:defect2categories}.

Apart from the spin flip defect, there is also the trivial defect, or invisible defect, $I$. 
Just like $S$ it is given by a path on $\widetilde\Lambda$, but it does not affect the weights in any way. 
We can write the above insight as a fusion rule for topological defect lines: $S \otimes S \cong I$.

\subsection{Gauging the $\mathds{Z}_2$-symmetry}

The $\mathds{Z}_2$-symmetry can be gauged, or, in other words, one can orbifold by it.
To this end, consider the superposition of line defects $A = I \oplus S$. 
On the lattice, the superposition is modelled by assigning a new degree of freedom $\alpha_e \in \{ \pm1\}$ to the edges $e$ of the path $\gamma$ and summing over the $\alpha_e$.
For $\alpha_e = 1$, the edge carries the defect $I$, and for $\alpha_e=-1$, it carries $S$. If an edge $\langle ij \rangle$ of $\Lambda$ is crossed by an edge $e$ of $\gamma$, the edge weight assigned to $\langle ij \rangle$ is $\textrm{e}^{\beta \alpha_e s_i s_j}$.
For two adjacent edges $e_1$, $e_2$ of the path $\gamma$ which meet at a vertex $a \in \widetilde\Lambda$, the variables have to agree: $\alpha_{e_1}=\alpha_{e_2}$. This condition means that on $\gamma$ either all $\alpha_e$ are $1$ or all are $-1$. 
The sum over the thus constrained $\alpha_e$ implements the superposition $A = I \oplus S$.
    
It will be convenient to orient $A$-defect lines. On top of that, we add additional structure, namely two types of three-fold junctions. For one type of junction, there are two incoming $A$-defects and one outgoing one, and vice versa for the other type, for example:
\begin{align}
\raisebox{-0.45\height}{\scalebox{.3}{\includegraphics{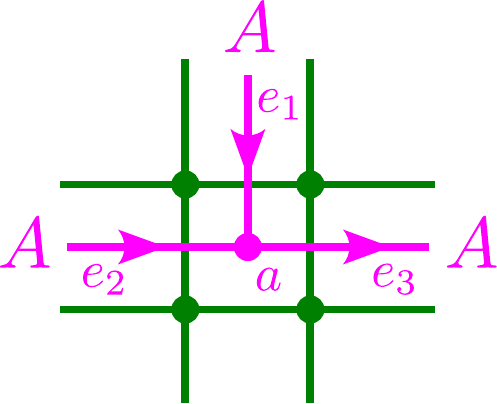}}}
\hspace{5em}
\raisebox{-0.45\height}{\scalebox{.3}{\includegraphics{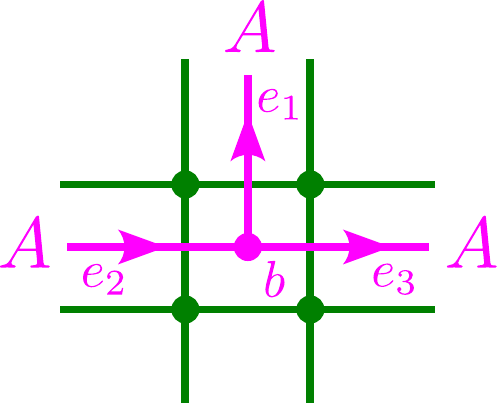}}}
\end{align}
The weight assigned to the vertex $a \in \widetilde\Lambda$ of the dual lattice is given by the delta-function $\delta(\alpha_{e_1}\alpha_{e_2}\alpha_{e_3},1)$, i.e.\ all three $\alpha_e$ touching $a$ have to multiply to $1$, and the weight assigned to $b$ is $\frac12 \delta(\alpha_{e_1}\alpha_{e_2}\alpha_{e_3},1)$.
That the product of the prefactors of the two $\delta(\cdots)$-weights is $\frac12$ implies that defect loops cancel (this condition will be used in the gauging procedure below):
\begin{align}
\raisebox{-0.45\height}{\scalebox{.25}{\includegraphics{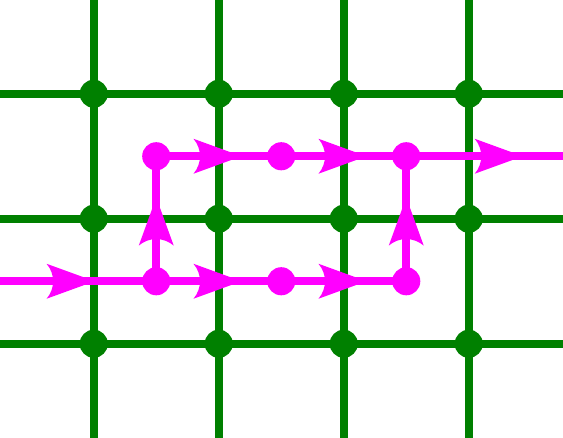}}}
~~=~~
\raisebox{-0.45\height}{\scalebox{.25}{\includegraphics{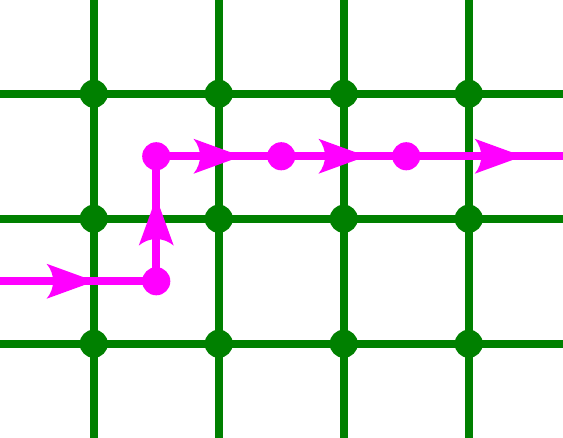}}}
\quad .
\end{align}
Indeed, if we keep the external $\alpha_e$'s fixed, the loop on the left-hand side has two allowed summands, while for the defect line on the right-hand side there is only one. 

The junctions are topological and associative in the sense that they can be moved past each other. We use this to define a four-valent junction
\begin{align}\label{eq:lattice-Adefect-4valent}
\raisebox{-0.45\height}{\scalebox{.3}{\includegraphics{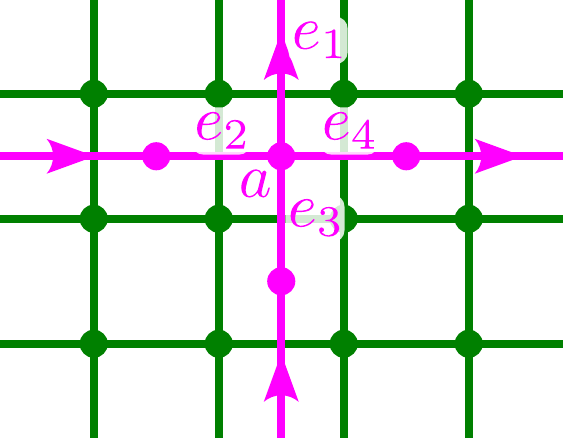}}}
~~:=~~
\raisebox{-0.45\height}{\scalebox{.3}{\includegraphics{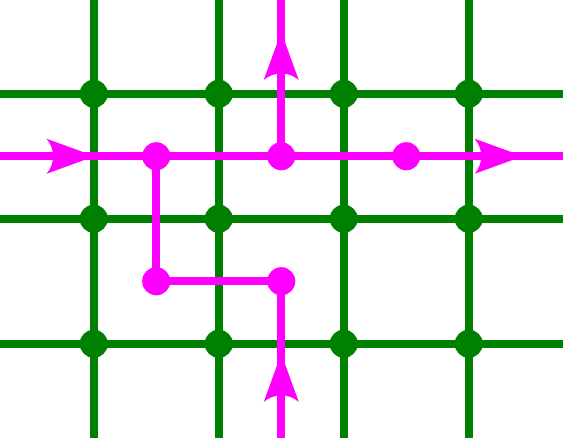}}}
~~=~~
\raisebox{-0.45\height}{\scalebox{.3}{\includegraphics{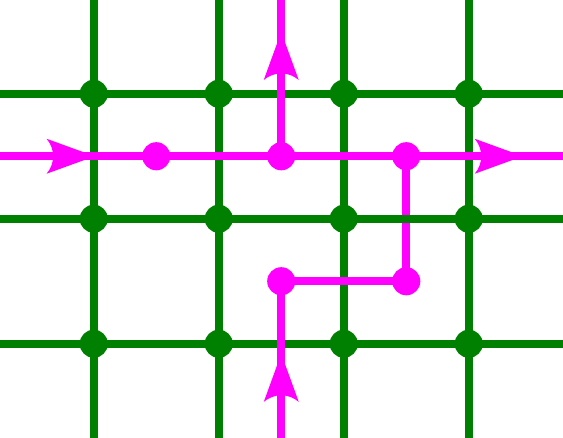}}}
\quad .
\end{align}
In formulas, the weight at site $a \in \tilde\Lambda$ is $\frac12 \delta(\alpha_{e_1}\alpha_{e_2}\alpha_{e_3}\alpha_{e_4},1)$.
In the language of 2d\,QFT in Section~\ref{sec:gauging} below, $A$ and its three-valent junctions correspond to a symmetric $\Delta$-separable Frobenius algebra (the unit and counit also have easy lattice descriptions, but we skip these).

To gauge the $\mathds{Z}_2$-symmetry, one places a network of oriented $A$-defect lines on $\tilde\Lambda$ which meet at three-valent junctions as above. The network has to be ``fine enough'' (see \cite{Carqueville:2023qrk} for more on that condition) -- the partition function is then independent of the precise choice of network. One may thus as well fill $\tilde\Lambda$ entirely, making use of the 4-valent junction \eqref{eq:lattice-Adefect-4valent}. 

Each valid (in the sense of non-zero weight) configuration of $\alpha_e$'s on the edges of $\tilde\Lambda$ can be interpreted as a flat $\mathds{Z}_2$-valued gauge field living on the edges of $\Lambda$, with the $\alpha_e$'s describing the transition functions. 
The sum over the $\alpha_e$'s implicit in evaluating the $A$-network is the sum over flat $\mathds{Z}_2$-gauge field configurations.

Let us refer to the original Ising model in Section~\ref{sec:Ising-and-Z2} by $\mathrm{Is}_\Lambda(\beta)$ and denote the $\mathds{Z}_2$-gauged version of the Ising model 
where each link of $\tilde\Lambda$ is occupied by $A$
by $\mathrm{Is}_\Lambda(\beta)/\mathds{Z}_2$. 
The partition function of the latter is
\begin{align}\label{eq:IsingZ2-PF}
  Z = \sum_s \sum_\alpha 
  \prod_v \tfrac12 \delta(\alpha_{e_1}\alpha_{e_2}\alpha_{e_3}\alpha_{e_4},1)
  \prod_{ \langle ij \rangle } \textrm{e}^{\beta \alpha_{\langle ij \rangle} s_i s_j} 
  \prod_i V~,
\end{align}
where in addition to \eqref{eq:Ising-PF}, $\alpha$ is summed over assignments $\alpha_e \in \{ \pm 1\}$ to edges $e$ of $\tilde\Lambda$, the product $\prod_v$ is over vertices of $\tilde\Lambda$, and $e_1,\dots,e_4$ denote the four edges meeting at $v$.

There is a canonical interface between $\mathrm{Is}_\Lambda(\beta)$ and $\mathrm{Is}_\Lambda(\beta)/\mathds{Z}_2$ which consists of just stopping the network. In fact, this interface is topological as one can easily check with the above rules for $A$-vertices. For example (we omit the orientations), 
\begin{align}\label{eq:interface}
\raisebox{-0.45\height}{\scalebox{.25}{\includegraphics{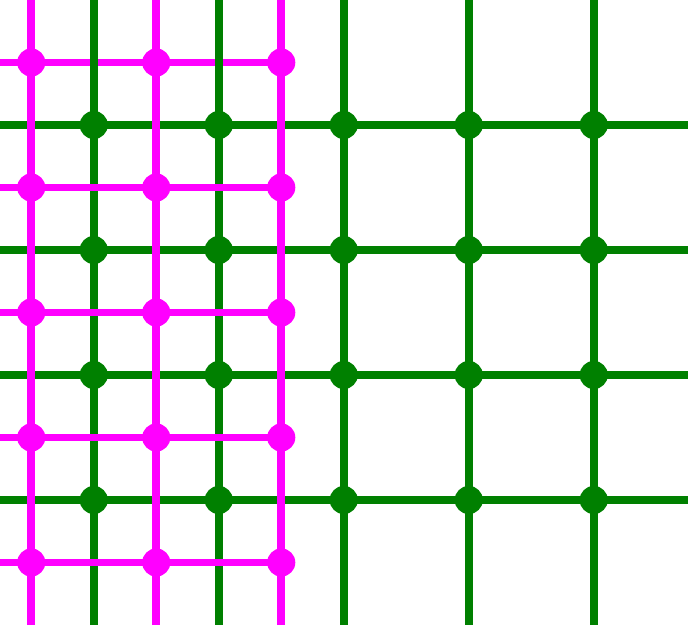}}}
~~=~~
\raisebox{-0.45\height}{\scalebox{.25}{\includegraphics{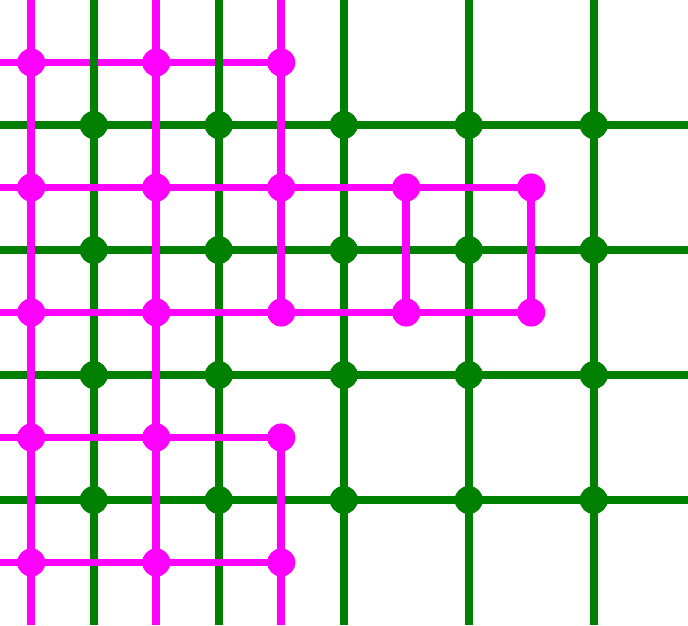}}}
\quad .
\end{align}
Let us denote this interface by $G$. This is our first example of a non-invertible topological defect (we use ``defect'' for both, an interface and an endo-defect). Namely, let $G^*$ be the corresponding defect for a mirrored version of \eqref{eq:interface} where the network is to the right of the defect. By simplifying the $A$-network using the conditions above, one finds $G^* \otimes G \cong A = I \oplus S$, which is not equivalent to the trivial defect $I$.
In terms of the conditions in Section~\ref{subsec:defect2categories} below, $G$ satisfies the invertible bubble conditions in \eqref{eq:weak-invertible-2cat-1st} and \eqref{eq:weak-invertible-2cat-2nd}, but not the reconnection conditions.

\subsection{The duality interface}

The $\mathds{Z}_2$-gauged Ising model $\mathrm{Is}_\Lambda(\beta)/\mathds{Z}_2$ is equivalent to $\mathrm{Is}_{\tilde\Lambda}(\tilde\beta)$, the ungauged Ising model at the dual temperature $\tilde\beta = -\frac12 \ln \tanh(\beta)$ 
on the dual lattice $\tilde\Lambda$. We will demonstrate this by giving an invertible topological interface $D$ between the two. 

The interface $D$ consists of diagonal edges $\langle ia \rangle$ between neighbouring sites $i \in \Lambda$ and $a \in \tilde\Lambda$. Apart from the degrees of freedom $s_i, \alpha_e \in \{\pm 1\}$ assigned to sites $i \in \Lambda$ and edges $e$ of $\tilde\Lambda$, there are also the spins $\sigma_a \in \{ \pm 1\}$ of $\mathrm{Is}_{\tilde\Lambda}(\tilde\beta)$ assigned to sites $a \in \tilde\Lambda$. The different weights are 
\begin{align}\label{eq:weights-duality-defect}
\raisebox{-0.45\height}{\scalebox{.35}{\includegraphics{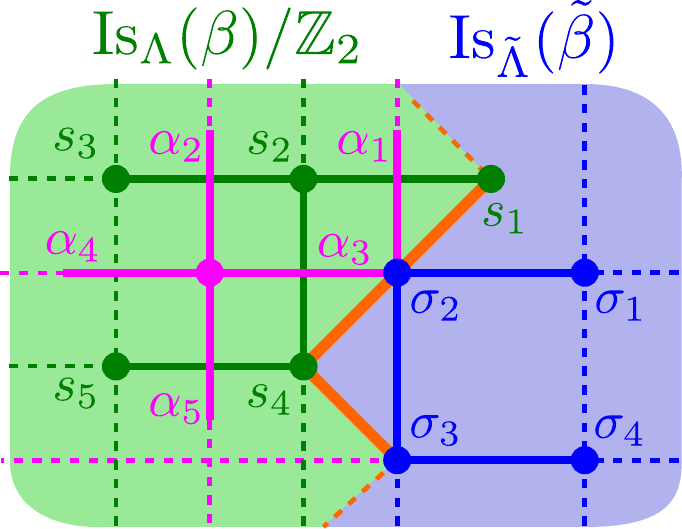}}}
\hspace{4em}
\raisebox{-0.45\height}{\scalebox{.35}{\includegraphics{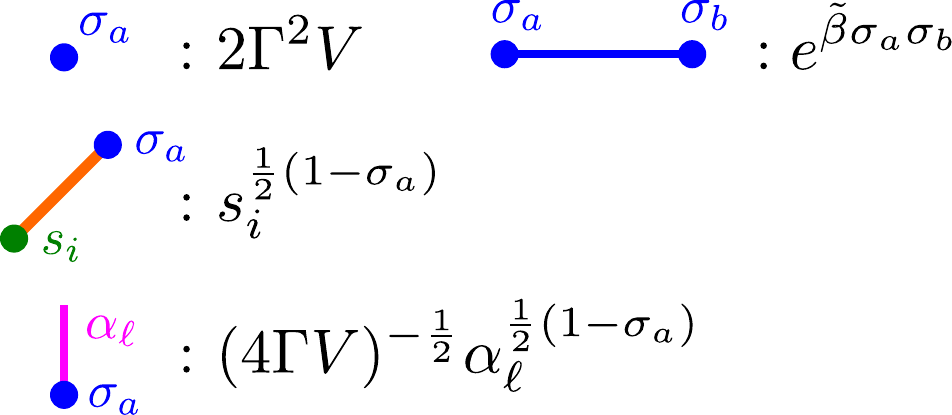}}}
\quad ,
\end{align}
where $\Gamma = \sqrt{ \sinh(\beta) \cosh(\beta) }$. 
For example, the total weight of the part of the above configuration drawn as solid lines is
\begin{align}
&\underbrace{ \tfrac12 \delta(\alpha_2\alpha_3\alpha_4\alpha_5,1) \, V^5 \,
	\textrm{e}^{\beta \alpha_1 s_1 s_2} \,
	\textrm{e}^{\beta \alpha_2 s_2 s_3} \,
	\textrm{e}^{\beta \alpha_3 s_2 s_4} \,
	\textrm{e}^{\beta \alpha_5 s_4 s_5}
}_{\mathrm{Is}_\Lambda(\beta)/\mathds{Z}_2}
\,\cdot\,
\underbrace{
	s_1^{\tau_2} \,
	s_4^{\tau_2} \,
	s_4^{\tau_3} \,
	(4 \Gamma V)^{-1} \, 
	\alpha_1^{\tau_2} \,
	\alpha_3^{\tau_2}
}_{\text{interface}}
\nonumber \\
& \quad \cdot  \,
\underbrace{
    (2 \Gamma^2 V)^4 \,
	\textrm{e}^{\tilde\beta \sigma_1 \sigma_2} \,
	\textrm{e}^{\tilde\beta \sigma_2 \sigma_3} \,
	\textrm{e}^{\tilde\beta \sigma_3 \sigma_4}
}_{\mathrm{Is}_{\tilde\Lambda}(\tilde\beta)}
\ ,
\end{align}
where $\tau_a = \frac12(1-\sigma_a)$.
That the interface is topological follows from the local identities
\begin{align}\label{eq:Ising-duality-topological}
\raisebox{-0.45\height}{\scalebox{.30}{\includegraphics{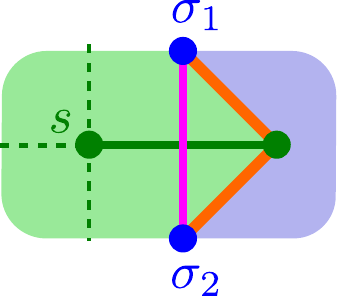}}}
~=~
\raisebox{-0.45\height}{\scalebox{.30}{\includegraphics{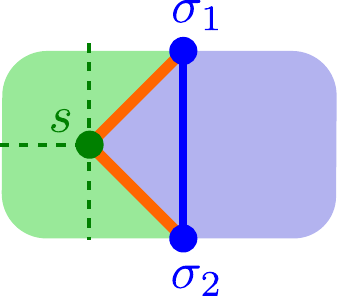}}}
\quad , \qquad
\raisebox{-0.45\height}{\scalebox{.30}{\includegraphics{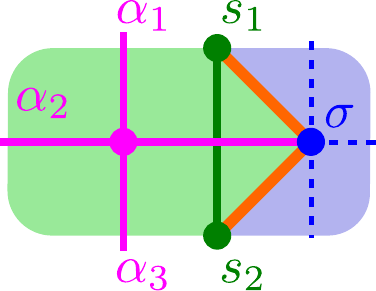}}}
~=~
\raisebox{-0.45\height}{\scalebox{.30}{\includegraphics{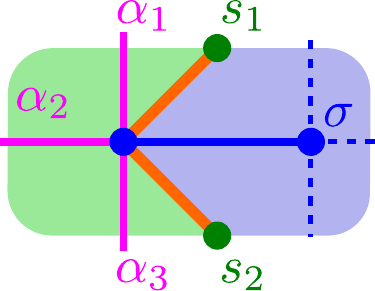}}}
\quad .
\end{align}
In these identities, the unlabelled degrees of freedom are summed over. 
The verification of the properties of~$D$ is based on the following two elementary identities: 
For $\varepsilon \in \{ \pm 1\}$ 
and $c_0(\beta) := \cosh(\beta)$, $c_1(\beta) := \sinh(\beta)$
one has
\begin{align}\label{eq:elementary-identities}
\textrm{e}^{\beta \varepsilon} = \sum_{k=0}^1 c_k(\beta) \varepsilon^k
\quad , \quad
c_{(1-\varepsilon)/2}(\beta) = \Gamma \textrm{e}^{\tilde\beta \varepsilon} ~.
\end{align}
For example, the computation leading to the first equality in \eqref{eq:Ising-duality-topological} is (denoting the unlabelled spin variable by $t$, the unlabelled edge variable by $\gamma$, and writing $\tau_a = \frac12(1-\sigma_a)$ as above)
\begin{align}
&  V^2 (2 \Gamma^2 V)^2 
\hspace{-.5em} \sum_{t,\gamma \in \{\pm 1\}} \hspace{-.5em}
\textrm{e}^{\beta \gamma st} t^{\tau_1} t^{\tau_2} 
(4 \Gamma V)^{-1} \gamma^{\tau_1} \gamma^{\tau_2}
\overset{\eqref{eq:elementary-identities}}= 
V^3 \Gamma^3 \sum_{k=0}^1  \sum_{t,\gamma} c_k(\beta) \gamma^{k+\tau_1+\tau_2} s^{k} t^{k+\tau_1+\tau_2} 
\nonumber \\
& \overset{(*)}=\, 4 V^3 \Gamma^3  c_{\frac12(1-\sigma_1 \sigma_2)}(\beta) s^{\tau_1+\tau_2}    
\overset{\eqref{eq:elementary-identities}}= 
V (2 \Gamma^2 V)^2 s^{\tau_1} s^{\tau_2}  \, \textrm{e}^{\tilde\beta \sigma_1 \sigma_2} ~.
\end{align}
In step $(*)$, the sum over $t$ forces $k = \tau_1+\tau_2 \mod 2$, and 
$\tau_1+\tau_2 = \frac12(1-\sigma_1 \sigma_2) \mod 2$.

Invertibility of $D$ is established by checking invertibility of the two bubbles, as well as the reconnection identity
\begin{align}\label{eq:Ising-bubbles+reconnect}
\raisebox{-0.45\height}{\scalebox{.30}{\includegraphics{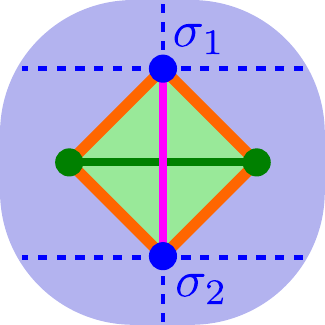}}}
\,&= 2V ~
\raisebox{-0.45\height}{\scalebox{.30}{\includegraphics{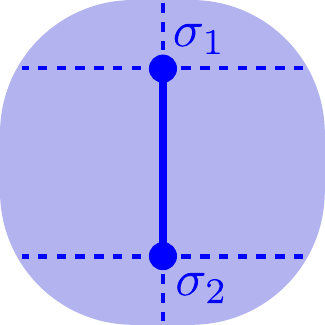}}}
\quad , \qquad
\raisebox{-0.45\height}{\scalebox{.30}{\includegraphics{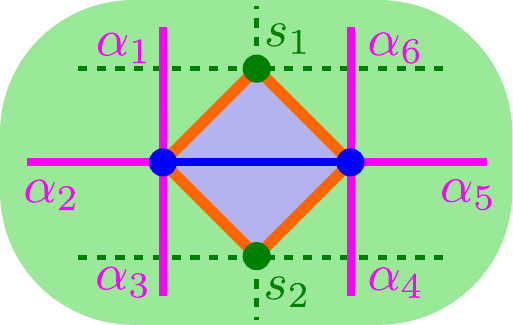}}}
\,= \frac{1}{2V}~
\raisebox{-0.45\height}{\scalebox{.30}{\includegraphics{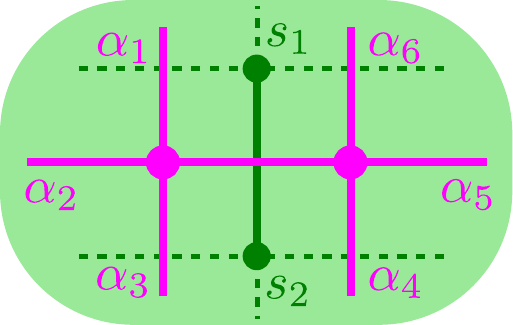}}}
\quad ,
\nonumber
\\[.6em]
\raisebox{-0.45\height}{\scalebox{.30}{\includegraphics{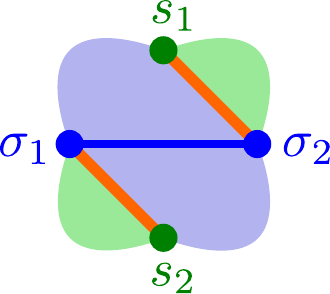}}}
\,&= 2V \,
\raisebox{-0.45\height}{\scalebox{.30}{\includegraphics{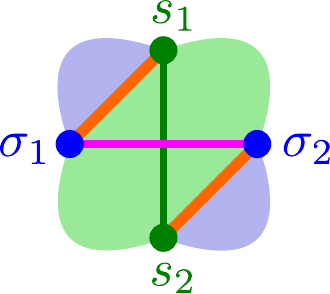}}}
~~.
\end{align}
In summary we discussed the topological defects
\begin{equation}\label{eq:Ising-defect-summary}
\begin{tikzcd}
\mathrm{Is}_{\Lambda}(\beta)
\arrow[bend right]{r}[swap]{G}
\arrow[loop left]{}{I,S,A}
&
\mathrm{Is}_\Lambda(\beta)/\mathds{Z}_2
\arrow[bend right]{l}[swap]{G^*}
\arrow[bend right]{r}[swap]{D^{-1}}
&
\mathrm{Is}_{\tilde\Lambda}(\tilde\beta)
\arrow[bend right]{l}[swap]{D}
\end{tikzcd}
\end{equation}
Of these, $I$, $S$, $D$ are invertible, but $G$ and $A$ are not. The composition $G^* \otimes D$ is the classical Kramers--Wannier duality relating $\mathrm{Is}_{\Lambda}(\beta)$ and $\mathrm{Is}_{\tilde\Lambda}(\tilde\beta)$ \cite{Kramers:1941}. 

\medskip

We conclude the discussion of the Ising lattice model with a quick comment on the relation to the continuum Ising CFT and its topological defects as described in more detail in Examples~\ref{ex:DQ-examples}\,(2) and~\ref{ex:surface-defect-interfaces}\,(2) below.
The critical point of the Ising model is achieved for $\beta = \tilde\beta$, which happens for $\beta_* = \frac12 \ln(1+\sqrt{2})$, or, equivalently, for $\Gamma = 1/\sqrt{2}$.
Note that this is precisely the value for which the vertex weights $V$ and $2 \Gamma^2 V$ on the two sides of~$D$ in \eqref{eq:weights-duality-defect} are equal. 
The value of $V$ affects the two bubbles in \eqref{eq:Ising-bubbles+reconnect}. 
In the continuum model this freedom can be understood as tensoring with an invertible 2d\,TFT, a so-called \textsl{Euler TFT}. 
The value of the Euler TFT for $\lambda \in \mathds{C}^\times$
on a surface $\Sigma$ is $\lambda^{\chi(\Sigma)}$, where $\chi(\Sigma)$ is the Euler character
(see also Footnote~\ref{fn:EulerTFT} in Section~\ref{sec:TFT}).
If $\Sigma$ is a disc, we have $\chi(\Sigma)=1$,
and so multiplying the gauged Ising CFT of the continuum limit by this Euler TFT changes the values of the bubbles corresponding to \eqref{eq:Ising-bubbles+reconnect} by $\lambda$ and $\lambda^{-1}$, respectively.
For the choice $V=1/\sqrt{2}$, the left and right quantum dimension of $G^* \otimes D$ is $\sqrt{2}$, as one would expect from the continuum Ising CFT. This value for $V$ agrees with the choice made in \cite{Aasen:2016dop}.

\section{2-dimensional quantum field theory}\label{sec:2dQFT}

Intuitively, an interface (or defect) $\delta$ between two 2-dimensional quantum field theories (QFTs) is a line which separates them, together with local interactions relating the fields to either side of~$\delta$, and those on~$\delta$.
Typically, amplitudes depend on the precise shape of the defect line. If this is not the case, i.e.\ if we can deform the defect line $\delta$
without changing the amplitude 
as long as it does not cross field insertions or defect lines, we call $\delta$ a topological defect.

Below we review how to assign a pivotal 2-category to 2d\,QFTs with topological defects. We then focus on the situation that a subset of endo-defects forms a pivotal fusion category and discuss some applications, as well as a connection to 3-dimensional topological QFTs.

\subsection{The 2-category of 2d\,QFTs and topological line defects}\label{sec:2cat-form-def}

Suppose we are given a 2d\,QFT $Q$ which is defined (at least) on discs $S$ embedded in $\mathds{R}^2$, possibly with smaller non-overlapping discs removed. $Q$ assigns a $\mathds{C}$-vector space $\mathcal{H}_r$ to a circle of radius $r$, called the \textit{state space} (this could be a Hilbert space or some other topological vector space, but we do not need this structure here). To $S$ the QFT assigns a multilinear map
\begin{align}
	Q(S) \colon \mathcal{H}_{r_1} \times \cdots \times \mathcal{H}_{r_n} \lra \mathcal{H}_R 
 \quad,\qquad
 S =
\raisebox{-0.45\height}{\scalebox{.45}{\includegraphics{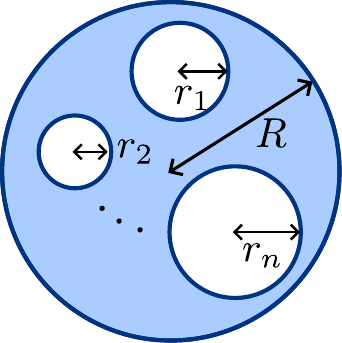}}}
~,
\end{align}
where $r_i$ are the radii of the inner discs removed from a larger disc of radius $R$. The maps $Q(S)$ are called \textit{amplitudes} and are required to be compatible with gluing smaller discs into larger ones.
The surface $S$ is equipped with the standard metric on $\mathds{R}^2$ and the amplitudes $Q(S)$ in general depend on the positions of the holes in $S$. 
This dependence is absent in the special case that $Q(S)$ is obtained from a TFT.

\begin{figure}[tb]
\centering

\raisebox{4em}{a)}	
\hspace{-2em}
\raisebox{-0.45\height}{\scalebox{.5}{\includegraphics{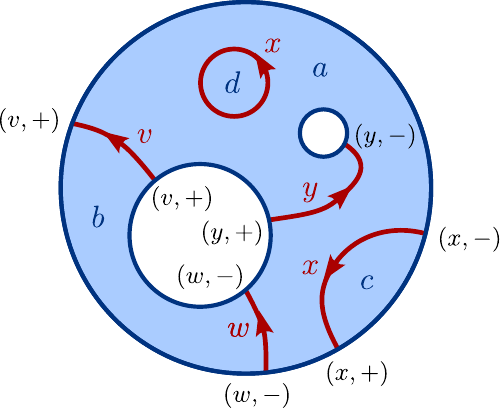}}}
\hspace{2em}
\raisebox{4em}{b)}	
\raisebox{-0.45\height}{\scalebox{.5}{\includegraphics{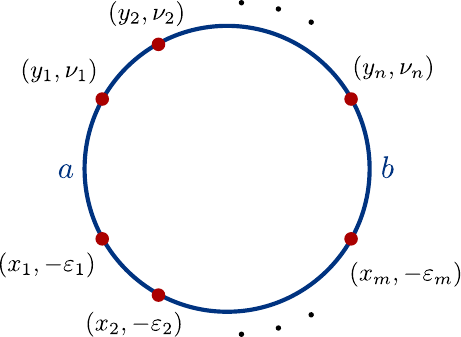}}}
\hspace{1em}
\raisebox{4em}{c)}	
\raisebox{-0.45\height}{\scalebox{.5}{\includegraphics{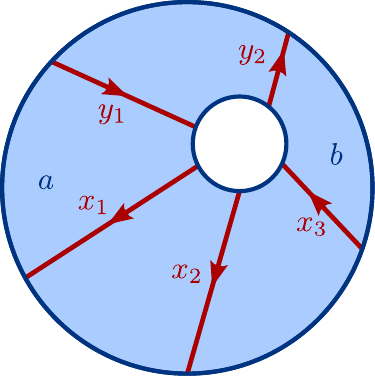}}}

\caption{a) Example of a surface $S$ with labels $a,b,c,d \in D_2$ for 2d patches and $v,w,x,y,z \in D_1$ for line components such that $s(v)=a$, $t(v)=b$, etc. b) Circle with marked points labelled by $\underline x^*$ and $\underline y$ to which $Q$ assigns the state space $\mathcal{H}_r(\underline x, \underline y)$. c) Example of a surface $S$ entering the condition for scale and translation invariant states.}
\label{fig:disc-with-discs-removed}

\end{figure}

We need a more general setting, where $S$ can have a finite number of embedded oriented lines, which are closed loops or intervals starting and ending transversally on the boundary of $S$. 
2-dimensional patches are labelled by elements of a set $D_2$ (2d bulk theories), and connected components of lines by elements of a set $D_1$ (defect conditions), see Figure~\ref{fig:disc-with-discs-removed}\,a). 
Two maps $s,t\colon D_1 \to D_2$ (source and target) describe which bulk theories are to the left and right of a given defect line labelled $x$, namely $t(x)$ and $s(x)$, respectively. For each label $a \in D_2$, one obtains a 2d\,QFT without defects by placing no lines on $S$ and labelling $S$ with $a$. In this sense one can think of $Q$ as describing a collection of 2d\,QFTs, indexed by $D_2$, together with line defects between them.

Let us call the rightmost point of each boundary circle of $S \subset \mathds{R}^2$ its \textit{preferred point}.
For example, the preferred point of the unit circle centred at the origin is  $(1,0) \in \mathds{R}^2$.
We require the defect conditions in $D_1$ to be topological defect lines for $Q$ in the following sense:\footnote{The existence of a topological defect between two bulk theories is a restrictive condition. 
For example, two CFTs can have a topological interface only if their central charges agree. 
If instead they are related by a non-trivial renormalisation group flow, they are joined by non-topological interfaces like those in \cite{Brunner:2007ur,Gaiotto:2012np}.}
Firstly, the state spaces $\mathcal{H}_r$ only depend on the order in which the defect lines meet the boundary, starting from the preferred point, and not on the precise places where they meet the boundary. 
Secondly, the amplitude $Q(S)$ is invariant under deformations of the embedded lines, as long as the lines do not touch each other and no line-endpoint is taken across the preferred point of a boundary circle. 

From this data, we can define a 2-category $\mathcal{D}_Q$ with adjoints \cite[Sect.\,2.4]{Davydov:2011kb}. 
The objects of $\mathcal{D}_Q$ are the set $D_2$. The 1-morphisms are compatible lists of labels with orientations $\underline x = ( (x_1,\varepsilon_1) , \dots , (x_n,\varepsilon_n) )$ where $x_i \in D_2$, $\varepsilon \in \{\pm 1\}$, and where at each step source and target agree. To this end, we extend the definition of $s,t\colon D_1 \to D_2$ to $s(x,+) = s(x)$, $s(x,-) = t(x)$, $t(x,+) = t(x)$, $t(x,-) = s(x)$ and require that $s(x_1,\varepsilon_1) = t(x_2,\varepsilon_2)$, etc. For $a,b \in D_2$, the list $\underline x$ is a 1-morphism $a \to b$ if $s(x_n,\varepsilon_n) = a$ and $t(x_1,\varepsilon_1) = b$. 
The unit 1-morphism $1_a\colon a \to a$ for $a \in D_2$ is the empty list $1_a = ()$. The composition of 1-morphisms is by concatenation of lists.
The description of 2-morphisms requires some preparation. 

For a list $\underline x$ as above, let the \textit{conjugate list} $\underline x^*$ be $((x_n,-\varepsilon_n),\dots,(x_1,-\varepsilon_1))$.
For two lists $\underline x$, $\underline y$ of defect conditions, write $\mathcal{H}_r(\underline x, \underline y)$ for the state space of a circle of radius $r$ with lower half intersected by defects in $\underline x^*$ and upper half by $\underline y$, see Figure~\ref{fig:disc-with-discs-removed}\,b). 
The vector space of \textit{scale and translation invariant states} $\mathcal{H}^\mathrm{inv}(\underline x, \underline y)$ consists of families $\{\psi_r\}_{r \in \mathds{R}_{>0}}$ where $\psi_r \in \mathcal{H}_r(\underline x, \underline y)$. The family has to satisfy $\psi_R = Q(S)(\psi_r)$ for all $R>r$ and all discs $S$ of radius $R$ with a disc of radius $r$ removed and boundary intersected by $\underline x^*$, 
$\underline y$ as above with straight defect lines placed on $S$, see Figure~\ref{fig:disc-with-discs-removed}\,c) for an example. 
One may think of $\mathcal{H}^\mathrm{inv}(\underline x, \underline y)$ as ``vacuum states'' in $\mathcal{H}_r(\underline x, \underline y)$. 
It may happen that $\mathcal{H}^\mathrm{inv}(\underline x, \underline y) = \{0\}$.

\begin{figure}[tb]
\centering

\raisebox{4em}{a)}	
~
\raisebox{-0.45\height}{\scalebox{.5}{\includegraphics{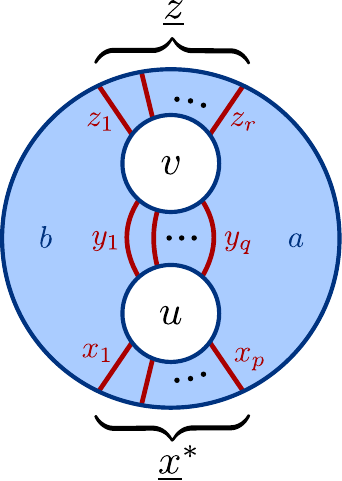}}}
\hspace{3em}
\raisebox{4em}{b)}	
~
\raisebox{-0.45\height}{\scalebox{.5}{\includegraphics{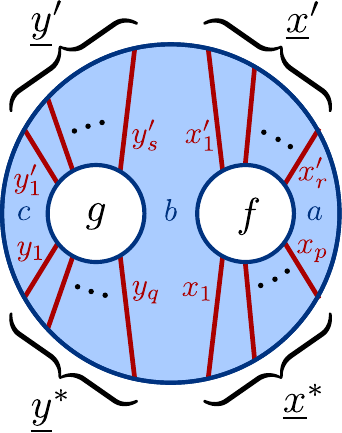}}}
\hspace{3em}
\raisebox{4em}{c)}
~	
\raisebox{-0.45\height}{\scalebox{.5}{\includegraphics{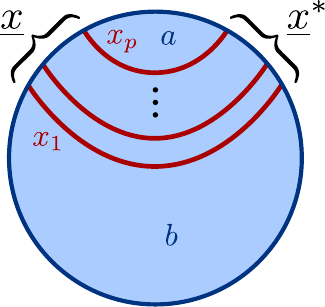}}}
	
	\caption{Surfaces defining (a) the vertical composition $\underline x \xrightarrow{u} \underline y \xrightarrow{v} \underline z$ and (b) the horizontal composition $(\underline y \xrightarrow{g} \underline y') \otimes (\underline x \xrightarrow{f} \underline x')$, as well as (c) the adjunction 2-morphism $1_b \to \underline x \otimes \underline x^*$.}
	\label{fig:horiz-vert-compos}
\end{figure}

For 1-morphisms $\underline x , \underline y\colon a \to b$ in $\mathcal{D}_Q$, the space of 2-morphisms $\underline x \to \underline y$ is defined to be $\mathcal{H}^\mathrm{inv}(\underline x, \underline y)$.
The horizontal and vertical composition of 2-morphisms are given by evaluating $Q$ on the surfaces shown in Figure~\ref{fig:horiz-vert-compos}, see \cite[Sect.\,2.4]{Davydov:2011kb} for details.
Note that working with families of states, rather than with a subspace of
$\mathcal{H}_r(\underline x, \underline y)$ for a fixed $r$, is necessary to be able to define these compositions.

States $u \in \mathcal{H}^\mathrm{inv}(\underline x, \underline y)$ can be interpreted as topological point defects sitting at the junction of line defects, a point of view we will often use below:
\begin{align}\label{eq:defect-junction}
\raisebox{-0.45\height}{\scalebox{.45}{\includegraphics{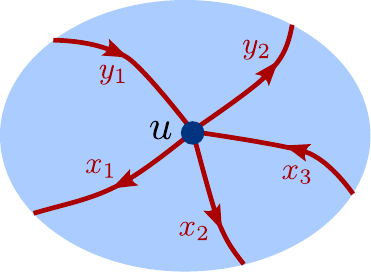}}}
~~:=~~
\raisebox{-0.45\height}{\scalebox{.45}{\includegraphics{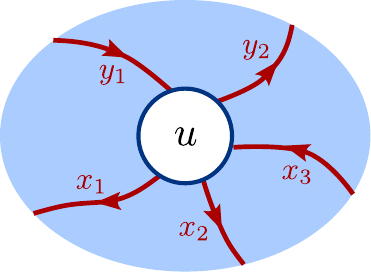}}}
\end{align}

The 1-morphism $\underline x^*$ is a simultaneous left and right adjoint of $\underline x$ in $\mathcal{D}_Q$. The various unit and counit 2-morphisms exhibiting $\underline x^*$ as a two-sided adjoint are given by evaluating $Q$ on suitable discs, see Figure~\ref{fig:horiz-vert-compos}\,c) for an example. The adjunctions satisfy a compatibility with the composition of 1-morphisms so that $\mathcal{D}_Q$ is a \textit{pivotal 2-category}, see e.g.\ \cite[Sect.\,2.2]{Carqueville:2016nqk}. 
The above discussion is summarised as:

\begin{proposition}
    A collection of 2d\,QFTs $Q$ 
    with bulk theories~$D_2$ and topological defects~$D_1$ gives rise to a pivotal 2-category $\mathcal{D}_Q$.
\end{proposition}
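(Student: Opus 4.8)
The plan is to verify directly that the data $\mathcal{D}_Q$ assembled above --- objects $D_2$, 1-morphisms the compatible oriented lists $\underline x$, 2-morphisms the spaces $\mathcal{H}^{\mathrm{inv}}(\underline x,\underline y)$, with compositions and adjunction morphisms all obtained by evaluating $Q$ on explicit surfaces --- satisfies the axioms of a pivotal 2-category. The single piece of leverage throughout is the topological invariance of $Q$ stated in the setup: since $Q(S)$ is unchanged under deformations of the embedded defect lines, provided no line touches another and no endpoint crosses a preferred point, every coherence equation reduces to an equality $Q(S_1)=Q(S_2)$ for two surfaces $S_1,S_2$ carrying isotopic line configurations. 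I would organise the verification in three layers: the strict structure on objects and 1-morphisms, the structure on 2-morphisms, and finally the adjunctions and pivotality.

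For the first two layers, I would note first that composition of 1-morphisms is concatenation of lists, hence \emph{strictly} associative and strictly unital with unit the empty list $1_a=()$; no coherence isomorphisms are needed at this level. For 2-morphisms I would check that vertical composition (Figure~\ref{fig:horiz-vert-compos}\,a) is associative and unital. Associativity follows from the gluing axiom of $Q$ --- compatibility of amplitudes under gluing discs into discs --- applied to the nested annuli, while the identity 2-morphism on $\underline x$ is the scale-invariant family given by the straight-line configuration; it is precisely here that passing to families $\{\psi_r\}_{r}$ rather than fixing a radius makes the composites well defined. I would then verify that horizontal composition (Figure~\ref{fig:horiz-vert-compos}\,b) is well defined, i.e.\ that its output again lies in $\mathcal{H}^{\mathrm{inv}}$, and that it is functorial in each slot. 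The middle-four interchange law is the crucial 2-categorical identity: the two ways of composing a $2\times 2$ array of 2-morphisms evaluate $Q$ on two surfaces whose line configurations are isotopic, so topological invariance gives the equality at once.

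For the adjunctions, the unit and counit 2-morphisms $1_b\to\underline x\otimes\underline x^*$ and $\underline x^*\otimes\underline x\to 1_a$, together with their mirror images, are the cup- and cap-shaped amplitudes of Figure~\ref{fig:horiz-vert-compos}\,c); I would check the two zigzag identities by observing that the composite cup-then-cap surface is isotopic, rel boundary and respecting the preferred points, to the straight strand representing the identity, whence equality of amplitudes. Since the conjugate list satisfies $(\underline x\otimes\underline y)^*=\underline y^*\otimes\underline x^*$ and $(\underline x^*)^*=\underline x$ \emph{strictly} from its definition $((x_n,-\varepsilon_n),\dots,(x_1,-\varepsilon_1))$, the compatibility of the two-sided adjunction with horizontal composition, and hence the pivotal coherence in the sense of \cite[Sect.\,2.2]{Carqueville:2016nqk}, again reduces to isotopies of line configurations on discs. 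This simultaneously exhibits $\underline x^*$ as left and right adjoint and identifies the resulting pivotal structure.

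The main obstacle is not any single computation but the bookkeeping that makes the ``reduces to an isotopy'' slogan precise: for each axiom one must confirm that the relevant deformation is \emph{admissible} --- no defect line is dragged across another and no line-endpoint is taken across a preferred point of a boundary circle --- and that every composite genuinely lands in the space of scale- and translation-invariant families rather than in some $\mathcal{H}_r$ at fixed $r$. Once this tracking is set up consistently, following the surface constructions of \cite[Sect.\,2.4]{Davydov:2011kb}, each 2-category and pivotal axiom becomes a direct instance of the topological invariance of $Q$.
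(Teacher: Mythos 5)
Your proposal is correct and takes essentially the same route as the paper: the proposition there is stated as a summary of exactly the construction you verify (objects $D_2$, list 1-morphisms composed by concatenation, 2-morphisms $\mathcal{H}^{\mathrm{inv}}(\underline x,\underline y)$, compositions and adjunctions from evaluating $Q$ on discs), with the axiom checks deferred to \cite[Sect.\,2.4]{Davydov:2011kb} and \cite[Sect.\,2.2]{Carqueville:2016nqk}. Your three-layer verification --- strictness at the 1-morphism level, gluing plus admissible-isotopy invariance for vertical/horizontal composition and interchange, and the strict identities $(\underline x\otimes\underline y)^*=\underline y^*\otimes\underline x^*$, $(\underline x^*)^*=\underline x$ reducing the zigzag and pivotality conditions to isotopies --- is precisely the content those references supply, so no gap remains.
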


For each $\underline x\colon a \to b$, the adjunction morphisms define the left quantum dimension $\dim_\mathrm{l}(\underline x) \in \mathcal{H}^{\mathrm{inv}}(1_a,1_a)$ and the right quantum dimension $\dim_\mathrm{r}(\underline x) \in \mathcal{H}^{\mathrm{inv}}(1_b,1_b)$ by evaluating $Q$ on a disc with an embedded  anti-clockwise (resp.\ clockwise) defect circle labelled $\underline x$
(see also \eqref{eq:diml-dimr} below).
For example, if $\underline x$ consists only of the element $ (x,+)$, 
\begin{align}\label{eq:loop-of-interface}
\dim_\mathrm{l}(\underline x) = Q\Bigg(
\raisebox{-0.45\height}{\scalebox{.45}{\includegraphics{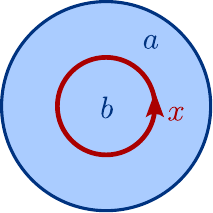}}}
\Bigg)
\quad , \quad
\dim_\mathrm{r}(\underline x) = Q\Bigg( \raisebox{-0.45\height}{\scalebox{.45}{\includegraphics{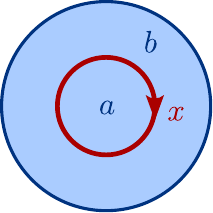}}}
\Bigg)
\quad , 
\end{align}
which do not have to be equal even for $a=b$. 
The quantum dimensions are related by $\dim_\mathrm{l}(\underline x) = \dim_\mathrm{r}(\underline x^*)$.
One may interpret the existence of adjoint 1-morphisms as a finiteness conditions on the topological defects that are allowed in the present formalism. In particular, all quantum dimensions will be finite.
Below we will often implicitly complete $\mathcal{D}_Q$ with respect to finite direct sums. 

\begin{examples}\label{ex:DQ-examples}
\begin{enumerate}
\item In \cite{Davydov:2011kb} (see also \cite{lp0602047, Mule1}) a state sum description of 2d defect TFTs is given for which the resulting 2-category $\mathcal{D}_Q$ is equivalent to $\textrm{ssFrob}$, the 2-category of separable symmetric Frobenius algebras, finite-dimensional bimodules, and bimodule intertwiners.  

\item 
For unitary 2-dimensional conformal field theories of central charge $c=\tfrac12$, $\mathcal{D}_Q$ has one object up to equivalence, namely the Ising CFT.
Its simple 1-morphisms are the three elementary topological endo-defects $\one$, $\sigma$, $\varepsilon$ of the Ising CFT \cite{Petkova:2000ip,Frohlich:2004ef}. Their composition rules are $\sigma \otimes \sigma \cong \one \oplus \varepsilon$ and $\varepsilon \otimes \varepsilon \cong \one$, and the quantum dimensions are $\dim_{\mathrm{l/r}}(\one) = \dim_{\mathrm{l/r}}(\varepsilon) = 1$, $\dim_{\mathrm{l/r}}(\sigma) = \sqrt{2}$.

At $c = \tfrac45$, there are the two unitary CFTs and $\mathcal{D}_Q$ has two objects up to equivalence, given by the tetracritical Ising CFT and the three-state Potts CFT, which are linked by a non-invertible topological interface. The full 2-category is worked out in \cite{Frohlich:2006ch}.

\item In the two examples above, the morphism categories of $\mathcal{D}_Q$ are finitely semisimple. An example where the morphism categories are neither finite nor semisimple is provided by
topologically twisted 2d $\mathcal N=2$ supersymmetric Landau--Ginzburg models. These are 2d\,TFTs with defects described by matrix factorisations, see \cite{Brunner:2007qu,Carqueville:2012st,CMM}. 
Here it can happen that the loop values \eqref{eq:loop-of-interface} are non-zero and nilpotent (and hence non-invertible), as is the case e.g.\ for the boundary-bulk maps of \cite[Sect.\,9]{Carqueville:2012st}. 
\end{enumerate}	
\end{examples}

\subsection{Pivotal fusion categories as symmetries}\label{sec:pivotal-sym}

We now restrict to endo-defects of a given 2d bulk theory $a \in D_2$ of $Q$, and we write $\mathcal{E}_Q(a)$ for the endomorphism category $a \to a$ of the 2-category $\mathcal{D}_Q$. 
If $a$ is clear from the context, we write $\mathcal{E}_Q$ or just $\mathcal{E}$. 
This is a pivotal monoidal category, with tensor product given by the composition of 1-morphisms and duals by the two-sided adjoints in $\mathcal{D}_Q$.

\newcommand{\finGr}{E}

By a group symmetry of a QFT one usually means that a group $G$ acts on the fields of the theory in a way that leaves the correlators invariant. 
Suppose we would like to know if~$Q$ has a given finite group $\finGr$ as symmetry. 
If $G(Q)$ denotes the maximal symmetry group of $Q$, this amounts to giving a group homomorphism
\begin{align}\label{eq:group-sym}
\rho\colon \finGr \lra G(Q) \ .
\end{align}
There may be no such homomorphism, or more than one, and $\rho$ may or may not be injective or surjective. Typically, $G(Q)$ may not be known explicitly, but one may still be able to describe subgroups in this way.

In two dimensions, when describing symmetries by topological defects, the role of the finite group in \eqref{eq:group-sym} is taken over by a \textit{pivotal fusion category}. 
A fusion category is a finitely semisimple $\mathds{C}$-linear monoidal category with duals and with simple tensor unit \cite[Sect.\,4.1]{EGNO}. In particular, there is a finite number of simple objects, and all objects can be written as finite direct sums of these. Let $\mathcal{F}$ be a pivotal fusion category. The generalisation of \eqref{eq:group-sym} is \cite{Fuchs:2002cm,Frohlich:2006ch,Bhardwaj:2017xup,Chang:2018iay}:
\begin{align}\label{eq:piv-fusion-cat-sym}
\begin{minipage}{30em}
	A 2d\,QFT $Q$ \textit{has symmetry $\mathcal{F}$} if there is a pivotal monoidal $\mathds{C}$-linear functor $R \colon \mathcal{F} \to \mathcal{E}_Q$. Then~$R$ is a \textit{realisation} 
    of $\mathcal{F}$ on $Q$.
\end{minipage}	
\end{align}
One also says ``$Q$ has topological (or categorical) symmetry $\mathcal{F}$''.
The functor $R$ is structure, not a property, and the full category $\mathcal{E}_Q$ of topological defects is typically not known explicitly.

The situation in \eqref{eq:piv-fusion-cat-sym} differs from \eqref{eq:group-sym} in two ways: $\mathcal{F}$ may contain objects that are not invertible with respect to $\otimes$, and $R$ maps morphisms of $\mathcal{F}$ to topological defect junctions as in \eqref{eq:defect-junction}, while \eqref{eq:group-sym} gives no information about junctions.

Note that considering fusion category symmetries is a very special case, just as demanding $\finGr$ in \eqref{eq:group-sym} to be a finite group is very restrictive. Non-finite and/or non-semisimple pivotal monoidal categories can equally be considered as topological symmetries (see Example~\ref{ex:DQ-examples}\,(3)).
Using only fusion categories as source for $R$ does in general not allow one to probe all of $\mathcal{E}_Q$.

\begin{example}\label{ex:su(2)k-fusion-cat}
Let $Q$ be the Wess--Zumino--Witten model for $\textrm{SU}(2)$ at level $k \in \mathds{Z}_{>0}$. This is a unitary 2d\,CFT with charge conjugation modular invariant partition function. The full set of (semisimple) topological defects can be described for $k=1$ \cite{Fuchs:2007tx}, but is not known for $k>1$. However, the topological defects which commute with the holomorphic and antiholomorphic currents form the pivotal fusion category $\mathcal{C}(\mathfrak{su}(2),k)$ which has $k+1$ simple objects $\underline 0,\underline 1, \dots, \underline k$ \cite{Petkova:2000ip,Fuchs:2002cm}. Some fusion rules are 
$\underline 0 \otimes \underline \lambda \cong \underline \lambda$, 
$\underline k \otimes \underline \lambda \cong \underline {k-\lambda}$, 
$\underline 1 \otimes \underline \lambda \cong \underline {\lambda-1} \oplus \underline {\lambda+1}$, where the last rule holds for $0<\lambda<k$. 
The object $\underline 0$ is the tensor unit, $\underline k$ is invertible under $\otimes$, and $\underline \lambda$ with $0<\lambda<k$ is not invertible. Altogether, the WZW model for $\textrm{SU}(2)$ at level $k$ has pivotal fusion category symmetry $R\colon \mathcal{C}(\mathfrak{su}(2),k) \to \mathcal{E}_Q$.
\end{example}

The description \eqref{eq:piv-fusion-cat-sym} of topological symmetries has the advantage of dividing the study of such symmetries into a model independent part, namely investigating the properties of a given pivotal category, and the model dependent part of finding pivotal functors $R$.

The functor $R$ is called \textit{essentially surjective} if each $x \in \mathcal{E}_Q$ is isomorphic to $R(u)$ for some $u \in \mathcal{F}$. This is the counterpart of $\rho$ in \eqref{eq:piv-fusion-cat-sym} being surjective. 
$R$ is \textit{faithful} if for each pair of objects $u,v \in \mathcal{F}$, the linear map $\Hom_{\mathcal{F}}(u,v) \to \Hom_{\mathcal{E}_Q}(R(u),R(v))$, 
$f \mapsto R(f)$ is injective. $R$ is called \textit{full} if these linear maps are surjective. 

Any monoidal $\mathds{C}$-linear functor out of a fusion category (with non-zero target)
is automatically faithful,
in particular $R$ in \eqref{eq:piv-fusion-cat-sym} is faithful. But $R$ need not be full. This is the case for example if two non-isomorphic simple objects in $\mathcal{F}$ are mapped to isomorphic objects in $\mathcal{E}_Q$. This cannot happen if $R$ is full, and so $R$ being full (and in any case faithful) is the counterpart of $\rho$ in \eqref{eq:piv-fusion-cat-sym} being injective. 

\medskip

\noindent
\textbf{Graded vector spaces.} 
    Let $\finGr$ be a finite group and let $\mathrm{Vect}_\finGr$ be the category of finite-dimensional $\finGr$-graded $\mathds{C}$-vector spaces. 
    It has simple objects $\mathds{C}_g$, $g \in \finGr$, a 1-dimensional vector space concentrated in degree $g$. 
    The tensor product is $\mathds{C}_g \otimes \mathds{C}_h \cong \mathds{C}_{gh}$. 
	Turning $\mathrm{Vect}_\finGr$ into a pivotal fusion category involves two pieces of data beyond the group structure of $\finGr$: the associator is determined (up to equivalence) by a class $[\omega]$ in $H^3(\finGr,\textrm{U}(1))$, the third group cohomology group of $\finGr$ with values in $\textrm{U}(1)$ \cite[Sect.\,2.3]{EGNO}, and the pivotal structure is fixed by a group homomorphism $d\colon \finGr \to \textrm{U}(1)$, such that $\dim_{\mathrm{l}}(\mathds{C}_g) = d(g)$ and $\dim_{\mathrm{r}}(\mathds{C}_g) = d(g^{-1})$. 
	Denote the resulting pivotal fusion category by $\mathrm{Vect}_\finGr^{\omega,d}$.
	
	Whether or not a (full or non-full) pivotal monoidal functor $R\colon \mathrm{Vect}_\finGr^{\omega,d} \to \mathcal{E}_Q$ exists depends not just on $\finGr$, but also on $\omega$ and $d$. In this sense, the statement that $Q$ has symmetry $\mathrm{Vect}_\finGr^{\omega,d}$ is finer information than just saying that $Q$ has group symmetry $\finGr$. 
    The 3-cocycle $\omega$ and the map $d$ appear as follows in amplitudes of $Q$:
\begin{align}
Q\Bigg(
\raisebox{-0.45\height}{\scalebox{.45}{\includegraphics{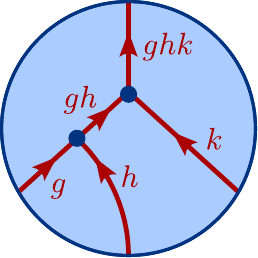}}}
\Bigg)
= \omega(g,h,k) \cdot
Q\Bigg(
\raisebox{-0.45\height}{\scalebox{.45}{\includegraphics{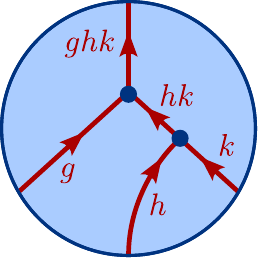}}}
\Bigg)
~ , \quad
Q\Bigg(
\raisebox{-0.45\height}{\scalebox{.45}{\includegraphics{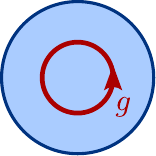}}}
\Bigg)
= d(g) \cdot
Q\Bigg(
\raisebox{-0.45\height}{\scalebox{.45}{\includegraphics{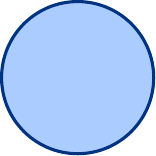}}}
\Bigg)
~, 
\end{align}
where the label $g$ stands for $R(\mathds{C}_g)$, etc.

\medskip

\noindent
\textbf{The case $\finGr = \mathds{Z}_2$.} 
	We have $H^3(\mathds{Z}_2,\textrm{U}(1)) \cong \mathds{Z}_2$ and there are two group homomorphisms $\mathds{Z}_2 \to \textrm{U}(1)$. There are thus four pivotal fusion categories with underlying fusion rules given by the group $\mathds{Z}_2$:
\begin{align}\label{eq:Z2-pivotal-cats}
\mathrm{Vect}_{\mathds{Z}_2} 
~,~~
\mathrm{Vect}_{\mathds{Z}_2}^\omega
~,~~
\mathrm{SVect}_{\mathds{Z}_2}
~,~~
\mathrm{SVect}_{\mathds{Z}_2}^\omega
~.
\end{align}
Here, $\omega$ denotes the non-trivial element in $H^3(\mathds{Z}_2,\textrm{U}(1))$ and $\mathrm{SVect}$ is the category of finite-dimensional super vector spaces, which is distinguished from $\mathrm{Vect}_{\mathds{Z}_2}$ by the quantum dimension being given by the super dimension $\mathrm{sdim}(\mathds{C}^{m|n}) = m-n$. 

Hence there are four inequivalent ways in which a 2d\,QFT can have a $\mathds{Z}_2$-symmetry. 
For example, the first symmetry in \eqref{eq:Z2-pivotal-cats} is realised in the Ising CFT and in the $\textrm{SU}(2)$-WZW models with even level $k$, and the second in those with $k$ odd (see \cite[Sect.\,4]{Runkel:2008gr} for this and other Lie groups). The remaining two cases can only be realised in non-unitary theories, for example in a Bershadsky--Polyakov model \cite[Sect.\,7.5]{Runkel:2022fzi} and in the Virasoro minimal model $M(3,5)$, respectively. Note that here we are considering oriented parity-even QFTs; 
in spin theories $\mathrm{SVect}_{\mathds{Z}_2}$ is realised e.g.\ in the free fermion CFT.

\medskip

\noindent
\textbf{Rank-2 pivotal fusion categories.} 
The number of isomorphism classes of simple objects in a fusion category is called its \textit{rank}. There are six pivotal fusion categories of rank 2 \cite{Ostrik:2002}: The four examples \eqref{eq:Z2-pivotal-cats} and two more examples with fusion rule $\phi \otimes \phi \cong \one \oplus \phi$, called Fibonacci categories.
It is a non-trivial result that there are no rank-2 fusion categories such that $\phi \otimes \phi \cong \one \oplus n \phi$ for $n \geqslant 2$.

\subsection{Ground state degeneracy in 2d\,TFT}\label{sec:groundstatedeg}

A simple example of how topological symmetry can impose constraints on QFTs is provided by TFTs with line defects. 
These are described as symmetric monoidal functors $\zz$ from a suitably stratified and decorated bordism category to $\mathrm{Vect}$, see Section~\ref{sec:TFT} and \cite{Davydov:2011kb,Carqueville:2016nqk}. 
We consider the case that all 2d patches carry the same label, i.e.\ there is a single bulk theory.
Since $\zz$ is defined on discs as in Section~\ref{sec:2cat-form-def}, we obtain a pivotal monoidal category $\mathcal{E}_{\zz}$.

\begin{proposition}\label{prop:2dTFT-fibre-functor}
If $\dim \zz(S^1) = 1$ and $\zz(S^2) \neq 0$, 
the Hom functor $\Hom_{\mathcal{E}_{\zz}}(\one,-)\colon \mathcal{E}_{\zz} \to \mathrm{Vect}$ is pivotal monoidal.
\end{proposition}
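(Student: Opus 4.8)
The plan is to equip $F := \Hom_{\mathcal{E}_{\zz}}(\one,-)$ with the lax monoidal structure that $\Hom(\one,-)$ carries for any monoidal category, and then to use the two hypotheses to show that the structure morphisms are invertible and pivotal-compatible. Concretely, the unitor is $\mathds{C}\to\Hom(\one,\one)$, $1\mapsto\id_{\one}$, and the laxator $\mu_{x,y}\colon F(x)\otimes F(y)\to F(x\otimes y)$ sends $f\otimes g$ to $f\otimes g$ precomposed with the inverse unit constraint $\one\xrightarrow{\sim}\one\otimes\one$. In the language of $\zz$ these are values on pair-of-pants surfaces decorated by defect strands: $\mu_{x,y}=\zz(P_{x,y})$, where $P_{x,y}\colon C_x\sqcup C_y\to C_{x\otimes y}$ is a pair of pants carrying two non-intersecting strands labelled $x$ and $y$, and $C_x$ denotes the circle pierced once by $x$. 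The observation that drives everything is that $\dim\zz(S^1)=1$ together with $\zz(S^2)=:\lambda\neq0$ makes the defect-free bulk an invertible (Euler) theory: the commutative Frobenius algebra $\zz(S^1)\cong\mathds{C}$ then has Frobenius form $\lambda$ and handle element $\lambda^{-1}$, so any closed defect-free subsurface can be excised at the cost of a nonzero scalar $\lambda^{\chi}$.

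The unitor is immediate: $\Hom(\one,\one)=\mathcal{H}^{\mathrm{inv}}(\one,\one)=\zz(S^1)$ is one-dimensional by hypothesis and contains the nonzero identity $\id_\one$, so $1\mapsto\id_\one$ is an isomorphism.

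For the laxator I would produce a two-sided inverse from $\zz(\bar P_{x,y})$, where $\bar P_{x,y}\colon C_{x\otimes y}\to C_x\sqcup C_y$ is the reversed pants. Functoriality of $\zz$ turns the two composites into values on glued surfaces. Gluing $P_{x,y}$ after $\bar P_{x,y}$ along the two circles $C_x\sqcup C_y$ produces a genus-one cobordism $C_{x\otimes y}\to C_{x\otimes y}$ whose handle lies in the defect-free bulk; by the Euler-theory observation its value is $\lambda^{-1}\,\id_{F(x\otimes y)}$. Gluing the other way, along the single circle $C_{x\otimes y}$, produces a genus-zero four-holed cobordism $C_x\sqcup C_y\to C_x\sqcup C_y$ in which $x$ and $y$ run straight through two disjoint bands; since the complementary region is defect-free and hence invertible, it decouples the two bands and contributes only a nonzero Euler scalar $c$, giving $c\,\id_{F(x)\otimes F(y)}$. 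Hence $\mu_{x,y}$ is mono and epi, so invertible, with two-sided inverse $\lambda\,\zz(\bar P_{x,y})$; naturality in $x,y$ and the associativity and unit coherences follow from the corresponding relations among the decorated cobordisms.

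For the pivotal statement I would use that a strong monoidal functor automatically preserves duals, so $F(x^*)\cong F(x)^{\vee}$ canonically; it then remains to check that $F$ carries the pivotal isomorphism $x\xrightarrow{\sim}x^{**}$ of $\mathcal{E}_{\zz}$ to the canonical one in $\Vect$. As the adjunction (cup and cap) $2$-morphisms of $\mathcal{E}_{\zz}$ are values of $\zz$ on half-discs, this reduces to verifying that $F$ sends them to the evaluation and coevaluation of $\Vect$, equivalently that the left and right quantum dimensions are matched, which is a direct consequence of the monoidal structure just established. The main obstacle is the genus-zero step, i.e.\ the \emph{injectivity} of $\mu_{x,y}$: the handle computation alone only shows $\mu_{x,y}$ to be split epi, and the reverse inequality genuinely requires leveraging $\dim\zz(S^1)=1$ to decouple the two defect strands on the four-holed sphere, with $\zz(S^2)\neq0$ ensuring the resulting scalar is nonzero. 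This is exactly the point that fails for a nontrivial bulk — for instance a gauge theory whose defects form $\mathrm{Rep}(G)$, where $\Hom(\one,-)$ is the non-monoidal invariants functor — which is what makes both hypotheses indispensable.
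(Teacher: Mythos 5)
Your proposal follows essentially the same route as the paper: the same pair-of-pants laxator, the same inverse candidate $\lambda\,\zz(\bar P_{x,y})$, and the same key mechanism that $\dim \zz(S^1)=1$ and $\zz(S^2)=\lambda\neq 0$ allow defect-free necks to be cut (the paper's cylinder identity, your Euler-theory/handle-element-$\lambda^{-1}$ observation), with the pivotality of $F$ likewise deferred to routine algebra just as the paper does. The only differences are cosmetic: the paper evaluates both composites explicitly with the factor $\lambda$, whereas you obtain the left inverse with an unspecified nonzero scalar and conclude via mono-plus-epi (also note the small normalization slip: the defect-free sector assigns $\lambda^{\chi/2}$, not $\lambda^{\chi}$, to closed surfaces, though the handle element $\lambda^{-1}$ you actually use is correct).
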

	
\begin{proof}
Abbreviate $F = \Hom_{\mathcal{E}_{\zz}}(\one,-)$. By definition of $\mathcal{E}_{\zz}$, $F( \underline y ) = \zz(S^1(\underline y))$, 
where $S^1(\underline y)$ is the unit circle decorated as in Figure~\ref{fig:disc-with-discs-removed}\,b) for $\underline x = ()$. The coherence morphism $\phi_{\underline x, \underline y}\colon F(\underline x) \otimes F(\underline y) \to F(\underline x \otimes \underline y)$ is the horizontal composition $\Hom_{\mathcal{E}_{\zz}}(\one,\underline y) \otimes \Hom_{\mathcal{E}_{\zz}}(\one,\underline y) \to \Hom_{\mathcal{E}_{\zz}}(\one,\underline x \otimes \underline y)$ from Figure~\ref{fig:horiz-vert-compos}\,b).

So far we have used none of our assumptions on $\zz$. These enter in the proof that $\phi_{\underline x, \underline y}$ is invertible. Let $\lambda := \zz(S^2) \in \mathds{C}^\times$. Since $\zz(S^1)=\mathds{C}$, we have
\begin{align}\label{eq:tft-cylinder-identity}
\mathcal{Z}\Big(\,
\raisebox{-0.4\height}{\scalebox{.45}{\includegraphics{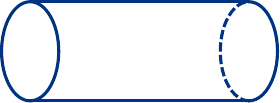}}}
\,\Big)
\,=\,
\frac{1}{\lambda} \,
\mathcal{Z}\Big(\,
\raisebox{-0.4\height}{\scalebox{.45}{\includegraphics{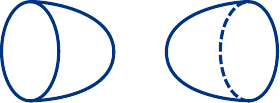}}}
\,\Big)
~.
\end{align}
This implies the following identities, which exhibit~$\lambda \, \zz$ 
applied to the dual pair-of-pants as inverse to $\phi_{\underline x, \underline y}$:
\begin{align}
\lambda\,
\mathcal{Z}\!\left(\,
\raisebox{-0.45\height}{\scalebox{.45}{\includegraphics{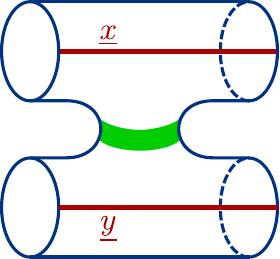}}}
\,\right)
=
\mathcal{Z}\!\left(\,
\raisebox{-0.45\height}{\scalebox{.45}{\includegraphics{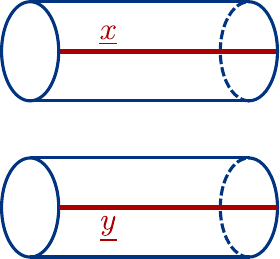}}}
\,\right)
~,~~
\lambda\,
\mathcal{Z}\Big(\,
\raisebox{-0.4\height}{\scalebox{.45}{\includegraphics{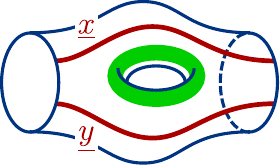}}}
\,\Big)
=
\mathcal{Z}\Big(\,
\raisebox{-0.4\height}{\scalebox{.45}{\includegraphics{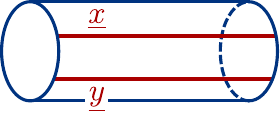}}}
\,\Big)
~.
\nonumber
\\[-1em]
\end{align}
The cylinder marked in green is the part of the bordism where \eqref{eq:tft-cylinder-identity} is applied.

The linear map $\phi_0\colon \mathds{C} \to F(\one)$ which takes $1$ to $\id$ is an isomorphism since $\zz(S^1) = \mathds{C}$. This shows that $F$ is a (strong) monoidal functor.
It is then a matter of algebra to check that if the Hom functor 
is monoidal, it is automatically pivotal; we skip the details.	
\end{proof}

Let now $\mathcal{F}$ be a pivotal fusion category and suppose that~$\zz$ has a realisation~$R$ of the symmetry $\mathcal{F}$. 
Under the assumptions in Proposition~\ref{prop:2dTFT-fibre-functor} we obtain a pivotal monoidal functor 
$\mathcal{F} \xrightarrow{R} \mathcal{E}_{\zz}\xrightarrow{\Hom(\one,-)} \mathrm{Vect}$.
Since $\mathcal{F}$ is semisimple, the composite functor $\mathcal{F} \to \mathrm{Vect}$ is exact, i.e.\ it is a \textit{(pivotal) fibre functor}. This in turn implies that $\mathcal{F} \cong \mathrm{Rep}(H)$ as (pivotal) monoidal categories for some (pivotal) Hopf algebra $H$ \cite[Sect.\,5.3]{EGNO}. 
Furthermore, if $R$ is full then already $\mathcal{F} \cong \mathrm{Vect}$ (as only then a fully faithful fibre functor exists).
See \cite{Thorngren:2019iar} for the relation between topological symmetries of TFTs and fibre functors.

If $\mathcal{F}$ does not allow for a pivotal monoidal functor $\mathcal{F} \to \mathrm{Vect}$, then it cannot be the topological symmetry of a TFT with unique vacuum state. For example, of the six rank-2 pivotal fusion categories, only one, namely $\mathrm{Vect}_{\mathds{Z}_2}$, allows for a pivotal monoidal functor to $\mathrm{Vect}$.
A related argument has been used for example to constrain the vacuum multiplicity in massive renormalisation group flows \cite{Chang:2018iay},
and in \cite{Komargodski:2020mxz} non-invertible topological symmetries have been used to study renormalisation group flows in 2d quantum chromodynamics.
Note that a group symmetry as in \eqref{eq:group-sym} is not strong enough to obtain such restrictions, as $\mathrm{Vect}_\finGr$ always allows for a pivotal fibre functor. 

\medskip

An $\mathcal{F}$-\textit{module category} is an additive $\mathds{C}$-linear category $\mathcal{M}$ together with a monoidal functor $\mathcal{F} \to \mathcal{E}nd(\mathcal{M})$, the endofunctors of $\mathcal{M}$. We focus on the case that $\mathcal{M}$ is finitely semisimple, i.e.\ $\mathcal{M} \cong \mathrm{Vect}^{\oplus r}$ as $\mathds{C}$-linear categories for some rank $r>0$. For example, since $ \mathcal{E}nd(\mathrm{Vect}) \cong \mathrm{Vect}$, the existence of a monoidal functor $\mathcal{F} \to \mathrm{Vect}$ is equivalent to $\mathcal{F}$ having a rank-1 module category.

If one applies the state sum construction of 2d\,TFTs with defects from \cite{Davydov:2011kb} to the algebra $A = \mathds{C}^r$ with element-wise product and addition, the category $\mathcal{E}$ of endo-defects (given by $A$-$A$-bimodules) is equivalent to $\mathcal{E}nd(\mathrm{Vect}^{\oplus r})$ as a monoidal category. Furthermore, in this case $\zz(S^1) = \mathds{C}^r$. 
This shows that if $\mathcal{F}$ has a rank-$r$ module category, $\mathcal{F}$ can be realised as topological symmetries of a 2d\,TFT with $\dim \zz(S^1)=r$ \cite{Huang:2021zvu}.

\begin{example}\label{ex:su2-k-module-cats}
The indecomposable module categories (i.e.\ those not equivalent to a direct sum) over $\mathcal{C}(\mathfrak{su}(2),k)$ from 
Example~\ref{ex:su(2)k-fusion-cat} are classified in \cite{Ostrik:2001} and follow an ADE pattern. 
Namely, for each $k$, $\mathcal{M}_{\mathrm{A}_{k+1}} = \mathcal{C}(\mathfrak{su}(2),k)$ is a module over itself of rank $k+1$. For even $k$, there is also a module category $\mathcal{M}_{\mathrm{D}_{k/2+2}}$ of rank $k/2+2$. Additionally, for $k=10,16,28$ there are module categories $\mathcal{M}_{\mathrm{E}_6}$, $\mathcal{M}_{\mathrm{E}_7}$, $\mathcal{M}_{\mathrm{E}_8}$ of ranks $6$, $7$, $8$, respectively.
\end{example}

\subsection{Gauging topological symmetries}\label{sec:gauging}

One can use topological defects in a 2d\,QFT to produce a new 2d\,QFT in a procedure called \textit{generalised orbifold}, or \textit{internal state sum}, or \textit{gauging} \cite{Fuchs:2002cm,Frohlich:2009gb,Carqueville:2012dk}.
The procedure exists in any dimension and is the same for QFTs as it is for TFTs, see \cite{Carqueville:2023qrk} for the latter.

In our setting, we start from a 2d\,QFT on oriented surfaces where all 2d\,patches carry the same bulk theory. 
The datum $\mathcal A$ we need to fix consists of a topological endo-defect $A \in \mathcal{E}_Q$, which will typically not be simple, and two three-fold topological junctions $\mu \in \Hom_{\mathcal{E}_Q}(A \otimes A,A)$, $\Delta \in \Hom_{\mathcal{E}_Q}(A, A \otimes A)$.
For the algebraic description it is convenient to also require two topological 
``1-fold junctions'' $\eta \in \Hom_{\mathcal{E}_Q}(\one,A)$, $\varepsilon \in \Hom_{\mathcal{E}_Q}(A,\one)$.
The conditions are that $A$ forms a $\Delta$-separable symmetric Frobenius algebra in $\mathcal{E}_Q$ with product $\mu$, coproduct $\Delta$, unit $\eta$ and counit $\varepsilon$ (see e.g.~\cite{Fuchs:2008fa} for more on Frobenius algebras).
These conditions imply that $\eta$, $\varepsilon$ are unique if they exist.
In terms of defect lines, the conditions are
\begin{align}\label{eq:fusion+bubble}
Q\Bigg(\!\!\raisebox{-0.45\height}{\scalebox{.45}{\includegraphics{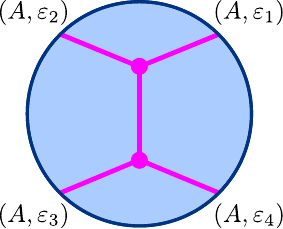}}}
\!\!\Bigg)
=
Q\Bigg(\!\!\raisebox{-0.45\height}{\scalebox{.45}{\includegraphics{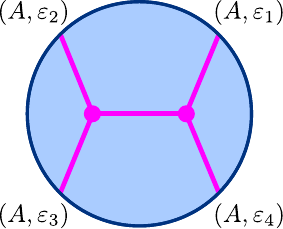}}}
\!\!\Bigg)
~,~~
Q\Bigg(\,\raisebox{-0.45\height}{\scalebox{.45}{\includegraphics{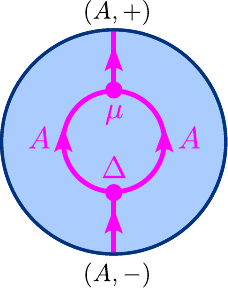}}}
\,\Bigg)
=
Q\Bigg(\,\raisebox{-0.45\height}{\scalebox{.45}{\includegraphics{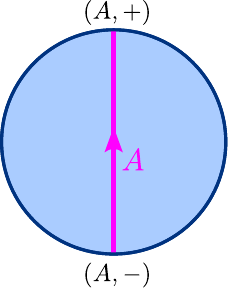}}}
\,\Bigg)
~,
\end{align}
together with (co)unitality which we do not draw. The first equality has to hold for all allowed choices of line orientations, and amounts to (co)associativity, symmetry, and the Frobenius property. The second diagram is $\Delta$-separability. 
We call $\mathcal{A} = (A, \mu, \Delta)$ an \textit{orbifold datum} or a \textit{gaugeable symmetry}.

The gauged theory $Q/\mathcal{A}$ is defined on an oriented surface $\Sigma$ by triangulating $\Sigma$ and choosing orientations for the edges of the triangulation such that the boundary of each triangle is not an oriented loop. 
Then one places an oriented 3-valent $\mathcal{A}$-graph on the dual of the triangulation and evaluates with $Q$, see Figure~\ref{fig:triang-flip-Nak}\,a).
The conditions in \eqref{eq:fusion+bubble} guarantee that the result is independent of the choice of triangulation and edge orientation (if~$\Sigma$ is closed; for non-empty boundary there is an additional step that amounts to splitting an idempotent \cite[Sect.\,4.2]{Carqueville:2023qrk}).

\begin{figure}[bt]
\centering

\raisebox{5em}{a)}	
~
\raisebox{-0.45\height}{\scalebox{.5}{\includegraphics{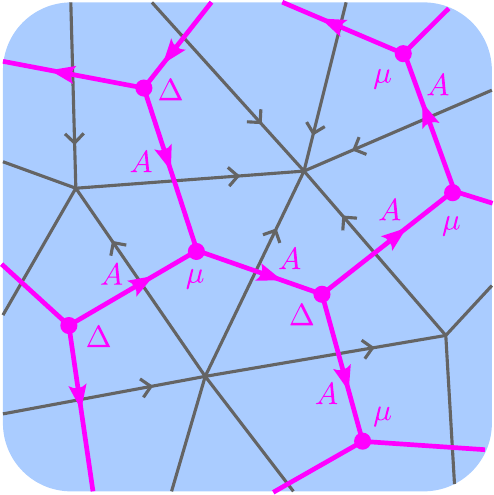}}}
\hspace{3em}
\raisebox{5em}{b)}	
\begin{minipage}{10em}
\begin{align*}
&Q\Bigg(\raisebox{-0.45\height}{\scalebox{.5}{\includegraphics{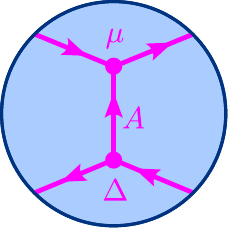}}}
\Bigg)
\\
&= 
Q\Bigg(\raisebox{-0.45\height}{\scalebox{.5}{\includegraphics{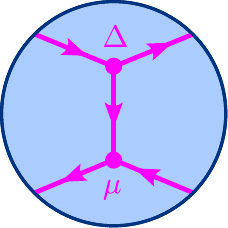}}}\Bigg)
\end{align*}
\end{minipage}
\hspace{1em}
\raisebox{5em}{c)}
$N_\mathcal{A} 
=\hspace{-0.5em} \raisebox{-0.5\height}{\scalebox{.5}{\includegraphics{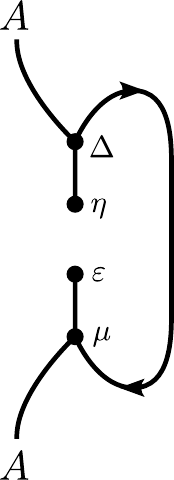}}}$

\caption{a) Part of a triangulation with oriented edges and the corresponding dual $A$-network. b) Invariance under orientation flip of the internal edge requires $\mathcal{A}$ to be symmetric. c) The Nakayama automorphism $N_\mathcal{A}$ of $\mathcal{A}$ as a string diagram.}
\label{fig:triang-flip-Nak}

\end{figure}

The requirement for $\mathcal{A}$ to be symmetric is needed to flip the orientation of an $A$-defect in the situation shown in Figure~\ref{fig:triang-flip-Nak}\,b). This condition is best understood by introducing the \textit{Nakayama automorphism} $N_\mathcal{A}\colon A \to A$, shown as a string diagram in Figure~\ref{fig:triang-flip-Nak}\,c), see~\cite{Fuchs:2008fa}.
Note that this needs all the structure we have at our disposal: 
$A$, all structure morphisms of $\mathcal{A}$, as well as the tensor product and the pivotal structure of $\mathcal{E}_Q$.
$N_\mathcal{A}$ is automatically a Frobenius algebra automorphism of $\mathcal{A}$, and by definition $\mathcal{A}$ is symmetric if $N_\mathcal{A} = \id_A$.

If $N_\mathcal{A} \neq \id_A$, the above gauging construction is not well-defined as it depends on the choices made. However, if $(N_\mathcal{A})^r = \id_A$ for some $r \geqslant 1$, the gauged theory is well-defined on surfaces equipped with an $r$-spin structure (corresponding to the $r$-fold cover of $\textrm{SO}(2)$), with $r=2$ being the usual spin structure. 
If $N_\mathcal{A}$ has infinite order, one needs the universal cover $\mathds{R} \to \textrm{SO}(2)$ which is equivalent to a framing, see e.g.~\cite{Runkel:2018feb}. 
The $r$-spin structure is encoded in the triangulation in Figure~\ref{fig:triang-flip-Nak}\,a) by placing indices $s_e \in \mathds{Z}_r$ on each edge $e$, subject to a condition at each vertex, see
 \cite{Runkel:2018feb,Runkel:2022fzi}. The indices $s_e$ control the powers of $N_\mathcal{A}$ inserted on the dual $A$-network. A related construction of 2d\,QFTs on 2-spin surfaces is to couple the QFT to the 
2d Arf spin\,TFT, see \cite{Karch:2019lnn,Hsieh:2020uwb}.
 
\medskip

Let $\mathcal{F}$ be a pivotal fusion category, and $Q$ a 2d\,QFT with symmetry $\mathcal{F}$ and realisation
$R\colon \mathcal{F} \to \mathcal{E}_Q$. 
If we know a $\Delta$-separable Frobenius algebra $\mathcal{A}$ in $\mathcal{F}$ with $N_\mathcal{A}$ of order $r>0$, then $R(\mathcal{A})$ is a gaugeable topological symmetry for~$Q$.
The theory $Q/R(\mathcal{A})$ is well-defined on $r$-spin surfaces, with $r=1$ being the oriented case, $r=2$ the usual spin case, and the infinite-order case requiring framed surfaces.

\begin{examples}
	\begin{enumerate}
\item In the four pivotal fusion categories \eqref{eq:Z2-pivotal-cats} consider the object $A=\mathds{C}_0 \oplus \mathds{C}_1$, the sum of the two simple objects. $A$ carries a unique-up-to-isomorphism $\Delta$-separable Frobenius algebra structure $\mathcal{A}$ in the cases 
$\mathrm{Vect}_{\mathds{Z}_2}$ and $\mathrm{SVect}_{\mathds{Z}_2}$, and so the $\mathds{Z}_2$-symmetry can be gauged. 
For $\mathrm{Vect}_{\mathds{Z}_2}$, $N_\mathcal{A} =\id_A$ and the gauged theory $Q/R(\mathcal{A})$ is again defined on oriented surfaces, while for $\mathrm{SVect}_{\mathds{Z}_2}$, $N_\mathcal{A}$ has order 2 and the gauged theory is defined on spin surfaces. 
In the remaining two cases, $A$ does not allow for a $\Delta$-separable Frobenius algebra structure -- the $\mathds{Z}_2$-symmetry is anomalous and cannot be gauged. 

Note that if one only considers fusion categories and ignores the pivotal structure, one can still conclude which $\mathds{Z}_2$ can be gauged and which cannot, but one cannot tell what tangential structure is needed to make the gauged theory well-defined.

\item For Morita equivalent algebras~$\mathcal{A}$ as above, 
the gauged theories are equivalent \cite{Frohlich:2006ch} (up to invertible 2d\,TFTs), and so it is enough to find representatives of different Morita classes. 
The latter correspond to module categories \cite{Ostrik:2001} (to avoid the invertible 2d\,TFTs one has to work with module traces \cite{Schm} which also capture the Frobenius structure of $\mathcal{A}$). 
In the case of $\mathcal{C}(\mathfrak{su}(2),k)$ as in Example~\ref{ex:su2-k-module-cats}, possible choices for algebras are 	
for $\mathcal{M}_{\mathrm{A}_{k+1}}$: $\mathcal{A}_{\mathrm{A}_{k+1}} = \underline 0$,
for $\mathcal{M}_{\mathrm{D}_{k/2+2}}$: 
$\mathcal{A}_{\mathrm{D}_{k/2+2}} = \underline 0 \oplus \underline k$,
and for $\mathcal{M}_{\mathrm{E}_6}$: $\mathcal{A}_{\mathrm{E}_6} = \underline 0 \oplus \underline 6$, $\mathcal{M}_{\mathrm{E}_7}$: $\mathcal{A}_{\mathrm{E}_7} = \underline 0 \oplus \underline 8 \oplus \underline{16}$, 
$\mathcal{M}_{\mathrm{E}_8}$: $\mathcal{A}_{\mathrm{E}_8} = \underline 0 \oplus \underline{10} \oplus \underline{18} \oplus \underline{28}$.
In all cases, the Nakayama automorphism is the identity. 
$\mathcal{A}_{\mathrm{E}_{6,7,8}}$ involve non-invertible topological defects as simple summands, so these are examples of gauging a non-invertible topological symmetry.
\end{enumerate}
\end{examples}

\subsection{2d\,CFT via surface defects in 3d\,TFT}\label{sec:2dCFT-3dTFT}

A rational full 2d\,CFT can be described as the boundary theory of a 3d\,TFT $\zz$ of Reshetikhin--Turaev type. If the holomorphic fields of the 2d\,CFT contain a rational vertex operator algebra $\mathcal{V}$, the modular fusion category defining the 3d\,TFT $\zz$ is $\mathcal{C} = \mathrm{Rep}(\mathcal{V})$, the category of $\mathcal{V}$-modules. 
The state spaces of the TFT are identified with the spaces of conformal blocks for $\mathcal{V}$. More details can be found in \cite{Fuchs:2023ngi}.

\begin{figure}[bt]
\centering

\raisebox{5em}{a)}	
\hspace{-1em}
\raisebox{-0.45\height}{\scalebox{.4}{\includegraphics{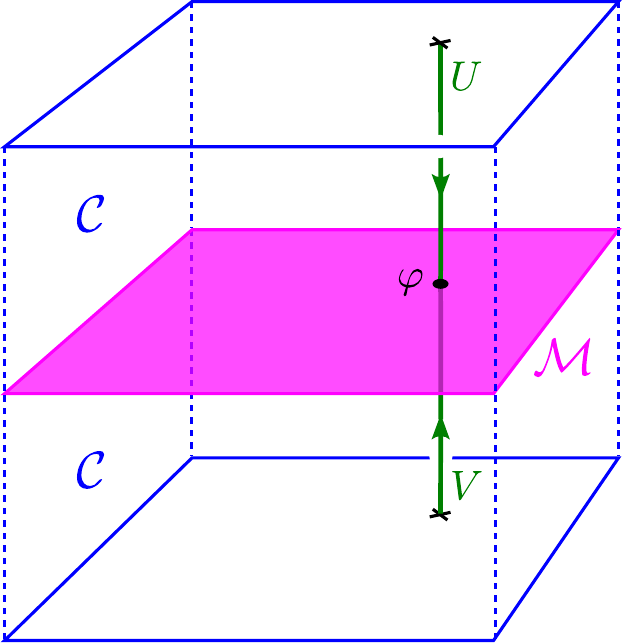}}}
\hspace{1.2em}
\raisebox{5em}{b)}	
\hspace{-1em}
\raisebox{-0.45\height}{\scalebox{.4}{\includegraphics{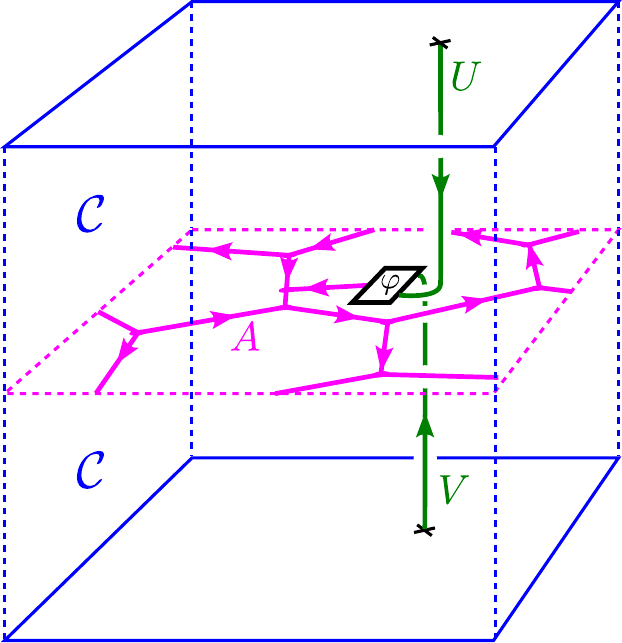}}}
\hspace{1.2em}
\raisebox{5em}{c)}	
\hspace{-1em}
\raisebox{-0.1\height}{\scalebox{.4}{\includegraphics{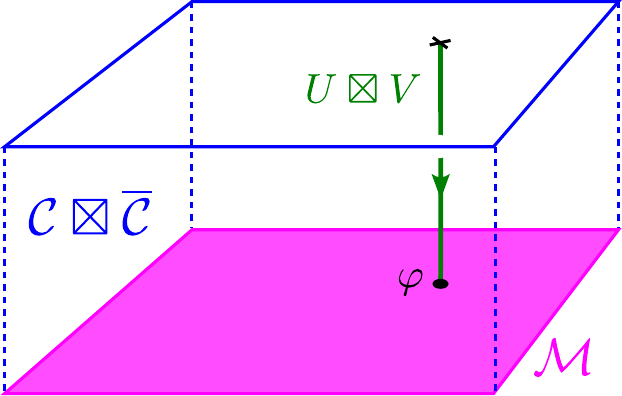}}}

	\caption{a) A surface defect $\mathcal{M}$ in the Reshetikhin--Turaev TFT for the modular fusion category $\mathcal{C}$, and line defects $U,V$ starting at the holomorphic and antiholomorphic boundary and meeting at the topological junction $\varphi$. b) Description of the surface defect via the gaugeable topological symmetry $\mathcal{A} = (A,\mu,\Delta)$.
 c) The folded picture of a) gives a symmetry TFT description of 2d\,CFT correlators.}
	\label{fig:2dCFT-3dTFT}
	
\end{figure}

The TFT $\zz$ is defined on bordisms with line defects labelled by objects of $\mathcal{C}$, and surface defects labelled by $\Delta$-separable symmetric Frobenius algebras, or, in a Morita invariant way by $\mathcal{C}$-module categories with module trace, see~\cite{Kapustin:2010if,Fuchs:2012dt,Schm,Carqueville:2017ono,Koppen:2021kry,Carqueville:2023aak}. 
More precisely, these label surface defects which allow for an interface with the trivial defect with invertible bubble, and which therefore are obtained by gauging line defects analogously to the procedure in Section~\ref{sec:gauging}, but now carried out on the surface defect \cite{Kapustin:2010if,Fuchs:2012dt}, see Figure~\ref{fig:2dCFT-3dTFT}\,b) and Section~\ref{subsec:defect3categories} below. 
2d\,CFTs are classified by surface defects in the following sense \cite{Fjelstad:2006aw,Kong:2009inh}:

\begin{theorem}\label{thm:class-2dCFT}
2d\,CFTs with unique vacuum, non-degenerate two-point function and whose bulk fields contain a holomorphic and antiholomorphic copy of $\mathcal{V}$ are in one-to-one correspondence to indecomposable semisimple $\mathcal{C}$-module categories with module trace.
\end{theorem}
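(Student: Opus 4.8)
The text has already identified the gaugeable surface defects of $\zz$ with indecomposable semisimple $\mathcal{C}$-module categories with module trace, so the theorem reduces to a bijection between 2d\,CFTs of the stated type and such surface defects, realised through the folded ``sandwich'' picture of Figure~\ref{fig:2dCFT-3dTFT}\,c). I would make this concrete by passing through the algebraic presentation of a surface defect as (the Morita class of) a special symmetric Frobenius algebra $A$ in $\mathcal{C}$. The proof then splits into three claims: (i) every CFT determines such an $A$; (ii) every such $A$ builds a consistent CFT; (iii) two algebras give equivalent CFTs exactly when they are Morita equivalent and carry the same module trace, i.e.\ define the same module category with trace.

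\textbf{From a CFT to a module category.} Given a CFT with unique vacuum and non-degenerate two-point function, the bulk state space is a representation of the holomorphic and antiholomorphic copies of $\mathcal{V}$, hence an object $F$ of $\mathcal{C} \boxtimes \overline{\mathcal{C}}$. The bulk operator product expansion, the vacuum, and the non-degenerate sphere two-point function endow $F$ with the structure of a commutative symmetric Frobenius algebra, which is haploid precisely because the vacuum is unique. The uniqueness statement for such ``modular-invariant'' algebras \cite{Kong:2009inh} identifies haploid commutative symmetric Frobenius algebras in $\mathcal{C} \boxtimes \overline{\mathcal{C}}$ with full centres $Z(A)$ of special symmetric Frobenius algebras $A$ in $\mathcal{C}$, the latter determined up to Morita equivalence. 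Setting $\mathcal{M} = \mathrm{Mod}_{\mathcal{C}}(A)$ yields a semisimple $\mathcal{C}$-module category, indecomposable by haploidity, and the Frobenius form of $A$ induces its module trace.

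\textbf{From a module category to a CFT.} Conversely, given an indecomposable semisimple module category with module trace, I would choose a representing special symmetric Frobenius algebra $A$ in $\mathcal{C}$ and construct the correlators by the three-dimensional TFT recipe: triangulate the worldsheet $\Sigma$, decorate the dual graph on the connecting three-manifold with the structure morphisms of $A$ running along the surface defect $\mathcal{M}$, and evaluate with $\zz$, exactly as the gauging of Section~\ref{sec:gauging} carried out on a surface defect. The $\Delta$-separability, symmetry, and Frobenius relations make the result independent of the triangulation and of the edge orientations, so the sewing and modular-invariance constraints hold by construction. Finally I would check that a second representative of the same Morita class gives correlators differing only by an invertible Euler-type 2d\,TFT, and that specifying the module trace (rather than the bare module category) pins down this normalisation, so that the assignment is genuinely one-to-one.

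\textbf{Main obstacle.} The crux is the equivalence between the full list of CFT consistency conditions and the algebraic axioms in both directions. The ``if'' direction is the explicit TFT verification that the sewing and modular-invariance constraints follow from $A$ being special symmetric Frobenius; the harder completeness and uniqueness direction -- that \emph{every} CFT satisfying the hypotheses arises from such an $A$, and that $A$ is determined up to Morita equivalence -- is the content of \cite{Fjelstad:2006aw} and its module-category reformulation in \cite{Kong:2009inh}. I expect the subtle point to be the bookkeeping of the pivotal and Frobenius structure: it is not enough to match module categories, one must match module traces, since the residual freedom in the overall normalisation of correlators is precisely a tensoring by an invertible Euler TFT and must be fixed for the correspondence to be exact.
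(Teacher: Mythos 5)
Your proposal is correct and follows essentially the same route as the paper, which (being a survey) establishes Theorem~\ref{thm:class-2dCFT} by deferring to exactly the results you invoke: the defect-network/TFT construction of correlators from a representing $\Delta$-separable symmetric Frobenius algebra in $\mathcal{C}$ for existence, the uniqueness and full-centre theorems of \cite{Fjelstad:2006aw,Kong:2009inh} for completeness, and the Morita/module-trace bookkeeping of \cite{Schm} to make the assignment bijective. Your closing point -- that matching bare module categories is insufficient and the module trace is what fixes the residual invertible (Euler-type) 2d\,TFT normalisation -- is precisely the paper's own remark in Section~\ref{sec:gauging}.
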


As an example, in this approach the multiplicity of bulk fields of the CFT which transform in the irreducible representations $U,V$ of the holomorphic and antiholomorphic copy of $\mathcal{V}$, respectively, is given by the space of topological junctions between the line defects $U$ and $V$ and the surface defect $\mathcal{M}$, see Figure~\ref{fig:2dCFT-3dTFT}\,a). 
In the description by gauging $\mathcal{A}$ (a $\Delta$-separable symmetric Frobenius algebra in~$\mathcal C$), these junctions are the image of the idempotent $P$ on $\Hom_{\mathcal{C}}(U \otimes V,A)$ given by
\begin{align}
P(\varphi) ~=~
\raisebox{-0.45\height}{\scalebox{.45}{\includegraphics{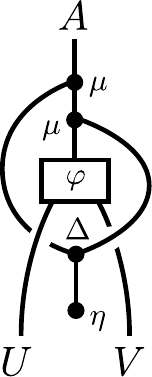}}}
\quad , \quad
\varphi \in \Hom_{\mathcal{C}}(U \otimes V,A)
\ .
\end{align}
As another example, line defects embedded in the surface defect $\mathcal{M}$ describe topological defects of the 2d\,CFT which are transparent to fields in the holomorphic and antiholomorphic copy of $\mathcal{V}$. In the gauged description, these are given by $\mathcal{A}$-$\mathcal{A}$-bimodules in $\mathcal{C}$, see \cite{Fuchs:2002cm,Fuchs:2023ngi}.

\begin{example}\label{ex:surface-defect-interfaces}
\begin{enumerate}
    \item 
If $\mathcal{M}$ is the trivial surface defect $\mathcal{I}$, the only allowed bulk fields are charge conjugate pairs $(U,U^*)$ with multiplicity one. Topological line defects are in this case just the line defects of the 3d\,TFT, i.e.\ correspond to objects of $\mathcal{C}$. 
\item 
Let $\mathcal{C}$ be the Ising modular category, which has simple objects $\one$, $\sigma$, $\varepsilon$, with $\sigma \otimes \sigma \cong \one \oplus \varepsilon$, $\varepsilon \otimes \varepsilon \cong \one$ as in Example~\ref{ex:DQ-examples}\,(2).
The object $A = \one \oplus \varepsilon$ can be equipped with the structure $\mathcal{A}$ of a $\Delta$-separable symmetric Frobenius algebra 
and defines a surface defect $\mathcal{M}$. Thinking of $A$ as a left $\mathcal{A}$-module ${}_\mathcal{A}A$ turns it into an interface between $\mathcal{M}$ and the trivial surface defect $\mathcal{I}$. 
One can also turn $\sigma$ into a left $\mathcal{A}$-module ${}_\mathcal{A}\sigma$, and so obtains another interface between $\mathcal{M}$ and $\mathcal{I}$. One checks that ${}_\mathcal{A}\sigma$ is invertible (i.e.\ $({}_\mathcal{A}\sigma)^* \otimes_\mathcal{A} {}_\mathcal{A}\sigma \cong \one$ and ${}_\mathcal{A}\sigma \otimes ({}_\mathcal{A}\sigma)^* \cong {}_\mathcal{A} A_\mathcal{A}$), while ${}_\mathcal{A}A$ is not (see e.g.~\cite[Sect.\,2.1]{Frohlich:2006ch} for more on the relative tensor product 
$\otimes_\mathcal{A}$).
Note that $({}_\mathcal{A}A)^* \otimes_\mathcal{A} {}_\mathcal{A}\sigma \cong \sigma$.

The module ${}_\mathcal{A}\sigma$ serves as a Morita context between $\one$ and $\mathcal{A}$. 
In terms of defects, the line defect ${}_\mathcal{A}\sigma$ provides an equivalence between the surface defects $\mathcal{I}$ and $\mathcal{M}$ (see also Section~\ref{subsec:defect2categories}). In fact, the Ising modular category has only one indecomposable semisimple module category, namely itself, and hence only one Morita class of simple algebras. Correspondingly, $\mathcal{I}$ is up-to-equivalence the only surface defect.

The lines labelled $\one,\varepsilon,A$ are the counterparts of the lattice defects $I,S,A$ from \eqref{eq:Ising-defect-summary}, while ${}_\mathcal{A}A$ and  ${}_\mathcal{A}\sigma$ correspond to the interfaces $G$ and $D$.
\end{enumerate}
\end{example}

\medskip

Rather than using the TFT for $\mathcal{C}$ with two boundaries with conformal structure and a topological defect separating them as in Figure~\ref{fig:2dCFT-3dTFT}\,a), one can use a folded picture with the TFT for $\mathcal{C} \boxtimes \overline{\mathcal{C}}$ 
(where $\overline{\mathcal{C}}$ has reversed braiding and ribbon twist) as in Figure~\ref{fig:2dCFT-3dTFT}\,c). 
Now only one boundary is non-topological and carries both the holomorphic and antiholomorphic conformal blocks, while the other boundary is topological and its boundary conditions classify the different full 2d\,CFTs as in Theorem~\ref{thm:class-2dCFT}.
This is an example of the symmetry TFT picture discussed in Section~\ref{sec:symTFT}.

\section{Topological quantum field theory}
\label{sec:TFT}

This section is about defects in topological quantum field theories as well as the higher categories they give rise to. 
Section~\ref{subsec:DefectTFTs} discusses the functorial approach to (extended and non-extended) defect TFTs, while Sections~\ref{subsec:defect2categories} and~\ref{subsec:defect3categories} cover weak invertibility, condensation, and the gauging of topological symmetries in the 2- and 3-dimensional case, respectively. 
For background on 2- and 3-categories we refer to \cite{JohnsonYauBook, BMS}.

\subsection{TFTs with defects}
\label{subsec:DefectTFTs}

A \textsl{topological quantum field theory with defects} (or \textsl{defect TFT}) is a (possibly higher) symmetric monoidal functor~$\zz$ whose domain is a category of stratified and labelled bordisms. 
The stratification provides a decomposition of bordisms into finitely many connected submanifolds (or \textsl{strata}) of arbitrary codimension, and a label is assigned to every stratum which we think of as ``the physical degrees of freedom'' of the defect supported on that stratum, or as local interactions between 
``fields'' associated to adjacent higher-dimensional strata. 
If a stratum~$\sigma$ ends on the boundary of a bordism, then it does so transversally and~$\partial \sigma$ carries the same label as~$\sigma$.  
By definition, evaluation with~$\zz$ only depends on the isotopy class of the \textsl{defects}, i.e.\ the labelled strata. 
This is illustrated by the following 2-dimensional example: 
\begin{equation}
\zz \left(
	\begin{tikzpicture}[thick,scale=2.83,color=black, >=stealth, baseline=-1.14cm]
	\coordinate (p1) at (-0.55,0);
	\coordinate (p2) at (-0.2,0);
	\coordinate (p3) at (0.2,0);
	\coordinate (p4) at (0.55,0);
	\coordinate (p5) at (0.175,-0.8);
	\coordinate (p6) at (-0.175,-0.8);
	\coordinate (s) at (0.175,0.07);
	\coordinate (d) at (0,-0.5);
	%
	%
	\fill [orange!20!white, opacity=0.8] 
	(p1) .. controls +(0,0.1) and +(0,0.1) ..  (p2)
	-- (p2) .. controls +(0,-0.35) and +(0,-0.35) ..  (p3)
	-- (p3) .. controls +(0,0.1) and +(0,0.1) ..  (p4)
	-- (p4) .. controls +(0,-0.5) and +(0,0.5) ..  (p5)
	-- (p5) .. controls +(0,0.1) and +(0,0.1) ..  (p6)
	-- (p6) .. controls +(0,0.5) and +(0,-0.5) ..  (p1)
	;
	%
	\fill [orange!30!white, opacity=0.8] 
	(p1) .. controls +(0,-0.1) and +(0,-0.1) ..  (p2)
	-- (p2) .. controls +(0,-0.35) and +(0,-0.35) ..  (p3)
	-- (p3) .. controls +(0,-0.1) and +(0,-0.1) ..  (p4)
	-- (p4) .. controls +(0,-0.5) and +(0,0.5) ..  (p5)
	-- (p5) .. controls +(0,-0.1) and +(0,-0.1) ..  (p6)
	-- (p6) .. controls +(0,0.5) and +(0,-0.5) ..  (p1)
	;
	%
	\draw[brown!50!red, very thick, postaction={decorate}, decoration={markings,mark=at position .5 with {\arrow[draw=brown!50!red]{>}}}] (-0.52,-0.2).. controls +(0.01,-0.03) and +(-0.01,-0.06) .. (-0.15,-0.18);
	\draw[brown!50!red, very thick, opacity=0.3] (-0.52,-0.2).. controls +(0.01,0.07) and +(-0.01,0.04) .. (-0.155,-0.18);
	\fill[brown!50!red] (-0.32,-0.27) circle (0pt) node {\scalebox{.4}{$X_6$}};
	\draw[blue!50!red, very thick, postaction={decorate}, decoration={markings,mark=at position .6 with {\arrow[draw=blue!50!red]{>}}}] (0,-0.45) .. controls +(0,0.2) and +(0,-0.3) .. (0.35,-0.073); 
	\fill[blue!50!red] (0.27,-0.32) circle (0pt) node {\scalebox{.4}{$X_2$}};
	\draw[brown!50!black, very thick, postaction={decorate}, decoration={markings,mark=at position .5 with {\arrow[draw=brown!50!black]{>}}}] (0.13,-0.75) .. controls +(0,0.2) and +(0.1,0) .. (0,-0.45);
	\fill[brown!50!black] (0.16,-0.55) circle (0pt) node {\scalebox{.4}{$X_4$}};
	\draw[brown!50!blue, very thick, postaction={decorate}, decoration={markings,mark=at position .5 with {\arrow[draw=brown!50!blue]{>}}}] (-0.13,-0.75) .. controls +(0,0.2) and +(-0.1,0) .. (0,-0.45);
	\fill[brown!50!blue] (-0.16,-0.55) circle (0pt) node {\scalebox{.4}{$X_5$}};
	\fill[black] (0,-0.45) circle (0.55pt) node[above] {};
	\fill[black] (0,-0.44) circle (0pt) node[below] {{\tiny$\varphi$}};
	%
	%
	\draw[thick] (p2) .. controls +(0,-0.35) and +(0,-0.35) ..  (p3); 
	\draw[thick] (p4) .. controls +(0,-0.5) and +(0,0.5) ..  (p5); 
	\draw[thick] (p6) .. controls +(0,0.5) and +(0,-0.5) ..  (p1); 
	\draw[very thick, red!80!black] (p1) .. controls +(0,-0.1) and +(0,-0.1) ..  (p2); 
	\draw[very thick, red!80!black] (p3) .. controls +(0,-0.1) and +(0,-0.1) ..  (p4) ; 
	\draw[very thick, red!80!black] (p5) .. controls +(0,-0.1) and +(0,-0.1) ..  (p6); 
	\draw[very thick, red!80!black] (p1) .. controls +(0,0.1) and +(0,0.1) ..  (p2); 
	\draw[very thick, red!80!black] (p3) .. controls +(0,0.1) and +(0,0.1) ..  (p4); 
	\draw[very thick, red!80!black] (p5) .. controls +(0,0.1) and +(0,0.1) ..  (p6); 
	\fill[red!80!black] (-0.47,-0.12) circle (0pt) node {\scalebox{.4}{$u_1$}};
	\fill[red!80!black] (-0.2,-0.4) circle (0pt) node {\scalebox{.4}{$u_2$}};
	\fill[red!80!black] (0,-0.6) circle (0pt) node {\scalebox{.4}{$u_3$}};
	\fill[blue!50!red] (0.35,-0.076) circle (0.55pt) node {};
	\fill[brown!50!blue] (-0.13,-0.75) circle (0.55pt) node {};
	\fill[brown!50!black] (0.13,-0.75) circle (0.55pt) node {};
	\end{tikzpicture}
\right)
= 
\zz \left(
\begin{tikzpicture}[thick,scale=2.83,color=black, >=stealth, baseline=-1.14cm]
\coordinate (p1) at (-0.55,0);
\coordinate (p2) at (-0.2,0);
\coordinate (p3) at (0.2,0);
\coordinate (p4) at (0.55,0);
\coordinate (p5) at (0.175,-0.8);
\coordinate (p6) at (-0.175,-0.8);
\coordinate (s) at (0.175,0.07);
\coordinate (d) at (0,-0.5);
%
%
\fill [orange!20!white, opacity=0.8] 
(p1) .. controls +(0,0.1) and +(0,0.1) ..  (p2)
-- (p2) .. controls +(0,-0.35) and +(0,-0.35) ..  (p3)
-- (p3) .. controls +(0,0.1) and +(0,0.1) ..  (p4)
-- (p4) .. controls +(0,-0.5) and +(0,0.5) ..  (p5)
-- (p5) .. controls +(0,0.1) and +(0,0.1) ..  (p6)
-- (p6) .. controls +(0,0.5) and +(0,-0.5) ..  (p1)
;
%
\fill [orange!30!white, opacity=0.8] 
(p1) .. controls +(0,-0.1) and +(0,-0.1) ..  (p2)
-- (p2) .. controls +(0,-0.35) and +(0,-0.35) ..  (p3)
-- (p3) .. controls +(0,-0.1) and +(0,-0.1) ..  (p4)
-- (p4) .. controls +(0,-0.5) and +(0,0.5) ..  (p5)
-- (p5) .. controls +(0,-0.1) and +(0,-0.1) ..  (p6)
-- (p6) .. controls +(0,0.5) and +(0,-0.5) ..  (p1)
;
%
\draw[brown!50!red, very thick, postaction={decorate}, decoration={markings,mark=at position .5 with {\arrow[draw=brown!50!red]{>}}}] (-0.52,-0.2).. controls +(0.01,-0.1) and +(-0.01,-0.3) .. (-0.15,-0.18);
\draw[brown!50!red, very thick, opacity=0.3] (-0.52,-0.2).. controls +(0.01,0.07) and +(-0.01,0.04) .. (-0.155,-0.18);
\fill[brown!50!red] (-0.32,-0.29) circle (0pt) node {\scalebox{.4}{$X_6$}};
\draw[blue!50!red, very thick, postaction={decorate}, decoration={markings,mark=at position .8 with {\arrow[draw=blue!50!red]{>}}}] (0,-0.35) .. controls +(0,0.2) and +(0,-0.5) .. (0.35,-0.073); 
\fill[blue!50!red] (0.28,-0.2) circle (0pt) node {\scalebox{.4}{$X_2$}};
\draw[brown!50!black, very thick, postaction={decorate}, decoration={markings,mark=at position .5 with {\arrow[draw=brown!50!black]{>}}}] (0.13,-0.75) .. controls +(0,0.2) and +(0.1,0) .. (0,-0.35);
\fill[brown!50!black] (0.16,-0.53) circle (0pt) node {\scalebox{.4}{$X_4$}};
\draw[brown!50!blue, very thick, postaction={decorate}, decoration={markings,mark=at position .5 with {\arrow[draw=brown!50!blue]{>}}}] (-0.13,-0.75) .. controls +(0.2,0.2) and +(-0.4,0) .. (0,-0.35);
\fill[brown!50!blue] (-0.17,-0.53) circle (0pt) node {\scalebox{.4}{$X_5$}};
\fill[black] (0,-0.35) circle (0.55pt) node[above] {};
\fill[black] (0,-0.34) circle (0pt) node[below] {{\tiny$\varphi$}};
%
%
\draw[thick] (p2) .. controls +(0,-0.35) and +(0,-0.35) ..  (p3); 
\draw[thick] (p4) .. controls +(0,-0.5) and +(0,0.5) ..  (p5); 
\draw[thick] (p6) .. controls +(0,0.5) and +(0,-0.5) ..  (p1); 
\draw[very thick, red!80!black] (p1) .. controls +(0,-0.1) and +(0,-0.1) ..  (p2); 
\draw[very thick, red!80!black] (p3) .. controls +(0,-0.1) and +(0,-0.1) ..  (p4) ; 
\draw[very thick, red!80!black] (p5) .. controls +(0,-0.1) and +(0,-0.1) ..  (p6); 
\draw[very thick, red!80!black] (p1) .. controls +(0,0.1) and +(0,0.1) ..  (p2); 
\draw[very thick, red!80!black] (p3) .. controls +(0,0.1) and +(0,0.1) ..  (p4); 
\draw[very thick, red!80!black] (p5) .. controls +(0,0.1) and +(0,0.1) ..  (p6); 
\fill[red!80!black] (-0.47,-0.12) circle (0pt) node {\scalebox{.4}{$u_1$}};
\fill[red!80!black] (-0.25,-0.42) circle (0pt) node {\scalebox{.4}{$u_2$}};
\fill[red!80!black] (0,-0.6) circle (0pt) node {\scalebox{.4}{$u_3$}};
\fill[blue!50!red] (0.35,-0.076) circle (0.55pt) node {};
\fill[brown!50!blue] (-0.13,-0.75) circle (0.55pt) node {};
\fill[brown!50!black] (0.13,-0.75) circle (0.55pt) node {};
\draw[brown!50!blue, very thick, postaction={decorate}, decoration={markings,mark=at position .5 with {\arrow[draw=brown!50!blue]{>}}}] (-0.13,-0.75) .. controls +(0.2,0.2) and +(-0.4,0) .. (0,-0.35);
\end{tikzpicture}
\right)
. 
\end{equation}

An $n$-dimensional defect TFT~$\zz$ may or may not be extended. 
In the latter case, the domain of~$\zz$ is a symmetric monoidal 1-category $\Borddef$ whose objects and morphisms are stratified closed $(n-1)$-manifolds and $n$-bordism classes whose strata are labelled by elements of a given set of defect data~$\mathds{D}$, respectively. 
The target of~$\zz$ is usually taken to be the category of (super) vector spaces. 
The detailed definition of such a defect TFT $\zz\colon \Borddef \to \Vect$ where all strata come with orientations is given in \cite[Sect.\,2]{Carqueville:2017aoe}. 
We will return to such non-extended TFTs shortly. 

On the other hand, fully extended defect TFTs are the subject of \cite[Sect.\,4.3]{l0905.0465}, where a classification of such symmetric monoidal $n$-functors $\Bord_n^{\vec{X}} \to \mathcal D$ is proposed in terms of the ``cobordism hypothesis with singularities''. 
Here~$\vec{X}$ is a very general choice of tangential structure (like framings, orientations, spin, or principal $G$-bundles) for defects, and the target $n$-category~$\mathcal D$ is another choice (which is very different, say, for state sum models and twisted sigma models). 

Some aspects of the cobordism hypothesis are still under development, and it has only been applied systematically in low dimension and for specific tangential structures. 
More details for the special case of defects that are boundary conditions are provided in \cite[App.\,A]{FreedTeleman2020}. 
2-dimensional oriented extended defect TFTs~$\zz^{\textrm{ext}}$ are described in \cite[Sect.\,3]{BCFR}, and in this case the cobordism hypothesis amounts
to an interpretation of defect bordisms in terms of the 3-dimensional graphical calculus of its target 2-category~$\mathcal D$ as developed in \cite{BMS}. 
For example 
\begin{align}
& \zz^{\textrm{ext}}
\left(
\begin{tikzpicture}[thick,scale=2,color=black, baseline=-1.9cm, yscale=-1]
\coordinate (p1) at (0,0);
\coordinate (p2) at (0.5,0.2);
\coordinate (p3) at (-0.5,0.4);
\coordinate (p4) at (2.5,0);
\coordinate (p4new) at (3,0);
\coordinate (p5) at (1.5,0.2);
\coordinate (p6) at (2.0,0.4);
\coordinate (p6new) at (2.0,0.4);
\coordinate (s) at ($(0,1.5)$);
\coordinate (d) at ($(3.8,0)$);
\coordinate (h) at ($(0,2.5)$);
\coordinate (t) at ($(0.05,0)$);
\coordinate (u) at ($(0,-1.1)$);
\coordinate (X) at ($(p6)+(s)+(-1,0)$);
\coordinate (Xd) at ($(p6)+(s)+(-0.25,0)$);
%
\fill[red!80!black] ($(p1)+(t)$) circle (0pt) node[left] {{\tiny $(v,+)$}};
\fill[red!80!black] ($(p3)+(t)$) circle (0pt) node[left] {{\tiny $(v,-)$}};
\fill[red!80!black] ($(p4new)-0.5*(t)$) circle (0pt) node[right] {{\tiny $(u,+)$}};
\fill[red!80!black] ($(p4new)+(s)-0.5*(t)$) circle (0pt) node[right] {{\tiny $(u,+)$}};
\fill[red!80!black] ($(p1)+(s)+(t)$) circle (0pt) node[left] {{\tiny $(v,+)$}};
\fill[red!80!black] ($(p3)+(s)+(t)$) circle (0pt) node[left] {{\tiny $(v,-)$}};
\fill[red!80!black] ($(p6new)+(s)-0.5*(t)$) circle (0pt) node[right] {{\tiny $(v,-)$}};
%
\fill [orange!20!white, opacity=0.8] 
(p1) .. controls +(0.3,0) and +(0,-0.1) ..  (p2)
-- (p2) .. controls +(0,0.9) and +(0,0.9) ..  (p5)
-- (p5) .. controls +(0,-0.1) and +(-0.5,0) ..  (p4) -- (p4)
-- (p4new) -- ($(p4new)+(s)$) -- ($(p1)+(s)$);
\fill[orange!45!white, opacity=0.7] ($(p4)+(s)+(-0.2,0)$) .. controls +(0,-1) and +(0,1) .. ($(p4)+(0.3,0)$) -- (p4new) -- ($(p4new)+(s)$);
%
\draw[very thick, red!80!black] ($(p4new)+(s)$) -- ($(p1)+(s)$);
\draw[thin] (p1) -- ($(p1)+(s)$);
%
\fill [orange!30!white, opacity=0.8] 
(p3) .. controls +(0.5,0) and +(0,0.1) ..  (p2)
-- (p2) .. controls +(0,0.9) and +(0,0.9) ..  (p5)
-- (p5) .. controls +(0,0.1) and +(-0.3,0) .. (p6)
-- (p6) -- ($(p6)+(s)$) -- ($(p3)+(s)$);
%
\draw[thin] (p6new) -- ($(p6new)+(s)$);
\draw (p2) .. controls +(0,0.9) and +(0,0.9) ..  (p5);
\draw[thin] (p3) -- ($(p3)+(s)$);
\draw[thin] (p4new) -- ($(p4new)+(s)$);
%
\draw[very thick, blue!50!black, postaction={decorate}, decoration={markings,mark=at position .59 with {\arrow[draw=blue!50!black]{>}}}] ($(p4)+(s)+(-0.2,0)$) .. controls +(0,-1) and +(0,1) .. ($(p4)+(0.3,0)$);
\fill[blue!50!black] ($(p6)+(0.75,0.3)$) circle (0pt) node {{\small$Y$}};
\fill[orange!40!white, opacity=0.7] (0.2,1) circle (0.3);
\draw[very thick, blue!50!black] (0.2,1) circle (0.3);
\fill[blue!50!black] (-0.23,0.95) circle (0pt) node {{\small$Z$}};
\draw[<-, very thick, blue!50!black] (-0.1,1.001) -- (-0.1,0.999) node[above] {}; 
\fill[green!50!black] (0.5,1) circle (1.25pt) node[right] {{\small$\chi$}};
\fill[orange!10!white, opacity=0.7] (X) .. controls +(0,-0.75) and +(0,-0.75) .. (Xd);
\draw[redirected, very thick, blue!50!black] (X) .. controls +(0,-0.75) and +(0,-0.75) .. (Xd);
\fill[blue!50!black] ($(p6)+(s)+(-0.6,-0.72)$) circle (0pt) node {{\small$X$}};
\fill[green!50!black] (0.25,1.7) circle (1.25pt) node[right] {{\small$\xi$}};
%
\draw[very thick, red!80!black] (p1) .. controls +(0.3,0) and +(0,-0.1) ..  (p2); 
\draw[very thick, red!80!black] (p2) .. controls +(0,0.1) and +(0.5,0) ..  (p3); 
\draw[very thick, red!80!black] (p6) .. controls +(-0.3,0) and +(0,0.1) ..  (p5); 
\draw[very thick, red!80!black] (p1) .. controls +(0.3,0) and +(0,-0.1) ..  (p2); 
\draw[very thick, red!80!black] (p5) .. controls +(0,-0.1) and +(-0.5,0) ..  (p4); 
\draw[very thick, red!80!black] ($(p6new)+(s)$) -- ($(p3)+(s)$);
\draw[very thick, red!80!black] (p6) -- (p6new);
\draw[very thick, red!80!black] (p4) -- (p4new);
\fill[blue!50!black] (X) circle (1.25pt) node[below] {{\scriptsize $(X,-)$}};
\fill[blue!50!black] (Xd) circle (1.25pt) node[below] {{\scriptsize $(X,+)$}};
\fill[blue!50!black] ($(p4)+(s)+(-0.2,0)$) circle (1.25pt) node[below] {{\scriptsize$(Y,+)$}};
\fill[blue!50!black] ($(p4)+(0.3,0)$) circle (1.25pt) node[above] {{\scriptsize$(Y,+)$}};
%
%
\fill[red!80!black] (0.2,1) circle (0pt) node {{\scriptsize $w$}};
\fill[red!80!black] ($(p6)+(s)+(-0.6,-0.33)$) circle (0pt) node {{\scriptsize $v'$}};
\fill[red!80!black] ($(p6)+(0.6,0.7)$) circle (0pt) node[right] {{\scriptsize $u$}};
\fill[red!80!black] ($(p6)+(0.1,0.3)$) circle (0pt) node[right] {{\scriptsize $v$}};
\fill[red!80!black] ($(p6)+(-0.5,0.5)$) circle (0pt) node[right] {{\scriptsize $v$}};
\fill[red!80!black] ($(p6)+(0.1,-0.33)$) circle (0pt) node[right] {{\tiny $v$}};
\fill[red!80!black] ($(p6)+(-1.9,-0.32)$) circle (0pt) node[right] {{\tiny $v$}};
\fill[red!80!black] ($(p6)+(0.05,1.05)$) circle (0pt) node[right] {{\tiny $v$}};
\fill[red!80!black] ($(p6)+(-0.2,1.45)$) circle (0pt) node[right] {{\tiny $v$}};
\fill[red!80!black] ($(p6)+(-1.5,1.45)$) circle (0pt) node[right] {{\tiny $v$}};
\fill[red!80!black] ($(p6)+(-0.75,1.43)$) circle (0pt) node[right] {{\tiny $v'$}};
\fill[red!80!black] ($(p6)+(0.6,1.05)$) circle (0pt) node[right] {{\tiny $u$}};
\fill[red!80!black] ($(p6)+(0.8,-0.34)$) circle (0pt) node[right] {{\tiny $u$}};
\fill[red!80!black] ($(p6new)-0.5*(t)$) circle (0pt) node[right] {{\tiny $(v,-)$}};
\end{tikzpicture}
\right)
\qquad 
	\begin{tikzpicture}[thick,scale=1.0,color=blue!50!black, baseline=0.0cm, 
	style={x={(-0.4cm,-0.3cm)},y={(0.8cm,0cm)},z={(0cm,0.9cm)}}]
	\draw[
	color=gray, 
	opacity=0.3, 
	semithick,
	dashed
	] 
	(1,0,0) -- (0,0,0) -- (0,1,0)
	(0,0,0) -- (0,0,1);
	\fill [blue!40,opacity=0.1] (0,0,0) -- (1,0,0) -- (1,1,0) -- (0,1,0) -- (0,1,1) -- (0,0,1) -- (1,0,1) -- (1,0,0);
	%
	%
	\draw[
	color=gray, 
	opacity=0.4, 
	semithick
	] 
	(0,1,1) -- (0,1,0) -- (1,1,0) -- (1,1,1) -- (0,1,1) -- (0,0,1) -- (1,0,1) -- (1,0,0) -- (1,1,0)
	(1,0,1) -- (1,1,1);
	%
	\draw[
	color=gray, 
	opacity=0.9, 
	thick, 
	->
	] 
	(1, 1.15, 0) -- node[pos=0.5, color=black, below] {\footnotesize $\btimes$} (0, 1.15, 0); 
	\draw[
	color=gray, 
	opacity=0.9, 
	thick, 
	->
	] 
	(1.15, 1, 0) -- node[pos=0.5, color=black, below, sloped] {\footnotesize $\otimes$} (1.15, 0, 0); 
	\draw[
	color=gray, 
	opacity=0.9, 
	thick, 
	->
	] 
	(1, -0.15, 0) -- node[pos=0.5, color=black, above, sloped] {\footnotesize $\circ$} (1, -0.15, 1); 
	\end{tikzpicture}
\nonumber
\\
& \quad 
= \Big[ 1_{1_{\zz^{\textrm{ext}}(v)}}  \btimes \Big(  \textrm{tr}_{\textrm{l}}\big( \zz^{\textrm{ext}}(\chi) \big) \circ {\zz^{\textrm{ext}}(\xi)} \Big) \Big] 
\otimes \widetilde{\textrm{coev}}_{\widetilde{\textrm{ev}}_{\zz^{\textrm{ext}}(v)}} \otimes \Big[ 1_{\zz^{\textrm{ext}}(Y)} \btimes \textrm{ev}_{\zz^{\textrm{ext}}(X)} \Big]
\label{eq:cobordism-is-graphical-calc}
\end{align}
where the cube indicates our conventions for the compositions $\btimes, \otimes, \circ$ in~$\mathcal D$, see Sections~\ref{subsec:defect2categories} and~\ref{subsec:defect3categories} for more details. 
It is expected that more generally the cobordism hypothesis with singularities is the statement that the evaluation of an extended TFT on a defect bordism amounts to interpreting the bordism in the appropriate graphical calculus. 

\medskip 

For fully extended defect TFTs, higher categories feature as their targets~$\mathcal D$ as above, while for non-extended TFTs $\zz\colon \Borddef \to \Vect$ one naturally expects to extract an $n$-category~$\mathcal D_\zz$ which encodes the local action of~$\zz$, i.e.\ how it evaluates on stratified and labelled $n$-balls. 
For $n=2$, this was made rigorous in \cite{Davydov:2011kb} as explained in Section~\ref{sec:2cat-form-def}, while the case $n=3$ is treated in \cite{CMS}. 
Either way, objects of~$\mathcal D_\zz$ (or objects in the image of extended TFTs) are interpreted as $n$-dimensional bulk TFTs, while $k$-morphisms~$\Phi$ supported on some submanifold~$M$ of codimension $k\geqslant 1$ correspond to defects. 
By definition, $\textrm{End}(\Phi)$ is a monoidal $(n-k-1)$-category whose objects label $(n-k-1)$-dimensional strata in~$M$, and $l$-morphisms of $\textrm{End}(\Phi)$ label $(n-k-l-1)$-strata in~$M$, with $l=n-k-1$ corresponding to point defects. 
A special situation is when $\Phi = 1_{{1_{\dots}}_{1_a}}$ is the identity $k$-morphism on the identity $(k-1)$-morphism, etc., on some object~$a$, in which case~$M$ is just embedded in an $a$-labelled $n$-stratum, with no other adjacent higher-dimensional strata. 
Consistent with this geometric setting, $\textrm{End}(1_{{1_{\dots}}_{1_a}})$ comes with a categorification of commutation; 
for example for $n=3$ and $k=2$ the 1-category $\textrm{End}(1_a)$ of ``free-floating'' line defects has a braided structure.

\subsection{Weak inverses and condensations in defect 2-categories}
\label{subsec:defect2categories}

In Example~\ref{ex:DQ-examples}\,(1) we already met the 2-category ssFrob of separable symmetric Frobenius $\mathds{C}$-algebras, bimodules, and bimodule maps. It carries a pivotal structure which describes 2-dimensional \textsl{oriented} state sum models and their defects. 
\textsl{Framed} state sum models on the other hand lead to the 2-category of all (not necessarily symmetric) separable Frobenius algebras \cite{spthesis, PstragowskiJournalVersion}. 

Since separable $\mathds{C}$-algebras are always semisimple, 2-dimensional state sum models are made only from finitely semisimple data. 
While this generalises to state sum models in higher dimensions, there are many 2d defect TFTs which have much richer non-semisimple structure. 
In particular, B-twisted sigma models \cite{cw1007.2679} and Landau--Ginzburg models \cite{Carqueville:2012st} give rise to pivotal 2-categories that are quite far from being finitely
semisimple. 
This is illustrated by the fact that generic line defect categories for Landau--Ginzburg models admit infinite families of indecomposable objects, see e.g.\ \cite{kst0511155}. 
We also note that in all of the above examples the pivotal 2-categories admit a natural symmetric monoidal structure in which every object is fully dualisable (as shown in \cite{spthesis, BanksOnRozanskyWitten, CMM}, respectively), hence they also describe fully extended defect TFTs. 

\medskip 

With examples like the above in mind, we shall next discuss various notions of equivalence between bulk TFTs or their defects. 
Recall first that two objects $X,Y$ in a 1-category are \textsl{isomorphic}, $X\cong Y$, if there exist morphisms $\varphi\colon X\to Y$ and $\psi\colon Y\to X$ such that $\psi\circ\varphi = 1_X$ and $\varphi\circ\psi=1_Y$, or graphically
\begin{equation}
\begin{tikzpicture}[very thick,scale=0.7,color=blue!50!black, baseline=0.8cm]
\coordinate (d1) at (-2,0);
\coordinate (d2) at (0,0);
\coordinate (u1) at (-2,3);
\coordinate (u2) at (0,3);
\coordinate (s) at ($(-0.5,-0.2)$);
\coordinate (b1) at ($(d1)+(s)$);
\coordinate (b2) at ($(d2)+(s)$);
\coordinate (t1) at ($(u1)+(s)$);
\coordinate (t2) at ($(u2)+(s)$);
%
\draw[very thick] ($(d1)+(1,0)+(s)$) -- ($(d1)+(1,3)+(s)$);
\fill ($(d1)+(1,2)+(s)$) circle (3.5pt) node[left] {{\scriptsize $\psi$}};
\fill ($(d1)+(1,1)+(s)$) circle (3.5pt) node[left] {{\scriptsize $\varphi$}};
\fill ($(d1)+(1.65,2.75)+(s)$) circle (0pt) node[left] {{\scriptsize $X$}};
\fill ($(d1)+(1.65,1.5)+(s)$) circle (0pt) node[left] {{\scriptsize $Y$}};
\fill ($(d1)+(1.65,0.25)+(s)$) circle (0pt) node[left] {{\scriptsize $X$}};
\end{tikzpicture}
=
\;
\begin{tikzpicture}[very thick,scale=0.7,color=blue!50!black, baseline=0.8cm]
\coordinate (d1) at (-2,0);
\coordinate (d2) at (0,0);
\coordinate (u1) at (-2,3);
\coordinate (u2) at (0,3);
\coordinate (s) at ($(-0.5,-0.2)$);
\coordinate (b1) at ($(d1)+(s)$);
\coordinate (b2) at ($(d2)+(s)$);
\coordinate (t1) at ($(u1)+(s)$);
\coordinate (t2) at ($(u2)+(s)$);
%
\draw[very thick] ($(d1)+(1,0)+(s)$) -- ($(d1)+(1,3)+(s)$);
\fill ($(d1)+(1.65,0.25)+(s)$) circle (0pt) node[left] {{\scriptsize $X$}};
\end{tikzpicture}
\, , 
\qquad 
\begin{tikzpicture}[very thick,scale=0.7,color=blue!50!black, baseline=0.8cm]
\coordinate (d1) at (-2,0);
\coordinate (d2) at (0,0);
\coordinate (u1) at (-2,3);
\coordinate (u2) at (0,3);
\coordinate (s) at ($(-0.5,-0.2)$);
\coordinate (b1) at ($(d1)+(s)$);
\coordinate (b2) at ($(d2)+(s)$);
\coordinate (t1) at ($(u1)+(s)$);
\coordinate (t2) at ($(u2)+(s)$);
%
\draw[very thick] ($(d1)+(1,0)+(s)$) -- ($(d1)+(1,3)+(s)$);
\fill ($(d1)+(1,2)+(s)$) circle (3.5pt) node[left] {{\scriptsize $\varphi$}};
\fill ($(d1)+(1,1)+(s)$) circle (3.5pt) node[left] {{\scriptsize $\psi$}};
\fill ($(d1)+(1.65,2.75)+(s)$) circle (0pt) node[left] {{\scriptsize $Y$}};
\fill ($(d1)+(1.65,1.5)+(s)$) circle (0pt) node[left] {{\scriptsize $X$}};
\fill ($(d1)+(1.65,0.25)+(s)$) circle (0pt) node[left] {{\scriptsize $Y$}};
\end{tikzpicture}
=
\;
\begin{tikzpicture}[very thick,scale=0.7,color=blue!50!black, baseline=0.8cm]
\coordinate (d1) at (-2,0);
\coordinate (d2) at (0,0);
\coordinate (u1) at (-2,3);
\coordinate (u2) at (0,3);
\coordinate (s) at ($(-0.5,-0.2)$);
\coordinate (b1) at ($(d1)+(s)$);
\coordinate (b2) at ($(d2)+(s)$);
\coordinate (t1) at ($(u1)+(s)$);
\coordinate (t2) at ($(u2)+(s)$);
%
\draw[very thick] ($(d1)+(1,0)+(s)$) -- ($(d1)+(1,3)+(s)$);
\fill ($(d1)+(1.65,0.25)+(s)$) circle (0pt) node[left] {{\scriptsize $Y$}};
\end{tikzpicture}
\, . 
\end{equation}
Note that if one imposes only the first condition,
the composition $e := \varphi \circ \psi \colon Y \to Y$ becomes an \textsl{idempotent}: $e \circ e=e$.

Let now~$\mathcal B$ be a 2-category. 
Two objects $u,v \in \mathcal B$ are \textsl{equivalent}, $u\cong v$, if there exist 1-morphisms $X\colon u \to v$ and $Y\colon v \to u$ as well as 2-morphisms $\varphi\colon Y\otimes X \to 1_u$, $\psi\colon 1_u \to Y\otimes X$, $\zeta \colon X\otimes Y \to 1_v$ and $\xi\colon 1_v\to X\otimes Y$ such that 
\begin{align}
\begin{tikzpicture}[very thick,scale=1.0,color=blue!50!black, baseline=0.9cm]
\coordinate (X) at (0.5,0);
\coordinate (Xd) at (-0.5,0);
\coordinate (X2) at (0.5,2);
\coordinate (Xd2) at (-0.5,2);
\coordinate (d1) at (-1,0);
\coordinate (d2) at (+1,0);
\coordinate (u1) at (-1,2);
\coordinate (u2) at (+1,2);
%
\fill [orange!40!white, opacity=0.7] (d1) -- (d2) -- (u2) -- (u1); 
\fill [orange!60!white, opacity=0.7] (X) .. controls +(0,1) and +(0,1) .. (Xd) -- (X); 
\fill [orange!60!white, opacity=0.7] (X2) .. controls +(0,-1) and +(0,-1) .. (Xd2) -- (X2); 
%
\draw[thin] (d1) -- (d2) -- (u2) -- (u1) -- (d1); 
%
\draw (X) .. controls +(0,1) and +(0,1) .. (Xd);
\draw (X2) .. controls +(0,-1) and +(0,-1) .. (Xd2);
%
\fill ($(X)+(-0.1,0)$) circle (0pt) node[above right] {{\scriptsize $X$}};
\fill ($(Xd)+(0.1,0)$) circle (0pt) node[above left] {{\scriptsize $Y$}};
\fill ($(X2)+(-0.1,0)$) circle (0pt) node[below right] {{\scriptsize $X$}};
\fill ($(Xd2)+(0.1,0)$) circle (0pt) node[below left] {{\scriptsize $Y$}};
\fill[red!80!black] (0,0.2) circle (0pt) node {{\scriptsize $v$}};
\fill[red!80!black] (0,1.8) circle (0pt) node {{\scriptsize $v$}};
\fill[red!80!black] (0.75,1) circle (0pt) node {{\scriptsize $u$}};
\fill (0,0.73) circle (2.5pt) node[below] {{\scriptsize $\varphi$}};
\fill (0,1.26) circle (2.5pt) node[above] {{\scriptsize $\psi$}};
\end{tikzpicture}
& =
\begin{tikzpicture}[very thick,scale=1.0,color=blue!50!black, baseline=0.9cm]
\coordinate (X) at (0.5,0);
\coordinate (Xd) at (-0.5,0);
\coordinate (X2) at (0.5,2);
\coordinate (Xd2) at (-0.5,2);
\coordinate (d1) at (-1,0);
\coordinate (d2) at (+1,0);
\coordinate (u1) at (-1,2);
\coordinate (u2) at (+1,2);
%
\fill [orange!40!white, opacity=0.7] (d1) -- (d2) -- (u2) -- (u1); 
\fill [orange!60!white, opacity=0.7] (X) --(Xd) -- (Xd2) -- (X2); 
%
\draw[thin] (d1) -- (d2) -- (u2) -- (u1) -- (d1); 
%
\draw (X) -- (X2);
\draw (Xd) -- (Xd2);
%
\fill ($(X)+(-0.1,0)$) circle (0pt) node[above right] {{\scriptsize $X$}};
\fill ($(Xd)+(0.1,0)$) circle (0pt) node[above left] {{\scriptsize $Y$}};
\fill[red!80!black] (0,1) circle (0pt) node {{\scriptsize $v$}};
\fill[red!80!black] (0.75,1) circle (0pt) node {{\scriptsize $u$}};
\fill[red!80!black] (-0.75,1) circle (0pt) node {{\scriptsize $u$}};
\end{tikzpicture}
= 1_{Y\otimes X}
\, , 
\qquad 
\begin{tikzpicture}[very thick,scale=1.0,color=blue!50!black, baseline=0.9cm]
\coordinate (X) at (0.5,0);
\coordinate (Xd) at (-0.5,0);
\coordinate (X2) at (0.5,2);
\coordinate (Xd2) at (-0.5,2);
\coordinate (d1) at (-1,0);
\coordinate (d2) at (+1,0);
\coordinate (u1) at (-1,2);
\coordinate (u2) at (+1,2);
%
\fill [orange!40!white, opacity=0.7] (d1) -- (d2) -- (u2) -- (u1); 
\fill [orange!60!white, opacity=0.7] (0,1) circle (0.6);
%
\draw[thin] (d1) -- (d2) -- (u2) -- (u1) -- (d1); 
\fill (-0.8,0.2) circle (0pt) node[red!80!black] {{\scriptsize$u$}};
\fill (0,1) circle (0pt) node[red!80!black] {{\scriptsize$v$}};
\fill (0,1.6) circle (2.5pt) node[above] {{\scriptsize $\varphi$}};
\fill (0,0.4) circle (2.5pt) node[below] {{\scriptsize $\psi$}};
\fill (+0.77,1) circle (0pt) node {{\scriptsize $X$}};
\fill (-0.75,1) circle (0pt) node {{\scriptsize $Y$}};
\draw (0,1) circle (0.6);
\end{tikzpicture} 
=
\begin{tikzpicture}[very thick,scale=1.0,color=blue!50!black, baseline=0.9cm]
\coordinate (X) at (0.5,0);
\coordinate (Xd) at (-0.5,0);
\coordinate (X2) at (0.5,2);
\coordinate (Xd2) at (-0.5,2);
\coordinate (d1) at (-1,0);
\coordinate (d2) at (+1,0);
\coordinate (u1) at (-1,2);
\coordinate (u2) at (+1,2);
%
\fill [orange!40!white, opacity=0.7] (d1) -- (d2) -- (u2) -- (u1); 
%
\draw[thin] (d1) -- (d2) -- (u2) -- (u1) -- (d1); 
\fill (-0.8,0.2) circle (0pt) node[red!80!black] {{\scriptsize$u$}};
%
\end{tikzpicture} 
= 1_{1_u}
\, , 
\label{eq:weak-invertible-2cat-1st}
\\
\begin{tikzpicture}[very thick,scale=1.0,color=blue!50!black, baseline=0.9cm]
\coordinate (X) at (0.5,0);
\coordinate (Xd) at (-0.5,0);
\coordinate (X2) at (0.5,2);
\coordinate (Xd2) at (-0.5,2);
\coordinate (d1) at (-1,0);
\coordinate (d2) at (+1,0);
\coordinate (u1) at (-1,2);
\coordinate (u2) at (+1,2);
%
\fill [orange!40!white, opacity=0.7] (d1) -- (d2) -- (u2) -- (u1); 
\fill [orange!60!white, opacity=0.7] (d1) -- (Xd) .. controls +(0,1) and +(0,1) .. (X) -- (d2) -- (u2) -- (X2) .. controls +(0,-1) and +(0,-1) .. (Xd2) -- (u1); 
%
\draw[thin] (d1) -- (d2) -- (u2) -- (u1) -- (d1); 
%
\draw (X) .. controls +(0,1) and +(0,1) .. (Xd);
\draw (X2) .. controls +(0,-1) and +(0,-1) .. (Xd2);
%
\fill ($(X)+(-0.1,0)$) circle (0pt) node[above right] {{\scriptsize $Y$}};
\fill ($(Xd)+(0.1,0)$) circle (0pt) node[above left] {{\scriptsize $X$}};
\fill ($(X2)+(-0.1,0)$) circle (0pt) node[below right] {{\scriptsize $Y$}};
\fill ($(Xd2)+(0.1,0)$) circle (0pt) node[below left] {{\scriptsize $X$}};
\fill[red!80!black] (0,0.2) circle (0pt) node {{\scriptsize $u$}};
\fill[red!80!black] (0,1.8) circle (0pt) node {{\scriptsize $u$}};
\fill[red!80!black] (0.75,1) circle (0pt) node {{\scriptsize $v$}};
\fill (0,0.73) circle (2.5pt) node[below] {{\scriptsize $\zeta$}};
\fill (0,1.26) circle (2.5pt) node[above] {{\scriptsize $\xi$}};
\end{tikzpicture}
& =
\begin{tikzpicture}[very thick,scale=1.0,color=blue!50!black, baseline=0.9cm]
\coordinate (X) at (0.5,0);
\coordinate (Xd) at (-0.5,0);
\coordinate (X2) at (0.5,2);
\coordinate (Xd2) at (-0.5,2);
\coordinate (d1) at (-1,0);
\coordinate (d2) at (+1,0);
\coordinate (u1) at (-1,2);
\coordinate (u2) at (+1,2);
%
\fill [orange!40!white, opacity=0.7] (d1) -- (d2) -- (u2) -- (u1); 
\fill [orange!60!white, opacity=0.7] (X) --(d2) -- (u2) -- (X2); 
\fill [orange!60!white, opacity=0.7] (Xd) --(d1) -- (u1) -- (Xd2); 
%
\draw[thin] (d1) -- (d2) -- (u2) -- (u1) -- (d1); 
%
\draw (X) -- (X2);
\draw (Xd) -- (Xd2);
%
\fill ($(X)+(-0.1,0)$) circle (0pt) node[above right] {{\scriptsize $Y$}};
\fill ($(Xd)+(0.1,0)$) circle (0pt) node[above left] {{\scriptsize $X$}};
\fill[red!80!black] (0,1) circle (0pt) node {{\scriptsize $u$}};
\fill[red!80!black] (0.75,1) circle (0pt) node {{\scriptsize $v$}};
\fill[red!80!black] (-0.75,1) circle (0pt) node {{\scriptsize $v$}};
\end{tikzpicture}
= 1_{X\otimes Y}
\, , 
\qquad 
\begin{tikzpicture}[very thick,scale=1.0,color=blue!50!black, baseline=0.9cm]
\coordinate (X) at (0.5,0);
\coordinate (Xd) at (-0.5,0);
\coordinate (X2) at (0.5,2);
\coordinate (Xd2) at (-0.5,2);
\coordinate (d1) at (-1,0);
\coordinate (d2) at (+1,0);
\coordinate (u1) at (-1,2);
\coordinate (u2) at (+1,2);
%
\fill [orange!40!white, opacity=0.7] (d1) -- (d2) -- (u2) -- (u1); 
\fill [orange!60!white, opacity=0.7] (d1) -- (d2) -- (u2) -- (u1); 
\fill [white] (0,1) circle (0.6);
\fill [orange!40!white, opacity=0.7] (0,1) circle (0.6);
%
\draw[thin] (d1) -- (d2) -- (u2) -- (u1) -- (d1); 
\fill (-0.8,0.2) circle (0pt) node[red!80!black] {{\scriptsize$v$}};
\fill (0,1) circle (0pt) node[red!80!black] {{\scriptsize$u$}};
\fill (0,1.6) circle (2.5pt) node[above] {{\scriptsize $\zeta$}};
\fill (0,0.4) circle (2.5pt) node[below] {{\scriptsize $\xi$}};
\fill (+0.77,1) circle (0pt) node {{\scriptsize $Y$}};
\fill (-0.75,1) circle (0pt) node {{\scriptsize $X$}};
\draw (0,1) circle (0.6);
\end{tikzpicture} 
=
\begin{tikzpicture}[very thick,scale=1.0,color=blue!50!black, baseline=0.9cm]
\coordinate (X) at (0.5,0);
\coordinate (Xd) at (-0.5,0);
\coordinate (X2) at (0.5,2);
\coordinate (Xd2) at (-0.5,2);
\coordinate (d1) at (-1,0);
\coordinate (d2) at (+1,0);
\coordinate (u1) at (-1,2);
\coordinate (u2) at (+1,2);
%
\fill [orange!40!white, opacity=0.7] (d1) -- (d2) -- (u2) -- (u1); 
\fill [orange!60!white, opacity=0.7] (d1) -- (d2) -- (u2) -- (u1); 
%
\draw[thin] (d1) -- (d2) -- (u2) -- (u1) -- (d1); 
\fill (-0.8,0.2) circle (0pt) node[red!80!black] {{\scriptsize$v$}};
%
\end{tikzpicture} 
= 1_{1_v}
\, . 
\label{eq:weak-invertible-2cat-2nd}
\end{align}
Hence~$X$ and~$Y$ are \textsl{weakly inverse} to one another, in the sense that their compositions are isomorphic (but not necessarily equal) to identity 1-morphisms. 

A weaker notion is that of a \textsl{(2-)condensation of~$u$ onto~$v$}, which consists only of morphisms $X,Y,\zeta,\xi$ as above, and only the last of the four conditions has to hold, namely the \textsl{bubble condition} $\zeta\circ\xi=1_{1_v}$ in \eqref{eq:weak-invertible-2cat-2nd}. 
In this case $A:= Y\otimes X\colon u\to u$ with (co)multiplication $\mu:=1_Y\otimes\zeta\otimes 1_X$, $\Delta:= 1_Y\otimes \xi\otimes 1_X$ is a $\Delta$-separable Frobenius algebra~$\mathcal{A}$.
While $v$ need not be equivalent to $u$ in $\mathcal B$, it (or rather the trivial algebra $1_v\colon v\to v$ on $v$) is equivalent to~$\mathcal{A}$ in the \textsl{condensation completion} $\overline{\mathcal{B}} = \textrm{Kar}(\mathcal B)$ of~$\mathcal B$, see \cite[Prop.\,4.4]{Carqueville:2012dk} (where~$\overline{\mathcal{B}}$ was called ``equivariant completion'') and \cite[Sect.\,2]{GaiottoJohnsonFreyd}. 
By definition, objects of~$\overline{\mathcal{B}}$ are $\Delta$-separable Frobenius algebras in~$\mathcal B$, 1-morphisms are bimodules, and 2-morphisms are bimodule maps. 

As observed in \cite{GaiottoJohnsonFreyd}, the condensation completion~$\overline{\mathcal{B}}$ is a categorification of the idempotent completion of 1-categories, where the identity $e\circ e=e$ for idempotents is replaced by the multiplication $\mu\colon A\otimes A \to A$ of a $\Delta$-separable Frobenius algebra. 
Categorifications are not unique, and we may refer to this particular one as a ``framed categorification''. 

If the 2-category~$\mathcal B$ comes with a pivotal structure (as it does when it arises from an oriented defect TFT), then there is a natural oriented version of the above framed categorification of idempotent completion. 
For this we will need the left and right quantum dimension of a 1-morphism $X\colon u\to v$,
\begin{equation}\label{eq:diml-dimr}
\dim_{\textrm{l}}(X) := 
\textrm{ev}_X \circ \widetilde{\textrm{coev}}_X
=
\begin{tikzpicture}[very thick,scale=1.0,color=blue!50!black, baseline=0.9cm]
\coordinate (X) at (0.5,0);
\coordinate (Xd) at (-0.5,0);
\coordinate (X2) at (0.5,2);
\coordinate (Xd2) at (-0.5,2);
\coordinate (d1) at (-1,0);
\coordinate (d2) at (+1,0);
\coordinate (u1) at (-1,2);
\coordinate (u2) at (+1,2);
%
\fill [orange!40!white, opacity=0.7] (d1) -- (d2) -- (u2) -- (u1); 
\fill [orange!60!white, opacity=0.7] (0,1) circle (0.6);
%
\draw[thin] (d1) -- (d2) -- (u2) -- (u1) -- (d1); 
\fill (-0.8,0.2) circle (0pt) node[red!80!black] {{\scriptsize$u$}};
\fill (0,1) circle (0pt) node[red!80!black] {{\scriptsize$v$}};
\fill (+0.77,1) circle (0pt) node {{\scriptsize $X$}};
\draw (0,1) circle (0.6);
\draw[<-, very thick] (0.100,0.4) -- (-0.101,0.4) node[above] {}; 
\draw[<-, very thick] (-0.100,1.6) -- (0.101,1.6) node[below] {}; 
\end{tikzpicture} 
\, , \quad 
\dim_{\textrm{r}}(X) := 
\widetilde{\textrm{ev}}_X \circ \textrm{coev}_X
=
\begin{tikzpicture}[very thick,scale=1.0,color=blue!50!black, baseline=0.9cm]
\coordinate (X) at (0.5,0);
\coordinate (Xd) at (-0.5,0);
\coordinate (X2) at (0.5,2);
\coordinate (Xd2) at (-0.5,2);
\coordinate (d1) at (-1,0);
\coordinate (d2) at (+1,0);
\coordinate (u1) at (-1,2);
\coordinate (u2) at (+1,2);
%
\fill [orange!40!white, opacity=0.7] (d1) -- (d2) -- (u2) -- (u1); 
\fill [orange!60!white, opacity=0.7] (d1) -- (d2) -- (u2) -- (u1); 
\fill [white] (0,1) circle (0.6);
\fill [orange!40!white, opacity=0.7] (0,1) circle (0.6);
%
\draw[thin] (d1) -- (d2) -- (u2) -- (u1) -- (d1); 
\fill (-0.8,0.2) circle (0pt) node[red!80!black] {{\scriptsize$v$}};
\fill (0,1) circle (0pt) node[red!80!black] {{\scriptsize$u$}};
\draw[->, very thick] (0.100,0.4) -- (-0.101,0.4) node[above] {}; 
\draw[->, very thick] (-0.100,1.6) -- (0.101,1.6) node[below] {}; 
\fill (-0.75,1) circle (0pt) node {{\scriptsize $X$}};
\draw (0,1) circle (0.6);
\end{tikzpicture} 
~.
\end{equation}
Suppose that $\dim_\mathrm{r}(X)$ is invertible. 
Then by setting $Y=X^*$, $\zeta = \widetilde{\textrm{ev}}_X$, $\xi=\textrm{coev}_X \circ \dim_\mathrm{r}(X)^{-1}$ above, one finds that
\begin{equation}
\label{eq:2d-orb-datum}
\mathcal A = 
\left( 
u \, , \quad 
A := X^* \otimes X \colon u \lra u 
\, , \quad 
\mu := 
\begin{tikzpicture}[very thick,scale=0.9,color=blue!50!black, baseline=.7cm]
\fill [orange!40!white, opacity=0.7] (1.25,0) -- (3.75,0) -- (3.75,1.655) -- (1.25,1.665);
\fill [orange!60!white, opacity=0.7] (3.28,0) -- (3,0) .. controls +(0,1) and +(0,1) .. (2,0) -- (1.55,0) -- (1.55,0) .. controls +(0,0.75) and +(0,-0.25) .. (2.3,1.25) -- (2.3,1.655) -- (2.7,1.655) -- (2.7,1.25) .. controls +(0,-0.25) and +(0,0.75) .. (3.45,0);
\draw[line width=0pt] 
(2.77,0.13) node[line width=0pt] (D) {{\tiny$X^*$}}
(2.16,0.13) node[line width=0pt] (s) {{\tiny$X\vphantom{X^*}$}}; 
\draw[redirected] (3,0) .. controls +(0,1) and +(0,1) .. (2,0);
\draw[line width=0pt] (2.5,0.4) node[red!80!black] (D) {{\tiny$u$}};
\draw[line width=0pt] (2.5,1.1) node[red!80!black] (D) {{\tiny$v$}};
\draw[line width=0pt] (3.4,1.2) node[red!80!black] (D) {{\tiny$u$}};
\draw[line width=0pt] (1.6,1.2) node[red!80!black] (D) {{\tiny$u$}};
\draw[line width=0pt] 
(3.28,0.13) node[line width=0pt] (re) {{\tiny$X\vphantom{X^*}$}}
(1.77,0.13) node[line width=0pt] (li) {{\tiny$X^*$}}; 
\draw[line width=0pt] 
(2.85,1.55) node[line width=0pt] (ore) {{\tiny$X\vphantom{X^*}$}}
(2.1,1.55) node[line width=0pt] (oli) {{\tiny$X^*$}}; 
\draw (1.55,0) .. controls +(0,0.75) and +(0,-0.25) .. (2.3,1.25);
\draw (2.3,1.25) -- (2.3,1.655);
\draw (3.45,0) .. controls +(0,0.75) and +(0,-0.25) .. (2.7,1.25);
\draw (2.7,1.25) -- (2.7,1.655);
\end{tikzpicture}
\, , \quad 
\Delta := 
\begin{tikzpicture}[very thick,scale=0.9,color=blue!50!black, baseline=-0.8cm, rotate=180]
\fill [orange!40!white, opacity=0.7] (1.25,0) -- (3.75,0) -- (3.75,1.655) -- (1.25,1.665);
\fill [orange!60!white, opacity=0.7] (3.28,0) -- (3,0) .. controls +(0,1) and +(0,1) .. (2,0) -- (1.55,0) -- (1.55,0) .. controls +(0,0.75) and +(0,-0.25) .. (2.3,1.25) -- (2.3,1.655) -- (2.7,1.655) -- (2.7,1.25) .. controls +(0,-0.25) and +(0,0.75) .. (3.45,0);
\draw[line width=0pt] 
(2.8,0.13) node[line width=0pt] (D) {{\tiny$X\vphantom{X^*}$}}
(2.2,0.13) node[line width=0pt] (s) {{\tiny$X^*\vphantom{X^*}$}}; 
\draw[redirected] (3,0) .. controls +(0,1) and +(0,1) .. (2,0);
\draw[line width=0pt] (2.5,0.4) node[red!80!black] (D) {{\tiny$u$}};
\draw[line width=0pt] (2.5,1.1) node[red!80!black] (D) {{\tiny$v$}};
\draw[line width=0pt] (3.4,1.2) node[red!80!black] (D) {{\tiny$u$}};
\draw[line width=0pt] (1.6,1.2) node[red!80!black] (D) {{\tiny$u$}};
\draw[line width=0pt] 
(3.2,0.13) node[line width=0pt] (re) {{\tiny$X^*\vphantom{X^*}$}}
(1.7,0.13) node[line width=0pt] (li) {{\tiny$X\vphantom{X^*}$}}; 
\draw[line width=0pt] 
(2.9,1.55) node[line width=0pt] (ore) {{\tiny$X^*\vphantom{X^*}$}}
(2.15,1.55) node[line width=0pt] (oli) {{\tiny$X\vphantom{X^*}$}}; 
\draw (2.5,1.4) node (D) {{\tiny$\boldsymbol{\star}$}}; 
\draw (1.55,0) .. controls +(0,0.75) and +(0,-0.25) .. (2.3,1.25);
\draw (2.3,1.25) -- (2.3,1.655);
\draw (3.45,0) .. controls +(0,0.75) and +(0,-0.25) .. (2.7,1.25);
\draw (2.7,1.25) -- (2.7,1.655);
\end{tikzpicture}
\right) 
\end{equation}
where $\boldsymbol{\star} := \dim_{\textrm{r}}(X)^{-1}$, is now \textsl{symmetric} in addition to being a $\Delta$-separable Frobenius algebra. 
Moreover, the algebras $\mathcal A \colon u \to u$ and $1_v \colon v \to v$ are equivalent in the \textsl{orbifold completion}~$\mathcal B_{\textrm{orb}}$ of~$\mathcal B$, see \cite[Sect.\,5]{Carqueville:2012dk}. 
By definition, $\mathcal B_{\textrm{orb}}$ has $\Delta$-separable symmetric Frobenius algebras as objects, while 1- and 2-morphisms are bimodules and their maps. 

\medskip

Let us apply the above considerations to the case that
$\mathcal B = \mathcal D_\zz$ is obtained from an (oriented) 2d defect TFT~$\zz$. Write $\zz_u$ for the closed TFT associated to an object $u \in \mathcal{B}$. 
We stress that even if there are line defects $X \colon u \to v$ and $Y\colon v \to u$ which are weakly inverse to each other, this does \textsl{not} imply that $\zz_u(\Sigma) = \zz_v(\Sigma)$ for all closed surfaces~$\Sigma$. 
This equality however does follow if $Y = X^*$ and the 2-morphisms in \eqref{eq:weak-invertible-2cat-1st} and \eqref{eq:weak-invertible-2cat-2nd} are given by the duality data of the pivotal structure. 
In particular, in this case both dimensions in \eqref{eq:diml-dimr} are $1$. 
If we only demand $\dim_{\textrm{r}}(X) = 1_{1_v}$, then we find that~$\mathcal A$ as in~\eqref{eq:2d-orb-datum} is equivalent to the trivial Frobenius algebra $1_v\colon v\to v$ in the orbifold completion $\mathcal B_{\textrm{orb}}$. 
$\mathcal A$ is a gaugeable symmetry in the sense of Section~\ref{sec:gauging}. 
The equivalence 
$\mathcal A \cong 1_v$ in $\mathcal B_{\textrm{orb}}$ translates into an equivalence $\mathcal Z_u/\mathcal A \cong \mathcal Z_v$ of closed TFTs. 

If both quantum dimensions in~\eqref{eq:diml-dimr} are identities,\footnote{If $\mathcal B = \mathcal D_{\mathcal Z }$ for a defect TFT~$\mathcal Z$ and if the dimensions in \eqref{eq:diml-dimr} differ from identities by non-zero scalars, one can set them to unity by stacking with 2d Euler TFTs. 
Euler TFTs are examples of invertible topological field theories. They exist in any dimension and assign the Euler character (with suitable boundary term) to bordisms, see \cite{Quinnlectures} and e.g.\ \cite[Ex.\,2.14]{Carqueville:2017aoe}.
Note that the Euler character of an odd dimensional closed manifold is zero. In particular, stacking an Euler TFT on a defect cannot change the value of an odd-dimensional defect sphere. 
On the other hand, the Euler character of a ball is $1$ in any dimension, and so one can always change the value of a defect sphere by stacking the ball it bounds with an Euler TFT. 
Starting from dimension 4, there are more oriented non-extended invertible TFTs than Euler theories, a classification is given in \cite[Thm.\,3.14\,\&\,Ex.\,3.17]{Reutter:2022}.\label{fn:EulerTFT}} 
then we have another orbifold datum $\mathcal A' = (v, A':= X\otimes X^*,\dots)$ such that $\mathcal A' \cong 1_u$ in $\mathcal B_{\textrm{orb}}$, in addition to $\mathcal A \cong 1_v$ as above. 
In this symmetric situation we call~$u$ and~$v$ \textsl{orbifold equivalent}. 
This indeed gives an equivalence relation for objects in~$\mathcal B$. 
If $\mathcal B = \mathcal D_{\mathcal Z }$, we then in particular have $\mathcal Z_u/\mathcal A \cong \mathcal Z_v$ and $\mathcal Z_v/\mathcal A' \cong \mathcal Z_u$.

\subsection{Weak inverses and condensations in defect 3-categories}
\label{subsec:defect3categories}

The ``largest'' rigorously constructed 3d defect TFT is that of Reshetikhin--Turaev type, see \cite{Kapustin:2010if, Fuchs:2012dt, Carqueville:2017ono, Koppen:2021kry, Carqueville:2023aak}, closely related to quantised Chern--Simons theory. 
This in particular includes state sum models such as Dijkgraaf--Witten models for finite groups, and more generally Turaev--Viro--Barrett--Westbury models for spherical fusion categories. 
A non-semisimple example is believed to be given by Rozansky--Witten models \cite{RW1996, KRS, KR0909.3643}, 
which are topologically twisted 3-dimensional sigma models. 

The 3-category~$\mathcal D_\zz$ associated to any given 3d defect TFT~$\zz$ has the structure of a ``Gray category with duals'' \cite{CMS,BMS}. 
This is a categorification of the result on pivotal 2-categories in two dimensions (cf.\ Section~\ref{sec:2cat-form-def}) and essentially means that 1-morphisms~$u$ in~$\mathcal D_\zz$ have coherently isomorphic left and right adjoints~$u^\#$, 2-morphisms~$X$ have coherently isomorphic left and right adjoints~$X^*$, while $(-)^\#$ and~$(-)^*$ enjoy a compatibility condition and originate from orientation reversal. 
For Reshetikhin--Turaev models, this structure is constructed in \cite{Carqueville:2023aak}, while the state of the art for Rozansky--Witten models is summarised in  \cite{KR0909.3643, KRS} in general, and in \cite{BCFR} for affine target manifolds. 

In the remainder of this section we discuss various notions of equivalence in a given 3-category~$\mathcal T$. 
We employ the graphical calculus of \cite{BMS}, and as indicated by the cube in \eqref{eq:cobordism-is-graphical-calc} we read vertical composition~$\circ$ of 3-morphisms (``operator product of point insertions'') from bottom to top, horizontal composition~$\otimes$ of 2-morphisms (``fusion of line defects'') from right to left, and composition~$\btimes$ of 1-morphisms (``stacking of surface defects'') from front to back. 
The graphical calculus is illustrated by the following example: 
\begin{equation}
\label{eq:3dCalcExample}
\hspace{-2em}
\begin{tikzpicture}[thick,scale=2.321,color=blue!50!black, baseline=0.3cm, >=stealth, 
style={x={(-0.6cm,-0.4cm)},y={(1cm,-0.2cm)},z={(0cm,0.9cm)}}]
\pgfmathsetmacro{\yy}{0.2}
\coordinate (P) at (0.5, \yy, 0);
\coordinate (R) at (0.625, 0.5 + \yy/2, 0);
\coordinate (L) at (0.5, 0, 0);
\coordinate (R1) at (0.25, 1, 0);
\coordinate (R2) at (0.5, 1, 0);
\coordinate (R3) at (0.75, 1, 0);
\coordinate (Pt) at (0.5, \yy, 1);
\coordinate (Rt) at (0.375, 0.5 + \yy/2, 1);
\coordinate (Lt) at (0.5, 0, 1);
\coordinate (R1t) at (0.25, 1, 1);
\coordinate (R2t) at (0.5, 1, 1);
\coordinate (R3t) at (0.75, 1, 1);
\coordinate (alpha) at (0.5, 0.5, 0.5);
%
%
\fill [orange!30!white, opacity=0.8] (L) -- (P) -- (alpha) -- (Pt) -- (Lt);
\fill [orange!30!white, opacity=0.8] (Pt) -- (Rt) -- (alpha);
\fill [orange!30!white, opacity=0.8] (Rt) -- (R1t) -- (R1) -- (P) -- (alpha);
\fill [orange!30!white, opacity=0.8] (Rt) -- (R2t) -- (R2) -- (R) -- (alpha);
\draw[string=green!30!black, ultra thick] (alpha) -- (Rt);
\fill[color=blue!50!black] (0.5,0.83,0.77) circle (0pt) node[left] (0up) { {\scriptsize$X_4$} };
\fill [orange!30!white, opacity=0.8] (Pt) -- (R3t) -- (R3) -- (R) -- (alpha);
\fill [orange!30!white, opacity=0.8] (P) -- (R) -- (alpha);
%
\draw[string=blue!50!black, ultra thick] (P) -- (alpha);
\draw[string=blue!50!black, ultra thick] (R) -- (alpha);
\draw[string=blue!50!black, ultra thick] (alpha) -- (Pt);
%
\fill[color=blue!50!black] (alpha) circle (1.2pt) node[left] (0up) { {\scriptsize$\varphi$} };
%
\fill[red!80!black] (0.5,0.35,-0.17) circle (0pt) node {{\scriptsize $\hspace{7em}u$}};
\fill[red!80!black] (0.5,0.35,-0.02) circle (0pt) node {{\scriptsize $\hspace{1em}v$}};
\fill[red!80!black] (0.5,0.35,0.4) circle (0pt) node {{\scriptsize $\hspace{-4em}w$}};
\fill[red!80!black] (0.5,0.35,-0.07) circle (0pt) node {{\scriptsize $\hspace{9.5em}s$}};
\fill[red!80!black] (0.5,0.35,0.05) circle (0pt) node {{\scriptsize $\hspace{12em}r$}};
\fill[red!80!black] (0.5,0.35,0.95) circle (0pt) node {{\scriptsize $\hspace{3em}t$}};
\fill[black] (0.5,0.35,-0.15) circle (0pt) node {{\scriptsize $\hspace{0em}a$}};
\fill[black] (0.5,0.35,-0.19) circle (0pt) node {{\scriptsize $\hspace{9.3em}b$}};
\fill[black] (0.5,0.35,-0.08) circle (0pt) node {{\scriptsize $\hspace{11.8em}c$}};
\fill[black] (0.5,0.35,1.05) circle (0pt) node {{\scriptsize $\hspace{10em}d$}};
%
\fill[color=blue!50!black] (0.5,0.35,0.25) circle (0pt) node[left] (0up) { {\scriptsize$X_1\,$} };
\fill[color=blue!50!black] (0.5,0.8,0.21) circle (0pt) node[left] (0up) { {\scriptsize$\hspace{-1em}X_2$} };
\fill[color=blue!50!black] (0.5,0.4,0.71) circle (0pt) node[left] (0up) { {\scriptsize$X_3\,$} };
%
\draw [black,opacity=1, very thin] (Pt) -- (Lt) -- (L) -- (P);
\draw [black,opacity=1, very thin] (Pt) -- (Rt);
\draw [black,opacity=1, very thin] (Rt) -- (R1t) -- (R1) -- (P);
\draw [black,opacity=1, very thin] (Rt) -- (R2t) -- (R2) -- (R);
\draw [black,opacity=1, very thin] (Pt) -- (R3t) -- (R3) -- (R);
\draw [black,opacity=1, very thin] (P) -- (R);
\end{tikzpicture} 
\quad 
\textrm{represents a 3-morphism } \varphi\colon X_1 \otimes \big(1_r \btimes X_2 \big) \lra X_3 \otimes \big( X_4 \btimes 1_u \big) 
\end{equation}
for 1-morphisms $u\colon a\to b$, $v\colon a \to c$, 2-morphisms $X_1\colon r \btimes v\to w$, $X_2\colon s \btimes u\to v$ etc. 

Let~$\mathcal T$ be a 3-category. 
The categorification of the 2-dimensional notion of equivalence in~\eqref{eq:weak-invertible-2cat-1st} and~\eqref{eq:weak-invertible-2cat-2nd} involves more complexity, but no new ideas are needed: 
Two objects $a,b\in\mathcal T$ are \textsl{equivalent}, $a\cong b$, if there are
\begin{equation}
\label{eq:table}
	\begin{tabular}{l|l|ll}
		1-morphisms & 2-morphisms & 3-morphisms &
		\\
		\midrule[1.pt]
		$u\colon a \lra b$
		&
		$X\colon v\btimes u \lra 1_a$\phantom{D}
		& 
		$\varphi\colon Y\otimes X \lra 1_{v\btimes u}$ 
		&
		$\zeta\colon X\otimes Y\lra 1_u$
		\\
		$v\colon b\lra a$
		& 
		$Y\colon 1_a \lra v\btimes u$
		& 
		$\psi\colon 1_{v\btimes u} \lra Y\otimes X$
		& 
		$\xi\colon 1_u\lra X\otimes Y$
		\\
		\midrule[0.4pt]
		{}
		&
		$X'\colon u\btimes v \lra 1_b$
		& 
		$\varphi'\colon Y'\otimes X' \lra 1_{u\btimes v}$\phantom{D} 
		&
		$\zeta'\colon X'\otimes Y'\lra 1_v$
		\\
		{}
		& 
		$Y'\colon 1_b \lra u\btimes v$
		& 
		$\psi'\colon 1_{u\btimes v} \lra Y'\otimes X'$
		& 
		$\xi'\colon 1_v\lra X'\otimes Y'$	
	\end{tabular}
\end{equation}
such that
\begin{align}
\hspace{-0.5em}
\begin{tikzpicture}[thick,scale=0.75,color=black, baseline=1.15cm]
\coordinate (p1) at (0,0);
\coordinate (p2) at (0.5,0.2);
\coordinate (p3) at (-0.5,0.4);
\coordinate (p4) at (2.5,0);
\coordinate (p5) at (1.5,0.2);
\coordinate (p6) at (2.0,0.4);
\coordinate (s) at ($(0,1.5)$);
\coordinate (s2) at ($(0,-3)$);
\coordinate (d) at ($(4,0)$);
\coordinate (h) at ($(0,2.5)$);
\coordinate (t) at ($(0.1,0)$);
\coordinate (u) at ($(0,-1.1)$);
%
%
%
\fill[black] (2.2,0.2) node[line width=0pt] (ore) {{\tiny$b$}};
%
\fill [orange!40!white, opacity=0.8] 
(p3) .. controls +(0.5,0) and +(0,0.1) ..  (p2)
-- (p2) .. controls +(0,0.9) and +(0,0.9) ..  (p5)
-- (p5) .. controls +(0,0.1) and +(-0.3,0) .. (p6)
-- (p6) -- ($(p6)+(s)$) -- ($(p3)+(s)$);
%
%
\draw[thin] (p2) .. controls +(0,0.1) and +(0.5,0) ..  (p3); 
\draw[thin] (p6) .. controls +(-0.3,0) and +(0,0.1) ..  (p5); 
\draw[thin] (p6) -- ($(p6)+(s)$);
%
\fill [orange!20!white, opacity=0.8] 
(p1) .. controls +(0.3,0) and +(0,-0.1) ..  (p2)
-- (p2) .. controls +(0,0.9) and +(0,0.9) ..  (p5)
-- (p5) .. controls +(0,-0.1) and +(-0.5,0) ..  (p4) 
-- (p4) -- ($(p4)+(s)$) -- ($(p1)+(s)$);
%
\draw[thin] (p1) .. controls +(0.3,0) and +(0,-0.1) ..  (p2); 
\draw[thin] (p5) .. controls +(0,-0.1) and +(-0.5,0) ..  (p4); 
%
\draw[very thick, color=blue!50!black] (p2) .. controls +(0,0.9) and +(0,0.9) ..  (p5);
\draw[thin] (p1) -- ($(p1)+(s)$);
\draw[thin] (p3) -- ($(p3)+(s)$);
\draw[thin] (p4) -- ($(p4)+(s)$);
%
%
%
\fill [orange!40!white, opacity=0.8] 
($(p3)-(s2)$) .. controls +(0.5,0) and +(0,0.1) ..  ($(p2)-(s2)$)
-- ($(p2)-(s2)$) .. controls +(0,-0.9) and +(0,-0.9) ..  ($(p5)-(s2)$)
-- ($(p5)-(s2)$) .. controls +(0,0.1) and +(-0.3,0) .. ($(p6)-(s2)$)
-- ($(p6)-(s2)$) -- ($(p6)-(s)-(s2)$) -- ($(p3)-(s)-(s2)$);
%
\draw[thin] ($(p2)-(s2)$) .. controls +(0,0.1) and +(0.5,0) .. ($(p3)-(s2)$); 
\draw[thin] ($(p6)-(s2)$) .. controls +(-0.3,0) and +(0,0.1) .. ($(p5)-(s2)$); 
\draw[thin] ($(p6)-(s2)$) -- ($(p6)-(s)-(s2)$);
%
\fill [orange!20!white, opacity=0.8] 
($(p1)-(s2)$) .. controls +(0.3,0) and +(0,-0.1) ..  ($(p2)-(s2)$)
-- ($(p2)-(s2)$) .. controls +(0,-0.9) and +(0,-0.9) ..  ($(p5)-(s2)$)
-- ($(p5)-(s2)$) .. controls +(0,-0.1) and +(-0.5,0) ..  ($(p4)-(s2)$) 
-- ($(p4)-(s2)$) -- ($(p4)-(s)-(s2)$) -- ($(p1)-(s)-(s2)$);
%
\draw[thin] ($(p1)-(s2)$) .. controls +(0.3,0) and +(0,-0.1) ..  ($(p2)-(s2)$); 
\draw[thin] ($(p5)-(s2)$) .. controls +(0,-0.1) and +(-0.5,0) ..  ($(p4)-(s2)$); 
%
\draw[very thick, color=blue!50!black] ($(p2)-(s2)$) .. controls +(0,-0.9) and +(0,-0.9) ..  ($(p5)-(s2)$);
\draw[thin] ($(p1)-(s2)$) -- ($(p1)-(s)-(s2)$);
\draw[thin] ($(p3)-(s2)$) -- ($(p3)-(s)-(s2)$);
\draw[thin] ($(p4)-(s2)$) -- ($(p4)-(s)-(s2)$);
\fill[red!80!black] (0.3,1.7) circle (0pt) node {{\scriptsize $u$}};
\fill[red!80!black] (-0.3,2.5) circle (0pt) node {{\scriptsize $v$}};
\fill[red!80!black] (1.8,3.2) circle (0pt) node {{\scriptsize $v$}};
\fill[color=blue!50!black] (1,0.87) circle (3pt) node[above] (0up) { {\scriptsize$\varphi$} };
\fill[color=blue!50!black] (1,2.53) circle (3pt) node[below] (0up) { {\scriptsize$\psi$} };
\fill[blue!50!black] (1.55,2.65) node[line width=0pt] (ore) {{\tiny$X$}};
\fill[blue!50!black] (0.45,2.65) node[line width=0pt] (ore) {{\tiny$Y$}};
\fill[blue!50!black] (1.6,0.65) node[line width=0pt] (ore) {{\tiny$X$}};
\fill[blue!50!black] (0.45,0.65) node[line width=0pt] (ore) {{\tiny$Y$}};
\fill[black] (1,0.2) node[line width=0pt] (ore) {{\tiny$a\vphantom{b}$}};
\fill[black] (-0.3,0.2) node[line width=0pt] (ore) {{\tiny$b$}};
\end{tikzpicture}
& =
\begin{tikzpicture}[thick,scale=0.75,color=black, baseline=0cm]
\coordinate (p1) at (0,0);
\coordinate (p2) at (0.5,0.2);
\coordinate (p3) at (-0.5,0.4);
\coordinate (p4) at (2.5,0);
\coordinate (p5) at (1.5,0.2);
\coordinate (p6) at (2.0,0.4);
\coordinate (s) at ($(0,1.5)$);
\coordinate (s2) at ($(0,0)$);
\coordinate (d) at ($(0,0)$);
\coordinate (h) at ($(0,2.5)$);
\coordinate (t) at ($(0.1,0)$);
\coordinate (u) at ($(0,-1.1)$);
\fill[black] (2.3,-1.3) node[line width=0pt] (ore) {{\tiny$b$}};
%
\fill [orange!40!white, opacity=0.8] 
($(p2)+(d)-(s)-(s2)$) .. controls +(0,0.1) and +(0.5,0) ..  ($(p3)+(d)-(s)-(s2)$)
-- ($(p3)+(d)-(s)-(s2)$) -- ($(p3)+(d)+(s)$) 
-- ($(p3)+(d)+(s)$) .. controls +(0.5,0) and +(0,0.1) ..($(p2)+(d)+(s)$); 
%
\draw[thin] ($(p2)+(d)-(s)-(s2)$) .. controls +(0,0.1) and +(0.5,0) ..  ($(p3)+(d)-(s)-(s2)$); 
\draw[thin] ($(p2)+(d)+(s)$) .. controls +(0,0.1) and +(0.5,0) ..  ($(p3)+(d)+(s)$); 
%
\fill [orange!20!white, opacity=0.8] 
($(p1)+(d)-(s)-(s2)$) .. controls +(0.3,0) and +(0,-0.1) .. ($(p2)+(d)-(s)-(s2)$)
-- ($(p2)+(d)-(s)-(s2)$) -- ($(p2)+(d)+(s)$)
-- ($(p2)+(d)+(s)$) .. controls +(0,-0.1) and +(0.3,0) .. ($(p1)+(d)+(s)$);
%
\draw[thin] ($(p1)+(d)-(s)-(s2)$) .. controls +(0.3,0) and +(0,-0.1) ..  ($(p2)+(d)-(s)-(s2)$); 
\draw[thin] ($(p1)+(d)+(s)$) .. controls +(0.3,0) and +(0,-0.1) ..  ($(p2)+(d)+(s)$); 
%
\draw[thin] ($(p1)+(d)-(s)-(s2)$) -- ($(p1)+(d)+(s)$);
\draw[thin] ($(p3)+(d)-(s)-(s2)$) -- ($(p3)+(d)+(s)$);
\draw[very thick, color=blue!50!black] ($(p2)+(d)-(s)-(s2)$) -- ($(p2)+(d)+(s)$);
%
%
\fill [orange!40!white, opacity=0.8] 
($(p5)-(d)-(s)-(s2)$) .. controls +(0,0.1) and +(-0.3,0) .. ($(p6)-(d)-(s)-(s2)$)
-- ($(p6)-(d)-(s)-(s2)$) -- ($(p6)-(d)+(s)$)
-- ($(p6)-(d)+(s)$) .. controls +(-0.3,0) and +(0,0.1) .. ($(p5)-(d)+(s)$);
%
%
\draw[thin] ($(p6)-(d)-(s)-(s2)$) -- ($(p6)-(d)+(s)$);
%
\draw[thin] ($(p5)-(d)-(s)-(s2)$) .. controls +(0,0.1) and +(-0.3,0) ..  ($(p6)-(d)-(s)-(s2)$); 
\draw[thin] ($(p5)-(d)+(s)$) .. controls +(0,0.1) and +(-0.3,0) ..  ($(p6)-(d)+(s)$); 
colouring front: 
\fill [orange!20!white, opacity=0.8] 
($(p5)-(d)-(s)-(s2)$) .. controls +(0,-0.1) and +(-0.5,0) ..  ($(p4)-(d)-(s)-(s2)$)
-- ($(p4)-(d)-(s)-(s2)$) -- ($(p4)-(d)+(s)$) 
-- ($(p4)-(d)+(s)$) .. controls +(-0.5,0) and +(0,-0.1) ..($(p5)-(d)+(s)$); 
%
\draw[thin] ($(p4)-(d)+(s)$) .. controls +(-0.5,0) and +(0,-0.1) ..($(p5)-(d)+(s)$);
\draw[thin] ($(p4)-(d)-(s)-(s2)$) .. controls +(-0.5,0) and +(0,-0.1) ..($(p5)-(d)-(s)-(s2)$);
%
\draw[thin] ($(p4)-(d)-(s)-(s2)$) -- ($(p4)-(d)+(s)$);
\draw[very thick, color=blue!50!black] ($(p5)-(d)-(s)-(s2)$) -- ($(p5)-(d)+(s)$);
\fill[red!80!black] (0.25,1.0) circle (0pt) node {{\scriptsize $u$}};
\fill[red!80!black] (-0.25,1.7) circle (0pt) node {{\scriptsize $v$}};
\fill[red!80!black] (1.8,1.75) circle (0pt) node {{\scriptsize $v$}};
\fill[red!80!black] (2.25,1.0) circle (0pt) node {{\scriptsize $u$}};
\fill[blue!50!black] (1.3,0.3) node[line width=0pt] (ore) {{\tiny$X$}};
\fill[blue!50!black] (0.7,0.3) node[line width=0pt] (ore) {{\tiny$Y$}};
\fill[black] (1,-1.3) node[line width=0pt] (ore) {{\tiny$a\vphantom{b}$}};
\fill[black] (-0.3,-1.3) node[line width=0pt] (ore) {{\tiny$b$}};
\end{tikzpicture}
= 1_{1_{Y\otimes X}}
\, , 
&
\begin{tikzpicture}[thick,scale=0.75,color=black, baseline=0cm]
\coordinate (p1) at (0,0);
\coordinate (p2) at (0.5,0.2);
\coordinate (p3) at (-0.5,0.4);
\coordinate (p4) at (2.5,0);
\coordinate (p5) at (1.5,0.2);
\coordinate (p6) at (2.0,0.4);
\coordinate (s) at ($(0,1.5)$);
\coordinate (s2) at ($(0,0)$);
\coordinate (d) at ($(4,0)$);
\coordinate (h) at ($(0,2.5)$);
\coordinate (t) at ($(0.1,0)$);
\coordinate (u) at ($(0,-1.1)$);
%
%
\fill [orange!40!white, opacity=0.8] 
($(p3)-(s2)$) .. controls +(0.5,0) and +(0,0.1) ..  ($(p2)-(s2)$)
-- ($(p2)-(s2)$) .. controls +(0,-0.9) and +(0,-0.9) ..  ($(p5)-(s2)$)
-- ($(p5)-(s2)$) .. controls +(0,0.1) and +(-0.3,0) .. ($(p6)-(s2)$)
-- ($(p6)-(s2)$) -- ($(p6)-(s)-(s2)$) -- ($(p3)-(s)-(s2)$);
%
\draw[thin] ($(p6)-(s)-(s2)$) -- ($(p3)-(s)-(s2)$);	
\draw[thin] ($(p6)-(s2)$) -- ($(p6)-(s)-(s2)$);
%
\fill [orange!20!white, opacity=0.8] 
($(p1)-(s2)$) .. controls +(0.3,0) and +(0,-0.1) ..  ($(p2)-(s2)$)
-- ($(p2)-(s2)$) .. controls +(0,-0.9) and +(0,-0.9) ..  ($(p5)-(s2)$)
-- ($(p5)-(s2)$) .. controls +(0,-0.1) and +(-0.5,0) ..  ($(p4)-(s2)$) 
-- ($(p4)-(s2)$) -- ($(p4)-(s)-(s2)$) -- ($(p1)-(s)-(s2)$);
%
\draw[thin] ($(p4)-(s)-(s2)$) -- ($(p1)-(s)-(s2)$); 
%
\draw[very thick, color=blue!50!black] ($(p2)-(s2)$) .. controls +(0,-0.9) and +(0,-0.9) ..  ($(p5)-(s2)$);
\draw[thin] ($(p1)-(s2)$) -- ($(p1)-(s)-(s2)$);
\draw[thin] ($(p3)-(s2)$) -- ($(p3)-(s)-(s2)$);
\draw[thin] ($(p4)-(s2)$) -- ($(p4)-(s)-(s2)$);
%
%
\fill [orange!40!white, opacity=0.8] 
(p3) .. controls +(0.5,0) and +(0,0.1) ..  (p2)
-- (p2) .. controls +(0,0.9) and +(0,0.9) ..  (p5)
-- (p5) .. controls +(0,0.1) and +(-0.3,0) .. (p6)
-- (p6) -- ($(p6)+(s)$) -- ($(p3)+(s)$);
%
\draw[thin] (p6) -- ($(p6)+(s)$);
%
\fill [orange!20!white, opacity=0.8] 
(p1) .. controls +(0.3,0) and +(0,-0.1) ..  (p2)
-- (p2) .. controls +(0,0.9) and +(0,0.9) ..  (p5)
-- (p5) .. controls +(0,-0.1) and +(-0.5,0) ..  (p4) 
-- (p4) -- ($(p4)+(s)$) -- ($(p1)+(s)$);
%
\draw[thin] ($(p4)+(s)$) -- ($(p1)+(s)$);
\draw[thin] ($(p6)+(s)$) -- ($(p3)+(s)$);
%
\draw[very thick, color=blue!50!black] (p2) .. controls +(0,0.9) and +(0,0.9) ..  (p5);
\draw[very thick, color=blue!50!black] ($(p2)-(s2)$) .. controls +(0,-0.9) and +(0,-0.9) ..  ($(p5)-(s2)$);
\draw[thin] (p1) -- ($(p1)+(s)$);
\draw[thin] (p3) -- ($(p3)+(s)$);
\draw[thin] (p4) -- ($(p4)+(s)$);
\fill[color=blue!50!black] (1,0.87) circle (3pt) node[above] (0up) { {\scriptsize$\varphi$} };
\fill[color=blue!50!black] (1,-0.46) circle (3pt) node[below] (0up) { {\scriptsize$\psi$} };
\fill[blue!50!black] (1.7,0.25) node[line width=0pt] (ore) {{\tiny$X$}};
\fill[blue!50!black] (0.3,0.25) node[line width=0pt] (ore) {{\tiny$Y$}};
\fill[red!80!black] (0.25,1.3) circle (0pt) node {{\scriptsize $u$}};
\fill[red!80!black] (-0.3,1.7) circle (0pt) node {{\scriptsize $v$}};
\fill[black] (0.2,-1.65) node[line width=0pt] (ore) {{\tiny$a$}};
\fill[black] (-0.25,-1.3) node[line width=0pt] (ore) {{\tiny$b$}};
\fill[black] (-0.65,-0.95) node[line width=0pt] (ore) {{\tiny$a$}};
\end{tikzpicture}
&
=
\begin{tikzpicture}[thick,scale=0.75,color=black, baseline=0cm]
\coordinate (p1) at (0,0);
\coordinate (p2) at (0.5,0.2);
\coordinate (p3) at (-0.5,0.4);
\coordinate (p4) at (1.5,0);
\coordinate (p5) at (0.5,0.2);
\coordinate (p6) at (1.0,0.4);
\coordinate (s) at ($(0,1.5)$);
\coordinate (s2) at ($(0,0)$);
\coordinate (d) at ($(4,0)$);
\coordinate (h) at ($(0,2.5)$);
\coordinate (t) at ($(0.1,0)$);
\coordinate (u) at ($(0,-1.1)$);
%
\fill [orange!40!white, opacity=0.8] 
($(p3)+(s)$) -- ($(p6)+(s)$) -- ($(p6)-(s)$) -- ($(p3)-(s)$);
\draw[thin] ($(p6)-(s)-(s2)$) -- ($(p3)-(s)-(s2)$);	
\draw[thin] ($(p6)-(s2)$) -- ($(p6)-(s)-(s2)$);
\draw[thin] (p6) -- ($(p6)+(s)$);
\fill [orange!20!white, opacity=0.8] 
($(p1)+(s)$) -- ($(p4)+(s)$) -- ($(p4)-(s)$) -- ($(p1)-(s)$);
%
%
\draw[thin] ($(p4)-(s)-(s2)$) -- ($(p1)-(s)-(s2)$); 
%
\draw[thin] ($(p1)-(s2)$) -- ($(p1)-(s)-(s2)$);
\draw[thin] ($(p3)-(s2)$) -- ($(p3)-(s)-(s2)$);
\draw[thin] ($(p4)-(s2)$) -- ($(p4)-(s)-(s2)$);
\draw[thin] ($(p4)+(s)$) -- ($(p1)+(s)$);
\draw[thin] ($(p6)+(s)$) -- ($(p3)+(s)$);
%
\draw[thin] (p1) -- ($(p1)+(s)$);
\draw[thin] (p3) -- ($(p3)+(s)$);
\draw[thin] (p4) -- ($(p4)+(s)$);
\fill[red!80!black] (0.25,1.3) circle (0pt) node {{\scriptsize $u$}};
\fill[red!80!black] (-0.3,1.7) circle (0pt) node {{\scriptsize $v$}};
\fill[black] (0.2,-1.65) node[line width=0pt] (ore) {{\tiny$a$}};
\fill[black] (-0.25,-1.3) node[line width=0pt] (ore) {{\tiny$b$}};
\fill[black] (-0.65,-0.95) node[line width=0pt] (ore) {{\tiny$a$}};
\end{tikzpicture}
= 1_{1_{1_{v\btimes u}}}
\, , 
\\
\begin{tikzpicture}[thick,scale=0.75,color=black, baseline=-1.05cm]
%
\coordinate (p1) at (0,0);
\coordinate (p2) at (1,0);
\fill [orange!40!white, opacity=0.8] 
(p1) .. controls +(0,-0.5) and +(0,-0.5) ..  (p2)
-- (p2) .. controls +(0,0.5) and +(0,0.5) ..  (p1)
;
\fill [orange!20!white, opacity=0.8] 
(p1) .. controls +(0,-0.5) and +(0,-0.5) ..  (p2)
-- (p2) .. controls +(0,-1.2) and +(0,-1.2) ..  (p1)
;
\draw[very thick, color=blue!50!black] (p1) .. controls +(0,-1.2) and +(0,-1.2) ..  (p2); 
\draw[thin] (p1) .. controls +(0,0.5) and +(0,0.5) ..  (p2); 
\draw[thin] (p1) .. controls +(0,-0.5) and +(0,-0.5) ..  (p2); 
%
\fill[red!80!black] (0.5,-0.35) circle (0pt) node[below] {{\scriptsize $u$}};
\fill[red!80!black] (0.5,0.3) circle (0pt) node[below] {{\scriptsize $v$}};
%
%
\coordinate (p1) at (0,-2.5);
\coordinate (p2) at (1,-2.5);
\fill [orange!40!white, opacity=0.8] 
(p1) .. controls +(0,-0.5) and +(0,-0.5) ..  (p2)
-- (p2) .. controls +(0,0.5) and +(0,0.5) ..  (p1);
\draw[thin] (p1) .. controls +(0,0.5) and +(0,0.5) ..  (p2); 
\fill[red!80!black] (0.5,-1.7) circle (0pt) node[below] {{\scriptsize $v$}};
%
\fill [orange!20!white, opacity=0.8] 
(p1) .. controls +(0,-0.5) and +(0,-0.5) ..  (p2)
-- (p2) .. controls +(0,1.2) and +(0,1.2) ..  (p1);
\draw[thin] (p1) .. controls +(0,-0.5) and +(0,-0.5) ..  (p2); 
\draw[very thick, color=blue!50!black] (p1) .. controls +(0,1.2) and +(0,1.2) ..  (p2); 
%
\fill[red!80!black] (0.5,-2.2) circle (0pt) node[below] {{\scriptsize $u$}};
\fill[color=blue!50!black] (0.5,-0.89) circle (3pt) node[right] (0up) { };
\fill[color=blue!50!black] (0.45,-1.05) circle (0pt) node[right] (0up) { {\scriptsize$\xi$} };
\fill[color=blue!50!black] (0.5,-1.6) circle (3pt) node[left] (0up) { };
\fill[color=blue!50!black] (0.53,-1.45) circle (0pt) node[left] (0up) { {\scriptsize$\zeta$} };
\fill[blue!50!black] (-0.15,-0.6) node[line width=0pt] (ore) {{\tiny$X$}};
\fill[blue!50!black] (1.1,-0.6) node[line width=0pt] (ore) {{\tiny$Y$}};
\fill[blue!50!black] (-0.15,-2) node[line width=0pt] (ore) {{\tiny$X$}};
\fill[blue!50!black] (1.1,-2) node[line width=0pt] (ore) {{\tiny$Y$}};
\fill[black] (-0.15,-1.2) node[line width=0pt] (ore) {{\tiny$a$}};
\end{tikzpicture}
&=
\begin{tikzpicture}[thick,scale=0.75,color=black, baseline=-1.05cm]
%
\coordinate (p1) at (0,0);
\coordinate (p2) at (1,0);
\fill [orange!40!white, opacity=0.8] 
(p1) .. controls +(0,-0.5) and +(0,-0.5) ..  (p2)
-- (p2) .. controls +(0,0.5) and +(0,0.5) ..  (p1);
%
\fill[red!80!black] (0.5,0.3) circle (0pt) node[below] {{\scriptsize $v$}};
%
%
\coordinate (q1) at (0,-2.5);
\coordinate (q2) at (1,-2.5);
\fill [orange!40!white, opacity=0.8] 
(q1) .. controls +(0,-0.5) and +(0,-0.5) ..  (q2)
-- (q2) .. controls +(0,0.5) and +(0,0.5) ..  (q1);
\draw[thin] (q1) .. controls +(0,0.5) and +(0,0.5) ..  (q2); 
%
\fill [orange!20!white, opacity=0.8] 
(q1) .. controls +(0,-0.5) and +(0,-0.5) ..  (q2)
-- (p2) .. controls +(0,-0.5) and +(0,-0.5) ..  (p1);
\draw[thin] (q1) .. controls +(0,-0.5) and +(0,-0.5) ..  (q2); 
\draw[thin] (p1) .. controls +(0,-0.5) and +(0,-0.5) ..  (p2); 
\draw[thin] (p1) .. controls +(0,+0.5) and +(0,+0.5) ..  (p2); 
\draw[very thick, color=blue!50!black] (q1) --  (p1); 
\draw[very thick, color=blue!50!black] (q2) --  (p2); 
%
\fill[red!80!black] (0.5,-1) circle (0pt) node[below] {{\scriptsize $u$}};
\fill[blue!50!black] (-0.2,-0.6) node[line width=0pt] (ore) {{\tiny$X$}};
\fill[blue!50!black] (1.2,-0.6) node[line width=0pt] (ore) {{\tiny$Y$}};
\fill[black] (-0.2,-1.2) node[line width=0pt] (ore) {{\tiny$a$}};
\end{tikzpicture}
= 
1_{X\otimes Y}
\, , 
& 
\begin{tikzpicture}[thick,scale=0.75,color=black, baseline=-0.1cm]
%
\coordinate (p1) at (0,0);
\coordinate (p2) at (1,0);
%
%
%
\fill[red!80!black] (0.5,0.5) circle (0pt) node[below] {{\tiny $v$}};
%
\fill [orange!20!white, opacity=0.8] 
(p1) .. controls +(0,-1.2) and +(0,-1.2) ..  (p2)
-- (p2) .. controls +(0,1.2) and +(0,1.2) ..  (p1);
\draw[very thick, color=blue!50!black] (p1) .. controls +(0,1.2) and +(0,1.2) ..  (p2); 
\draw[very thick, color=blue!50!black] (p1) .. controls +(0,-1.2) and +(0,-1.2) ..  (p2); 
%
\fill[color=blue!50!black] (0.5,-0.89) circle (3pt) node[right] (0up) { };
\fill[color=blue!50!black] (0.45,-1.05) circle (0pt) node[right] (0up) { {\scriptsize$\xi$} };
\fill[color=blue!50!black] (0.5,0.89) circle (3pt) node[left] (0up) { };
\fill[color=blue!50!black] (0.53,1.05) circle (0pt) node[left] (0up) { {\scriptsize$\zeta$} };
\fill[blue!50!black] (-0.2,-0.1) node[line width=0pt] (ore) {{\tiny$X$}};
\fill[blue!50!black] (1.2,-0.1) node[line width=0pt] (ore) {{\tiny$Y$}};
\fill[red!80!black] (0.5,0) circle (0pt) node[below] {{\scriptsize $u$}};
\fill[black] (-0.2,-0.7) node[line width=0pt] (ore) {{\tiny$a$}};
\end{tikzpicture}
&
=
1_{1_{1_a}}
\label{eq:3dBubble1}
\, , 
\\
\hspace{-0.5em}
\begin{tikzpicture}[thick,scale=0.75,color=black, baseline=1.15cm]
\coordinate (p1) at (0,0);
\coordinate (p2) at (0.5,0.2);
\coordinate (p3) at (-0.5,0.4);
\coordinate (p4) at (2.5,0);
\coordinate (p5) at (1.5,0.2);
\coordinate (p6) at (2.0,0.4);
\coordinate (s) at ($(0,1.5)$);
\coordinate (s2) at ($(0,-3)$);
\coordinate (d) at ($(4,0)$);
\coordinate (h) at ($(0,2.5)$);
\coordinate (t) at ($(0.1,0)$);
\coordinate (u) at ($(0,-1.1)$);
%
%
%
\fill[black] (2.2,0.2) node[line width=0pt] (ore) {{\tiny$b$}};
%
\fill [orange!20!white, opacity=0.8] 
(p3) .. controls +(0.5,0) and +(0,0.1) ..  (p2)
-- (p2) .. controls +(0,0.9) and +(0,0.9) ..  (p5)
-- (p5) .. controls +(0,0.1) and +(-0.3,0) .. (p6)
-- (p6) -- ($(p6)+(s)$) -- ($(p3)+(s)$);
%
%
\draw[thin] (p2) .. controls +(0,0.1) and +(0.5,0) ..  (p3); 
\draw[thin] (p6) .. controls +(-0.3,0) and +(0,0.1) ..  (p5); 
\draw[thin] (p6) -- ($(p6)+(s)$);
%
\fill [orange!40!white, opacity=0.8] 
(p1) .. controls +(0.3,0) and +(0,-0.1) ..  (p2)
-- (p2) .. controls +(0,0.9) and +(0,0.9) ..  (p5)
-- (p5) .. controls +(0,-0.1) and +(-0.5,0) ..  (p4) 
-- (p4) -- ($(p4)+(s)$) -- ($(p1)+(s)$);
%
\draw[thin] (p1) .. controls +(0.3,0) and +(0,-0.1) ..  (p2); 
\draw[thin] (p5) .. controls +(0,-0.1) and +(-0.5,0) ..  (p4); 
%
\draw[very thick, color=blue!50!black] (p2) .. controls +(0,0.9) and +(0,0.9) ..  (p5);
\draw[thin] (p1) -- ($(p1)+(s)$);
\draw[thin] (p3) -- ($(p3)+(s)$);
\draw[thin] (p4) -- ($(p4)+(s)$);
%
%
%
\fill [orange!20!white, opacity=0.8] 
($(p3)-(s2)$) .. controls +(0.5,0) and +(0,0.1) ..  ($(p2)-(s2)$)
-- ($(p2)-(s2)$) .. controls +(0,-0.9) and +(0,-0.9) ..  ($(p5)-(s2)$)
-- ($(p5)-(s2)$) .. controls +(0,0.1) and +(-0.3,0) .. ($(p6)-(s2)$)
-- ($(p6)-(s2)$) -- ($(p6)-(s)-(s2)$) -- ($(p3)-(s)-(s2)$);
%
\draw[thin] ($(p2)-(s2)$) .. controls +(0,0.1) and +(0.5,0) .. ($(p3)-(s2)$); 
\draw[thin] ($(p6)-(s2)$) .. controls +(-0.3,0) and +(0,0.1) .. ($(p5)-(s2)$); 
\draw[thin] ($(p6)-(s2)$) -- ($(p6)-(s)-(s2)$);
%
\fill [orange!40!white, opacity=0.8] 
($(p1)-(s2)$) .. controls +(0.3,0) and +(0,-0.1) ..  ($(p2)-(s2)$)
-- ($(p2)-(s2)$) .. controls +(0,-0.9) and +(0,-0.9) ..  ($(p5)-(s2)$)
-- ($(p5)-(s2)$) .. controls +(0,-0.1) and +(-0.5,0) ..  ($(p4)-(s2)$) 
-- ($(p4)-(s2)$) -- ($(p4)-(s)-(s2)$) -- ($(p1)-(s)-(s2)$);
%
\draw[thin] ($(p1)-(s2)$) .. controls +(0.3,0) and +(0,-0.1) ..  ($(p2)-(s2)$); 
\draw[thin] ($(p5)-(s2)$) .. controls +(0,-0.1) and +(-0.5,0) ..  ($(p4)-(s2)$); 
%
\draw[very thick, color=blue!50!black] ($(p2)-(s2)$) .. controls +(0,-0.9) and +(0,-0.9) ..  ($(p5)-(s2)$);
\draw[thin] ($(p1)-(s2)$) -- ($(p1)-(s)-(s2)$);
\draw[thin] ($(p3)-(s2)$) -- ($(p3)-(s)-(s2)$);
\draw[thin] ($(p4)-(s2)$) -- ($(p4)-(s)-(s2)$);
\fill[red!80!black] (0.3,1.7) circle (0pt) node {{\scriptsize $v$}};
\fill[red!80!black] (-0.3,2.5) circle (0pt) node {{\scriptsize $u$}};
\fill[red!80!black] (1.8,3.2) circle (0pt) node {{\scriptsize $u$}};
\fill[color=blue!50!black] (1,0.87) circle (3pt) node[above] (0up) { {\scriptsize$\varphi'$} };
\fill[color=blue!50!black] (1,2.53) circle (3pt) node[below] (0up) { {\scriptsize$\psi'$} };
\fill[blue!50!black] (1.6,2.65) node[line width=0pt] (ore) {{\tiny$X'$}};
\fill[blue!50!black] (0.4,2.65) node[line width=0pt] (ore) {{\tiny$Y'$}};
\fill[blue!50!black] (1.65,0.65) node[line width=0pt] (ore) {{\tiny$X'$}};
\fill[blue!50!black] (0.4,0.65) node[line width=0pt] (ore) {{\tiny$Y'$}};
\fill[black] (1,0.2) node[line width=0pt] (ore) {{\tiny$b\vphantom{b}$}};
\fill[black] (-0.3,0.2) node[line width=0pt] (ore) {{\tiny$a\vphantom{b}$}};
\end{tikzpicture}
& =
\begin{tikzpicture}[thick,scale=0.75,color=black, baseline=0cm]
\coordinate (p1) at (0,0);
\coordinate (p2) at (0.5,0.2);
\coordinate (p3) at (-0.5,0.4);
\coordinate (p4) at (2.5,0);
\coordinate (p5) at (1.5,0.2);
\coordinate (p6) at (2.0,0.4);
\coordinate (s) at ($(0,1.5)$);
\coordinate (s2) at ($(0,0)$);
\coordinate (d) at ($(0,0)$);
\coordinate (h) at ($(0,2.5)$);
\coordinate (t) at ($(0.1,0)$);
\coordinate (u) at ($(0,-1.1)$);
\fill[black] (2.3,-1.3) node[line width=0pt] (ore) {{\tiny$b$}};
%
\fill [orange!20!white, opacity=0.8] 
($(p2)+(d)-(s)-(s2)$) .. controls +(0,0.1) and +(0.5,0) ..  ($(p3)+(d)-(s)-(s2)$)
-- ($(p3)+(d)-(s)-(s2)$) -- ($(p3)+(d)+(s)$) 
-- ($(p3)+(d)+(s)$) .. controls +(0.5,0) and +(0,0.1) ..($(p2)+(d)+(s)$); 
%
\draw[thin] ($(p2)+(d)-(s)-(s2)$) .. controls +(0,0.1) and +(0.5,0) ..  ($(p3)+(d)-(s)-(s2)$); 
\draw[thin] ($(p2)+(d)+(s)$) .. controls +(0,0.1) and +(0.5,0) ..  ($(p3)+(d)+(s)$); 
%
\fill [orange!40!white, opacity=0.8] 
($(p1)+(d)-(s)-(s2)$) .. controls +(0.3,0) and +(0,-0.1) .. ($(p2)+(d)-(s)-(s2)$)
-- ($(p2)+(d)-(s)-(s2)$) -- ($(p2)+(d)+(s)$)
-- ($(p2)+(d)+(s)$) .. controls +(0,-0.1) and +(0.3,0) .. ($(p1)+(d)+(s)$);
%
\draw[thin] ($(p1)+(d)-(s)-(s2)$) .. controls +(0.3,0) and +(0,-0.1) ..  ($(p2)+(d)-(s)-(s2)$); 
\draw[thin] ($(p1)+(d)+(s)$) .. controls +(0.3,0) and +(0,-0.1) ..  ($(p2)+(d)+(s)$); 
%
\draw[thin] ($(p1)+(d)-(s)-(s2)$) -- ($(p1)+(d)+(s)$);
\draw[thin] ($(p3)+(d)-(s)-(s2)$) -- ($(p3)+(d)+(s)$);
\draw[very thick, color=blue!50!black] ($(p2)+(d)-(s)-(s2)$) -- ($(p2)+(d)+(s)$);
%
%
\fill [orange!20!white, opacity=0.8] 
($(p5)-(d)-(s)-(s2)$) .. controls +(0,0.1) and +(-0.3,0) .. ($(p6)-(d)-(s)-(s2)$)
-- ($(p6)-(d)-(s)-(s2)$) -- ($(p6)-(d)+(s)$)
-- ($(p6)-(d)+(s)$) .. controls +(-0.3,0) and +(0,0.1) .. ($(p5)-(d)+(s)$);
%
%
\draw[thin] ($(p6)-(d)-(s)-(s2)$) -- ($(p6)-(d)+(s)$);
%
\draw[thin] ($(p5)-(d)-(s)-(s2)$) .. controls +(0,0.1) and +(-0.3,0) ..  ($(p6)-(d)-(s)-(s2)$); 
\draw[thin] ($(p5)-(d)+(s)$) .. controls +(0,0.1) and +(-0.3,0) ..  ($(p6)-(d)+(s)$); 
colouring front: 
\fill [orange!40!white, opacity=0.8] 
($(p5)-(d)-(s)-(s2)$) .. controls +(0,-0.1) and +(-0.5,0) ..  ($(p4)-(d)-(s)-(s2)$)
-- ($(p4)-(d)-(s)-(s2)$) -- ($(p4)-(d)+(s)$) 
-- ($(p4)-(d)+(s)$) .. controls +(-0.5,0) and +(0,-0.1) ..($(p5)-(d)+(s)$); 
%
\draw[thin] ($(p4)-(d)+(s)$) .. controls +(-0.5,0) and +(0,-0.1) ..($(p5)-(d)+(s)$);
\draw[thin] ($(p4)-(d)-(s)-(s2)$) .. controls +(-0.5,0) and +(0,-0.1) ..($(p5)-(d)-(s)-(s2)$);
%
\draw[thin] ($(p4)-(d)-(s)-(s2)$) -- ($(p4)-(d)+(s)$);
\draw[very thick, color=blue!50!black] ($(p5)-(d)-(s)-(s2)$) -- ($(p5)-(d)+(s)$);
\fill[red!80!black] (0.25,1.0) circle (0pt) node {{\scriptsize $v$}};
\fill[red!80!black] (-0.25,1.7) circle (0pt) node {{\scriptsize $u$}};
\fill[red!80!black] (1.8,1.75) circle (0pt) node {{\scriptsize $u$}};
\fill[red!80!black] (2.25,1.0) circle (0pt) node {{\scriptsize $v$}};
\fill[blue!50!black] (1.25,0.3) node[line width=0pt] (ore) {{\tiny$X'$}};
\fill[blue!50!black] (0.75,0.3) node[line width=0pt] (ore) {{\tiny$Y'$}};
\fill[black] (1,-1.3) node[line width=0pt] (ore) {{\tiny$b\vphantom{b}$}};
\fill[black] (-0.3,-1.3) node[line width=0pt] (ore) {{\tiny$a\vphantom{b}$}};
\end{tikzpicture}
= 1_{1_{Y'\otimes X'}}
\, , 
&
\begin{tikzpicture}[thick,scale=0.75,color=black, baseline=0cm]
\coordinate (p1) at (0,0);
\coordinate (p2) at (0.5,0.2);
\coordinate (p3) at (-0.5,0.4);
\coordinate (p4) at (2.5,0);
\coordinate (p5) at (1.5,0.2);
\coordinate (p6) at (2.0,0.4);
\coordinate (s) at ($(0,1.5)$);
\coordinate (s2) at ($(0,0)$);
\coordinate (d) at ($(4,0)$);
\coordinate (h) at ($(0,2.5)$);
\coordinate (t) at ($(0.1,0)$);
\coordinate (u) at ($(0,-1.1)$);
%
%
\fill [orange!20!white, opacity=0.8] 
($(p3)-(s2)$) .. controls +(0.5,0) and +(0,0.1) ..  ($(p2)-(s2)$)
-- ($(p2)-(s2)$) .. controls +(0,-0.9) and +(0,-0.9) ..  ($(p5)-(s2)$)
-- ($(p5)-(s2)$) .. controls +(0,0.1) and +(-0.3,0) .. ($(p6)-(s2)$)
-- ($(p6)-(s2)$) -- ($(p6)-(s)-(s2)$) -- ($(p3)-(s)-(s2)$);
%
\draw[thin] ($(p6)-(s)-(s2)$) -- ($(p3)-(s)-(s2)$);	
\draw[thin] ($(p6)-(s2)$) -- ($(p6)-(s)-(s2)$);
%
\fill [orange!40!white, opacity=0.8] 
($(p1)-(s2)$) .. controls +(0.3,0) and +(0,-0.1) ..  ($(p2)-(s2)$)
-- ($(p2)-(s2)$) .. controls +(0,-0.9) and +(0,-0.9) ..  ($(p5)-(s2)$)
-- ($(p5)-(s2)$) .. controls +(0,-0.1) and +(-0.5,0) ..  ($(p4)-(s2)$) 
-- ($(p4)-(s2)$) -- ($(p4)-(s)-(s2)$) -- ($(p1)-(s)-(s2)$);
%
\draw[thin] ($(p4)-(s)-(s2)$) -- ($(p1)-(s)-(s2)$); 
%
\draw[very thick, color=blue!50!black] ($(p2)-(s2)$) .. controls +(0,-0.9) and +(0,-0.9) ..  ($(p5)-(s2)$);
\draw[thin] ($(p1)-(s2)$) -- ($(p1)-(s)-(s2)$);
\draw[thin] ($(p3)-(s2)$) -- ($(p3)-(s)-(s2)$);
\draw[thin] ($(p4)-(s2)$) -- ($(p4)-(s)-(s2)$);
%
%
\fill [orange!20!white, opacity=0.8] 
(p3) .. controls +(0.5,0) and +(0,0.1) ..  (p2)
-- (p2) .. controls +(0,0.9) and +(0,0.9) ..  (p5)
-- (p5) .. controls +(0,0.1) and +(-0.3,0) .. (p6)
-- (p6) -- ($(p6)+(s)$) -- ($(p3)+(s)$);
%
\draw[thin] (p6) -- ($(p6)+(s)$);
%
\fill [orange!40!white, opacity=0.8] 
(p1) .. controls +(0.3,0) and +(0,-0.1) ..  (p2)
-- (p2) .. controls +(0,0.9) and +(0,0.9) ..  (p5)
-- (p5) .. controls +(0,-0.1) and +(-0.5,0) ..  (p4) 
-- (p4) -- ($(p4)+(s)$) -- ($(p1)+(s)$);
%
\draw[thin] ($(p4)+(s)$) -- ($(p1)+(s)$);
\draw[thin] ($(p6)+(s)$) -- ($(p3)+(s)$);
%
\draw[very thick, color=blue!50!black] (p2) .. controls +(0,0.9) and +(0,0.9) ..  (p5);
\draw[very thick, color=blue!50!black] ($(p2)-(s2)$) .. controls +(0,-0.9) and +(0,-0.9) ..  ($(p5)-(s2)$);
\draw[thin] (p1) -- ($(p1)+(s)$);
\draw[thin] (p3) -- ($(p3)+(s)$);
\draw[thin] (p4) -- ($(p4)+(s)$);
\fill[color=blue!50!black] (1,0.87) circle (3pt) node[above] (0up) { {\scriptsize$\varphi'$} };
\fill[color=blue!50!black] (1,-0.46) circle (3pt) node[below] (0up) { {\scriptsize$\psi'$} };
\fill[blue!50!black] (1.75,0.25) node[line width=0pt] (ore) {{\tiny$X'$}};
\fill[blue!50!black] (0.25,0.25) node[line width=0pt] (ore) {{\tiny$Y'$}};
\fill[red!80!black] (0.25,1.3) circle (0pt) node {{\scriptsize $v$}};
\fill[red!80!black] (-0.3,1.7) circle (0pt) node {{\scriptsize $u$}};
\fill[black] (0.2,-1.65) node[line width=0pt] (ore) {{\tiny$b$}};
\fill[black] (-0.25,-1.3) node[line width=0pt] (ore) {{\tiny$a$}};
\fill[black] (-0.65,-0.95) node[line width=0pt] (ore) {{\tiny$b$}};
\end{tikzpicture}
&
=
\begin{tikzpicture}[thick,scale=0.75,color=black, baseline=0cm]
\coordinate (p1) at (0,0);
\coordinate (p2) at (0.5,0.2);
\coordinate (p3) at (-0.5,0.4);
\coordinate (p4) at (1.5,0);
\coordinate (p5) at (0.5,0.2);
\coordinate (p6) at (1.0,0.4);
\coordinate (s) at ($(0,1.5)$);
\coordinate (s2) at ($(0,0)$);
\coordinate (d) at ($(4,0)$);
\coordinate (h) at ($(0,2.5)$);
\coordinate (t) at ($(0.1,0)$);
\coordinate (u) at ($(0,-1.1)$);
%
\fill [orange!20!white, opacity=0.8] 
($(p3)+(s)$) -- ($(p6)+(s)$) -- ($(p6)-(s)$) -- ($(p3)-(s)$);
\draw[thin] ($(p6)-(s)-(s2)$) -- ($(p3)-(s)-(s2)$);	
\draw[thin] ($(p6)-(s2)$) -- ($(p6)-(s)-(s2)$);
\draw[thin] (p6) -- ($(p6)+(s)$);
\fill [orange!40!white, opacity=0.8] 
($(p1)+(s)$) -- ($(p4)+(s)$) -- ($(p4)-(s)$) -- ($(p1)-(s)$);
%
%
\draw[thin] ($(p4)-(s)-(s2)$) -- ($(p1)-(s)-(s2)$); 
%
\draw[thin] ($(p1)-(s2)$) -- ($(p1)-(s)-(s2)$);
\draw[thin] ($(p3)-(s2)$) -- ($(p3)-(s)-(s2)$);
\draw[thin] ($(p4)-(s2)$) -- ($(p4)-(s)-(s2)$);
\draw[thin] ($(p4)+(s)$) -- ($(p1)+(s)$);
\draw[thin] ($(p6)+(s)$) -- ($(p3)+(s)$);
%
\draw[thin] (p1) -- ($(p1)+(s)$);
\draw[thin] (p3) -- ($(p3)+(s)$);
\draw[thin] (p4) -- ($(p4)+(s)$);
\fill[red!80!black] (0.25,1.3) circle (0pt) node {{\scriptsize $v$}};
\fill[red!80!black] (-0.3,1.7) circle (0pt) node {{\scriptsize $u$}};
\fill[black] (0.2,-1.65) node[line width=0pt] (ore) {{\tiny$b$}};
\fill[black] (-0.25,-1.3) node[line width=0pt] (ore) {{\tiny$a$}};
\fill[black] (-0.65,-0.95) node[line width=0pt] (ore) {{\tiny$b$}};
\end{tikzpicture}
= 1_{1_{1_{u\btimes v}}}
\, , 
\\
\begin{tikzpicture}[thick,scale=0.75,color=black, baseline=-1.05cm]
%
\coordinate (p1) at (0,0);
\coordinate (p2) at (1,0);
\fill [orange!20!white, opacity=0.8] 
(p1) .. controls +(0,-0.5) and +(0,-0.5) ..  (p2)
-- (p2) .. controls +(0,0.5) and +(0,0.5) ..  (p1)
;
\fill [orange!40!white, opacity=0.8] 
(p1) .. controls +(0,-0.5) and +(0,-0.5) ..  (p2)
-- (p2) .. controls +(0,-1.2) and +(0,-1.2) ..  (p1)
;
\draw[very thick, color=blue!50!black] (p1) .. controls +(0,-1.2) and +(0,-1.2) ..  (p2); 
\draw[thin] (p1) .. controls +(0,0.5) and +(0,0.5) ..  (p2); 
\draw[thin] (p1) .. controls +(0,-0.5) and +(0,-0.5) ..  (p2); 
%
\fill[red!80!black] (0.5,-0.35) circle (0pt) node[below] {{\scriptsize $v$}};
\fill[red!80!black] (0.5,0.3) circle (0pt) node[below] {{\scriptsize $u$}};
%
%
\coordinate (p1) at (0,-2.5);
\coordinate (p2) at (1,-2.5);
\fill [orange!20!white, opacity=0.8] 
(p1) .. controls +(0,-0.5) and +(0,-0.5) ..  (p2)
-- (p2) .. controls +(0,0.5) and +(0,0.5) ..  (p1);
\draw[thin] (p1) .. controls +(0,0.5) and +(0,0.5) ..  (p2); 
\fill[red!80!black] (0.5,-1.7) circle (0pt) node[below] {{\scriptsize $u$}};
%
\fill [orange!40!white, opacity=0.8] 
(p1) .. controls +(0,-0.5) and +(0,-0.5) ..  (p2)
-- (p2) .. controls +(0,1.2) and +(0,1.2) ..  (p1);
\draw[thin] (p1) .. controls +(0,-0.5) and +(0,-0.5) ..  (p2); 
\draw[very thick, color=blue!50!black] (p1) .. controls +(0,1.2) and +(0,1.2) ..  (p2); 
%
\fill[red!80!black] (0.5,-2.2) circle (0pt) node[below] {{\scriptsize $v$}};
\fill[color=blue!50!black] (0.5,-0.89) circle (3pt) node[right] (0up) { };
\fill[color=blue!50!black] (0.45,-1.05) circle (0pt) node[right] (0up) { {\scriptsize$\xi'$} };
\fill[color=blue!50!black] (0.5,-1.6) circle (3pt) node[left] (0up) { };
\fill[color=blue!50!black] (0.53,-1.45) circle (0pt) node[left] (0up) { {\scriptsize$\zeta'$} };
\fill[blue!50!black] (-0.1,-0.6) node[line width=0pt] (ore) {{\tiny$X'$}};
\fill[blue!50!black] (1.15,-0.6) node[line width=0pt] (ore) {{\tiny$Y'$}};
\fill[blue!50!black] (-0.1,-2) node[line width=0pt] (ore) {{\tiny$X'$}};
\fill[blue!50!black] (1.15,-2) node[line width=0pt] (ore) {{\tiny$Y'$}};
\fill[black] (-0.15,-1.2) node[line width=0pt] (ore) {{\tiny$b$}};
\end{tikzpicture}
&=
\begin{tikzpicture}[thick,scale=0.75,color=black, baseline=-1.05cm]
%
\coordinate (p1) at (0,0);
\coordinate (p2) at (1,0);
\fill [orange!20!white, opacity=0.8] 
(p1) .. controls +(0,-0.5) and +(0,-0.5) ..  (p2)
-- (p2) .. controls +(0,0.5) and +(0,0.5) ..  (p1);
%
\fill[red!80!black] (0.5,0.3) circle (0pt) node[below] {{\scriptsize $u$}};
%
%
\coordinate (q1) at (0,-2.5);
\coordinate (q2) at (1,-2.5);
\fill [orange!20!white, opacity=0.8] 
(q1) .. controls +(0,-0.5) and +(0,-0.5) ..  (q2)
-- (q2) .. controls +(0,0.5) and +(0,0.5) ..  (q1);
\draw[thin] (q1) .. controls +(0,0.5) and +(0,0.5) ..  (q2); 
%
\fill [orange!40!white, opacity=0.8] 
(q1) .. controls +(0,-0.5) and +(0,-0.5) ..  (q2)
-- (p2) .. controls +(0,-0.5) and +(0,-0.5) ..  (p1);
\draw[thin] (q1) .. controls +(0,-0.5) and +(0,-0.5) ..  (q2); 
\draw[thin] (p1) .. controls +(0,-0.5) and +(0,-0.5) ..  (p2); 
\draw[thin] (p1) .. controls +(0,+0.5) and +(0,+0.5) ..  (p2); 
\draw[very thick, color=blue!50!black] (q1) --  (p1); 
\draw[very thick, color=blue!50!black] (q2) --  (p2); 
%
\fill[red!80!black] (0.5,-1) circle (0pt) node[below] {{\scriptsize $v$}};
\fill[blue!50!black] (-0.15,-0.6) node[line width=0pt] (ore) {{\tiny$X'$}};
\fill[blue!50!black] (1.25,-0.6) node[line width=0pt] (ore) {{\tiny$Y'$}};
\fill[black] (-0.2,-1.2) node[line width=0pt] (ore) {{\tiny$b$}};
\end{tikzpicture}
= 
1_{X'\otimes Y'}
\, , 
& 
\begin{tikzpicture}[thick,scale=0.75,color=black, baseline=-0.1cm]
%
\coordinate (p1) at (0,0);
\coordinate (p2) at (1,0);
%
%
%
\fill[red!80!black] (0.5,0.5) circle (0pt) node[below] {{\tiny $v$}};
%
\fill [orange!40!white, opacity=0.8] 
(p1) .. controls +(0,-1.2) and +(0,-1.2) ..  (p2)
-- (p2) .. controls +(0,1.2) and +(0,1.2) ..  (p1);
\draw[very thick, color=blue!50!black] (p1) .. controls +(0,1.2) and +(0,1.2) ..  (p2); 
\draw[very thick, color=blue!50!black] (p1) .. controls +(0,-1.2) and +(0,-1.2) ..  (p2); 
%
\fill[color=blue!50!black] (0.5,-0.89) circle (3pt) node[right] (0up) { };
\fill[color=blue!50!black] (0.45,-1.05) circle (0pt) node[right] (0up) { {\scriptsize$\xi'$} };
\fill[color=blue!50!black] (0.5,0.89) circle (3pt) node[left] (0up) { };
\fill[color=blue!50!black] (0.53,1.05) circle (0pt) node[left] (0up) { {\scriptsize$\zeta'$} };
\fill[blue!50!black] (-0.25,-0.1) node[line width=0pt] (ore) {{\tiny$X'$}};
\fill[blue!50!black] (1.25,-0.1) node[line width=0pt] (ore) {{\tiny$Y'$}};
\fill[red!80!black] (0.5,0) circle (0pt) node[below] {{\scriptsize $v$}};
\fill[black] (-0.2,-0.7) node[line width=0pt] (ore) {{\tiny$b$}};
\end{tikzpicture}
&
=
1_{1_{1_b}}
\label{eq:3dBubble2}
\, .
\end{align}
Hence~$u$ and~$v$ are \textsl{weakly inverse} to one another, in the sense that their compositions are equivalent
(in the respective Hom 2-categories) to identity 1-morphisms. 

A weaker notion is that of a \textsl{(3-)condensation of~$a$ onto~$b$}, which consists only of the morphisms $X',Y',\zeta',\xi'$ in~\eqref{eq:table} such that they witness a 2-condensation of $u\btimes v$ onto~$1_b$, cf.\ Section~\ref{subsec:defect2categories}. 
Concretely, this means that only the bubble condition $\zeta'\circ\xi' = 1_{1_{1_b}}$ must hold. 
If moreover~$b$ also condenses onto~$a$, i.e.\ if also $\zeta\circ\xi=1_{1_{1_a}}$, then~$a$ and~$b$ are equivalent in the condensation completion $\textrm{Kar}(\mathcal T)$ of \cite{GaiottoJohnsonFreyd}. 
Analogously to the 2-dimensional case, it is expected that this translates into an equivalence of \textsl{framed} TFTs. 

If the 3-category~$\mathcal T$ comes with the structure of a Gray category with duals, the variant relevant for \textsl{oriented} TFTs is as follows. 
Two objects $a,b\in\mathcal T$ are \textsl{orbifold equivalent} if there is a 1-morphism $u\colon a\to b$ such that $u$-labelled spheres evaluate to the identity:\footnote{These are precisely the {bubble identities} in~\eqref{eq:3dBubble1} and~\eqref{eq:3dBubble2} if $v=u^\#$ and if the 1- and 2-morphisms in~\eqref{eq:3dCalcExample} are the associated duality data, see e.g.\ \cite{Carqueville:2023qrk} for details.}
\begin{equation}
\label{eq:spheres}
\begin{tikzpicture}[thick,scale=0.6, color=green!50!black, baseline=0cm]
\fill[ball color=orange!40!white] (0,0) circle (1.55);
\draw[line width=0pt] (0.5,0.4) node[red!80!black] (D) {{\tiny$u´$}};
\fill[orange!30!white, opacity=0.7] (0,0) circle (1.55);
\draw[line width=0pt] (-0.2,-0.4) node[red!80!black] (D) {{\footnotesize$u^\#$}};
\draw[line width=0pt] (1.3,-1.3) node[black] (D) {{\tiny$b$}};
\end{tikzpicture}
= 
\dim_{\textrm{l}}(\textrm{coev}_u) \stackrel{!}{=} 1_{1_{1_b}}
\, , \quad 
\begin{tikzpicture}[thick,scale=0.6, color=green!50!black, baseline=0cm]
\fill[ball color=orange!20!white] (0,0) circle (1.55);
\draw[line width=0pt] (0.5,0.4) node[red!80!black] (D) {{\tiny$u^\#$}};
\fill[orange!25!white, opacity=0.7] (0,0) circle (1.55);
\draw[line width=0pt] (-0.2,-0.4) node[red!80!black] (D) {{\footnotesize$u$}};
\draw[line width=0pt] (1.3,-1.3) node[black] (D) {{\tiny$a$}};
\end{tikzpicture}
= 
\dim_{\textrm{r}}(\textrm{ev}_u) \stackrel{!}{=} 1_{1_{1_a}}
\, . 
\end{equation}
This corresponds to choosing the data in~\eqref{eq:table} to be appropriate adjunction morphisms in~$\mathcal T$. 
Moreover, one checks that
\begin{equation}
\label{eq:3d-orb-datum}
\mathcal A = 
\left( 
a, \;
A := 
\begin{tikzpicture}[thick,scale=0.75,color=black, baseline=0cm]
\coordinate (p1) at (0,0);
\coordinate (p2) at (0.5,0.2);
\coordinate (p3) at (-0.5,0.4);
\coordinate (p4) at (1.5,0);
\coordinate (p5) at (0.5,0.2);
\coordinate (p6) at (1.0,0.4);
\coordinate (s) at ($(0,1)$);
\coordinate (s2) at ($(0,0)$);
\coordinate (d) at ($(4,0)$);
\coordinate (h) at ($(0,2.5)$);
\coordinate (t) at ($(0.1,0)$);
\coordinate (u) at ($(0,-1.1)$);
%
\fill [orange!40!white, opacity=0.8] 
($(p3)+(s)$) -- ($(p6)+(s)$) -- ($(p6)-(s)$) -- ($(p3)-(s)$);
\draw[thin] ($(p6)-(s)-(s2)$) -- ($(p3)-(s)-(s2)$);	
\draw[thin] ($(p6)-(s2)$) -- ($(p6)-(s)-(s2)$);
\draw[thin] (p6) -- ($(p6)+(s)$);
\fill [orange!20!white, opacity=0.8] 
($(p1)+(s)$) -- ($(p4)+(s)$) -- ($(p4)-(s)$) -- ($(p1)-(s)$);
%
%
\draw[thin] ($(p4)-(s)-(s2)$) -- ($(p1)-(s)-(s2)$); 
%
\draw[thin] ($(p1)-(s2)$) -- ($(p1)-(s)-(s2)$);
\draw[thin] ($(p3)-(s2)$) -- ($(p3)-(s)-(s2)$);
\draw[thin] ($(p4)-(s2)$) -- ($(p4)-(s)-(s2)$);
\draw[thin] ($(p4)+(s)$) -- ($(p1)+(s)$);
\draw[thin] ($(p6)+(s)$) -- ($(p3)+(s)$);
%
\draw[thin] (p1) -- ($(p1)+(s)$);
\draw[thin] (p3) -- ($(p3)+(s)$);
\draw[thin] (p4) -- ($(p4)+(s)$);
\fill[red!80!black] (0.25,0.8) circle (0pt) node {{\scriptsize $u$}};
\fill[red!80!black] (-0.15,1.2) circle (0pt) node {{\scriptsize $u^\#$}};
\fill[black] (0.2,-1.15) node[line width=0pt] (ore) {{\tiny$a$}};
\fill[black] (-0.25,-0.8) node[line width=0pt] (ore) {{\tiny$b$}};
\fill[black] (-0.65,-0.45) node[line width=0pt] (ore) {{\tiny$a$}};
\end{tikzpicture}
\, , \;
\mu := 
\begin{tikzpicture}[thick,scale=0.75,color=black, baseline=0cm]
\coordinate (p1) at (0,0);
\coordinate (p2) at (0.5,0.2);
\coordinate (p3) at (-0.5,0.4);
\coordinate (p4) at (2.5,0);
\coordinate (p5) at (1.5,0.2);
\coordinate (p6) at (2.0,0.4);
\coordinate (q1) at (0,0.6);
\coordinate (q2) at (1.8,0.6);
\coordinate (s) at ($(0,1)$);
\coordinate (s2) at ($(0,0)$);
\coordinate (d) at ($(0,0)$);
\coordinate (h) at ($(0,2.5)$);
\coordinate (t) at ($(0.1,0)$);
\coordinate (u) at ($(0,-1.1)$);
\coordinate (y) at ($(1,-0.8)$);
%
\fill [orange!40!white, opacity=0.8] 
($(q1)+(s)$) -- ($(q2)+(s)$) -- ($(q2)-(s)$) -- ($(q1)-(s)$);
\draw[thin] ($(q1)+(s)$) -- ($(q2)+(s)$) -- ($(q2)-(s)$) -- ($(q1)-(s)$) -- ($(q1)+(s)$);
%
\fill [orange!40!white, opacity=0.8] 
($(p5)-(d)-(s)-(s2)$) .. controls +(0,0.1) and +(-0.3,0) .. ($(p6)-(d)-(s)-(s2)$)
-- ($(p6)-(d)-(s)-(s2)$) -- ($(p6)-(d)+(s)$)
-- ($(p6)-(d)+(s)$) .. controls +(-0.3,0) and +(0,0.1) .. ($(p5)-(d)+(s)$);
%
%
\draw[thin] ($(p6)-(d)-(s)-(s2)$) -- ($(p6)-(d)+(s)$);
%
\draw[thin] ($(p5)-(d)-(s)-(s2)$) .. controls +(0,0.1) and +(-0.3,0) ..  ($(p6)-(d)-(s)-(s2)$); 
\draw[thin] ($(p5)-(d)+(s)$) .. controls +(0,0.1) and +(-0.3,0) ..  ($(p6)-(d)+(s)$); 
\fill[black] (2.2,-0.8) node[line width=0pt] (ore) {{\tiny$a$}};
%
\fill [orange!40!white, opacity=0.8] 
($(p5)-(d)-(s)-(s2)$) .. controls +(0,-0.1) and +(-0.5,0) ..  ($(p4)-(d)-(s)-(s2)$)
-- ($(p4)-(d)-(s)-(s2)$) -- ($(p4)-(d)+(s)$) 
-- ($(p4)-(d)+(s)$) .. controls +(-0.5,0) and +(0,-0.1) ..($(p5)-(d)+(s)$); 
\fill[black] (2.2,-0.8) node[line width=0pt] (ore) {{\tiny$a$}};
%
\fill [orange!40!white, opacity=0.8] 
($(q1)+(s)+(y)$) -- ($(q2)+(s)+(y)$) -- ($(q2)-(s)+(y)$) -- ($(q1)-(s)+(y)$);
%
\draw[thin] ($(p4)-(d)+(s)$) .. controls +(-0.5,0) and +(0,-0.1) ..($(p5)-(d)+(s)$);
\draw[thin] ($(p4)-(d)-(s)-(s2)$) .. controls +(-0.5,0) and +(0,-0.1) ..($(p5)-(d)-(s)-(s2)$);
%
\draw[thin] ($(p4)-(d)-(s)-(s2)$) -- ($(p4)-(d)+(s)$);
\draw[thick, black] ($(p5)-(d)-(s)-(s2)$) -- ($(p5)-(d)+(s)$);
\fill[red!80!black] (1.8,1.2) circle (0pt) node {{\scriptsize $u$}};
\fill[red!80!black] (1.25,-0.9) circle (0pt) node {{\scriptsize $u$}};
\fill[red!80!black] (2.25,0.5) circle (0pt) node {{\scriptsize $u^\#$}};
\fill[black] (0.7,-0.8) node[line width=0pt] (ore) {{\tiny$b\vphantom{b}$}};
\fill[black] (2.2,-1.35) node[line width=0pt] (ore) {{\tiny$a$}};
%
\fill [orange!20!white, opacity=0.8] 
($(q1)+(s)+(y)$) -- ($(q2)+(s)+(y)$) -- ($(q2)-(s)+(y)$) -- ($(q1)-(s)+(y)$);
\draw[thin] ($(q1)+(s)+(y)$) -- ($(q2)+(s)+(y)$) -- ($(q2)-(s)+(y)$) -- ($(q1)-(s)+(y)$) -- ($(q1)+(s)+(y)$);
\fill[red!80!black] (1.25,-0.9) circle (0pt) node {{\scriptsize $u$}};
\fill[red!80!black] (0.4,1.35) circle (0pt) node {{\scriptsize $u^\#$}};
\end{tikzpicture}
\,, \; 
\alpha := 
\begin{tikzpicture}[thick,scale=0.75,color=black, baseline=-0.2cm]
\coordinate (p1) at (0,0);
\coordinate (p2) at (0.5,0.2);
\coordinate (p3) at (-0.5,0.4);
\coordinate (p4) at (2.5,0);
\coordinate (p5) at (1.5,0.2);
\coordinate (p6) at (2.0,0.4);
\coordinate (q1) at (0,0.6);
\coordinate (q2) at (1.8,0.6);
\coordinate (s) at ($(0,1)$);
\coordinate (s2) at ($(0,0)$);
\coordinate (d) at ($(0,0)$);
\coordinate (h) at ($(0,2.5)$);
\coordinate (t) at ($(0.1,0)$);
\coordinate (u) at ($(0,-1.1)$);
\coordinate (y) at ($(1,-0.8)$);
\coordinate (w) at ($(0.75,-0.6)$);
\coordinate (v) at ($(0.75,-0.6)$);
%
\fill [orange!40!white, opacity=0.8] 
($(q1)+(s)$) -- ($(q2)+(s)$) -- ($(q2)-(s)$) -- ($(q1)-(s)$);
\draw[thin] ($(q1)+(s)$) -- ($(q2)+(s)$) -- ($(q2)-(s)$) -- ($(q1)-(s)$) -- ($(q1)+(s)$);
%
\fill [orange!20!white, opacity=0.8] 
($(p5)-(d)-(s)-(s2)-(1,0)$) .. controls +(0,0.1) and +(-0.3,0) .. ($(p6)-(d)-(s)-(s2)$)
-- ($(p6)-(d)-(s)-(s2)$) -- ($(p6)-(d)+(s)$)
-- ($(p6)-(d)+(s)$) .. controls +(-0.3,0) and +(0,0.1) .. ($(p5)-(d)+(s)$);
%
%
\draw[thin] ($(p6)-(d)-(s)-(s2)$) -- ($(p6)-(d)+(s)$);
%
\draw[thin] ($(p5)-(d)-(s)-(s2)-(1,0)$) .. controls +(0,0.1) and +(-0.3,0) ..  ($(p6)-(d)-(s)-(s2)$); 
\draw[thin] ($(p5)-(d)+(s)$) .. controls +(0,0.1) and +(-0.3,0) ..  ($(p6)-(d)+(s)$); 
\fill[black] (2.2,-0.8) node[line width=0pt] (ore) {{\tiny$a$}};
%
\fill [orange!40!white, opacity=0.8] 
($(p5)-(d)-(s)-(s2)-(1,0)$) .. controls +(0,-0.1) and +(-0.5,0) ..  ($(p4)-(d)-(s)-(s2)$)
-- ($(p4)-(d)-(s)-(s2)$) -- ($(p4)-(d)+(s)$) 
-- ($(p4)-(d)+(s)$) .. controls +(-0.5,0) and +(0,-0.1) ..($(p5)-(d)+(s)$); 
\fill[black] (2.2,-0.8) node[line width=0pt] (ore) {{\tiny$a$}};
%
\draw[thin] ($(p4)-(d)+(s)$) .. controls +(-0.5,0) and +(0,-0.1) ..($(p5)-(d)+(s)$);
\draw[thin] ($(p4)-(d)-(s)-(s2)$) .. controls +(-0.5,0) and +(0,-0.1) ..($(p5)-(d)-(s)-(s2)-(1,0)$);
%
\draw[thin] ($(p4)-(d)-(s)-(s2)$) -- ($(p4)-(d)+(s)$);
\draw[thick, black] ($(p5)-(d)-(s)-(s2)-(1,0)$) -- ($(p5)-(d)+(s)$);
%
\fill [orange!20!white, opacity=0.8] 
($(p5)-(d)-(s)-(s2)+(w)$) .. controls +(0,0.1) and +(-0.3,0) .. ($(p6)-(d)-(s)-(s2)+(w)$)
-- ($(p6)-(d)-(s)-(s2)+(w)$) -- ($(p6)-(d)+(s)+(w)$)
-- ($(p6)-(d)+(s)+(w)$) .. controls +(-0.3,0) and +(0,0.1) .. ($(p5)-(d)+(s)+(w)-(1.2,0)$);
%
%
\draw[thin] ($(p6)-(d)-(s)-(s2)+(w)$) -- ($(p6)-(d)+(s)+(w)$);
%
\draw[thin] ($(p5)-(d)-(s)-(s2)+(w)$) .. controls +(0,0.1) and +(-0.3,0) ..  ($(p6)-(d)-(s)-(s2)+(w)$); 
\draw[thin] ($(p5)-(d)+(s)+(w)-(1.2,0)$) .. controls +(0,0.1) and +(-0.3,0) ..  ($(p6)-(d)+(s)+(w)$); 
%
\fill [orange!40!white, opacity=0.8] 
($(p5)-(d)-(s)-(s2)+(w)$) .. controls +(0,-0.1) and +(-0.5,0) ..  ($(p4)-(d)-(s)-(s2)+(w)$)
-- ($(p4)-(d)-(s)-(s2)+(w)$) -- ($(p4)-(d)+(s)+(w)$) 
-- ($(p4)-(d)+(s)+(w)$) .. controls +(-0.5,0) and +(0,-0.1) ..($(p5)-(d)+(s)+(w)-(1.2,0)$); 
%
\draw[thin] ($(p4)-(d)+(s)+(w)$) .. controls +(-0.5,0) and +(0,-0.1) ..($(p5)-(d)+(s)+(w)-(1.2,0)$);
\draw[thin] ($(p4)-(d)-(s)-(s2)+(w)$) .. controls +(-0.5,0) and +(0,-0.1) ..($(p5)-(d)-(s)-(s2)+(w)$);
%
\draw[thin] ($(p4)-(d)-(s)-(s2)+(w)$) -- ($(p4)-(d)+(s)+(w)$);
\draw[thick, black] ($(p5)-(d)-(s)-(s2)+(w)$) -- ($(p5)-(d)+(s)+(w)-(1.2,0)$);
%
\fill [orange!20!white, opacity=0.8] 
($(q1)+(s)+(y)+(v)+(-1.4,0)$) -- ($(q2)+(s)+(y)+(v)$) -- ($(q2)-(s)+(y)+(v)$) -- ($(q1)-(s)+(y)+(v)+(-1.4,0)$);
\draw[thin] ($(q1)+(s)+(y)+(v)+(-1.4,0)$) -- ($(q2)+(s)+(y)+(v)$) -- ($(q2)-(s)+(y)+(v)$) -- ($(q1)-(s)+(y)+(v)+(-1.4,0)$) -- ($(q1)+(s)+(y)+(v)+(-1.4,0)$);
\fill[red!80!black] (0.6,-1.55) circle (0pt) node {{\scriptsize $u$}};
\fill[red!80!black] (0.4,1.35) circle (0pt) node {{\scriptsize $u^\#$}};
\end{tikzpicture}
\;
\right) 
\end{equation}
forms an ``orbifold datum'' in~$\mathcal T$ (for this the first condition in~\eqref{eq:spheres} is sufficient) in the sense of \cite[Def.\,4.2]{Carqueville:2017aoe}. 
If $\mathcal T = \mathcal D_\zz$ is obtained from a defect TFT~$\zz$, then an orbifold datum is (by definition) a gaugeable symmetry, and the gauged theory $\zz_a/{\mathcal A}$ as defined in \cite{Carqueville:2017aoe} is an oriented TFT. 
The results of \cite{Carqueville:2023aak} imply that the TFT~$\zz_b$ associated to $b\in\mathcal D_\zz$ is isomorphic to the gauged TFT~$\zz_a/{\mathcal A}$, see \cite{Carqueville:2023qrk} for an overview.\footnote{If the evaluation of~$\zz$ on the sphere defects in~\eqref{eq:spheres} is not equal to~$1_{1_{1_b}}$ but still invertible, one finds that~$\zz_b$ stacked with an Euler TFT is isomorphic to~$\zz_a/{\mathcal A}$.}

\section{Quantum field theory in more than two dimensions}\label{sec:QFT+}

In this section we give a brief overview of how topological defects and the corresponding higher categories arise in QFTs.
While most of our examples are in the context of QFTs described by quantisation of action functionals, we stress that the characterisation of symmetries in terms of topological defects is independent the existence of a  Lagrangian description. 
There is at this point no widely accepted rigorous framework for QFTs with observables (topological or not) localised on submanifolds of arbitrary dimension, and so the considerations here are more heuristic as compared to the previous sections.
We view $d$-dimensional QFTs as black boxes which produce correlation functions on a given spacetime $X$ of dimension~$d$ with a metric of Euclidean or Minkowskian signature.\footnote{For most of this section we work in Euclidean signature and we use the terms ``topological defect'' and ``topological operator'' synonymously. 
In Minkowskian signature one must distinguish the two (for the former, the support has a time-like component, for the latter it does not), for example one obtains locality constraints, see e.g.~\cite{Shao:2023gho}.}
Correlation functions are expressions of the form
\begin{equation}\label{eq:corr}
\big\langle \mathcal O_1(x_1) \cdots  \mathcal O_n(x_n) \mathcal W_1(\gamma_1) \cdots \mathcal W_m(\gamma_m)\big\rangle_X
\end{equation}
which give the value that the QFT assigns to a collection of local operators $\mathcal O_i$ inserted at points $x_i \in X$ and non-local operators $\mathcal W_j$ supported along extended regions $\gamma_j \subset X$. 

The defect TFTs discussed in Section~\ref{sec:TFT} provide examples in which the correlation functions \eqref{eq:corr} are independent of the metric. 
The focus of this section is on non-topological examples, but 
the higher categorical structure organising topological defects in QFT is the same as that found in TFTs.
Since TFTs have a solid mathematical foundation, on the one hand they provide a useful guide to topological symmetries in general QFTs, and on the other hand they directly enter their description via $(d+1)$-dimensional symmetry TFTs which we discuss below. 
While here we use simple examples in four dimensions as illustrations, topological defects and their higher categories are a uniform feature of field theories across all dimensions.
See \cite{Cordova:2022ruw,McGreevy:2022oyu,Schafer-Nameki:2023jdn,Brennan:2023mmt,Bhardwaj:2023kri,Shao:2023gho} for a selection of reviews which cover the topics of this section (and much more).

\subsection{Invertible topological defects from $p$-forms}

A basic class of examples of invertible topological defects for $d$-dimensional QFT are the so-called \textit{$p$-form symmetries} which act on extended operators of dimension~$p$ or larger \cite{Gaiotto:2014kfa}. 
Here we discuss the case of $\textrm{U}(1)$ $p$-form symmetries. 
These occur when there is a conserved $(p+1)$-form current whose Hodge dual is exact: $d \, {\ast} j^{(p+1)} = 0$.\footnote{Identities for fields like this one are understood as being valid in all correlators \eqref{eq:corr} as long as the insertion point is away from the $x_i$ and $\gamma_j$.}
The corresponding topological operators are
\begin{equation}\label{eq:pform}
\topD{\theta}^{(p)}{(\Sigma^{d-p-1})} := \text{exp}\Big(i \theta \int_{\Sigma^{d-p-1}}\ast j^{(p+1)} \Big)
\,,
\end{equation}
where $\theta \in \mathds{R} / (2 \pi \mathds{Z})$ and $\Sigma^{d-p-1} \subset X$ is a closed oriented submanifold of dimension $d-p-1$ in the $d$-dimensional spacetime $X$.\footnote{The charges for a $\textrm{U}(1)$ symmetry are quantised and we normalise $j^{(p+1)}$ so that they are integers.}
Stokes' theorem implies the invariance of correlators under deformations of the support $\Sigma^{d-p-1}$ away from other insertions. 

The operators $\topD{\theta}^{(p)}(S^{d-p-1})$ 
measure the $\textrm{U}(1)$-charge of (possibly) non-topological defects of dimension $p$:  
Let $\gamma_j^{p}$ be the $p$-dimensional support of a non-topological operator $\mathcal W_j$, and let $B^{d-p} \subset X$ be a $(d-p)$-dimensional ball which intersects $\gamma_j^{p}$ 
transversally and in precisely one point, with $S^{d-p-1} = \partial B^{d-p}$. 
If for some $q_j \in 
\mathds{Z}$
we have the relation
\begin{equation}\label{eq:actionlocal}
\topD{\theta}^{(p)}(S^{d-p-1}) \, {\mathcal W}_j(\gamma_j^{p}) 
\,=\, \textrm{e}^{\textrm{i}\theta q_j} \, \mathcal W_j(\gamma^{p}_j) \ ,
\end{equation}
we say \textsl{$\mathcal W_j$ is of charge $q_j$}. This symmetry group is often denoted $\textrm{U}(1)^{(p)}$.

\begin{remark}\label{rem:screening}
It can happen that an operator $\mathcal W (\gamma_j^{p})$ of charge~$q$ as above ends on a $(p-1)$-dimensional operator.
This gives a generalisation of the 't~Hooft screening mechanism to $p$-form symmetries, and in this case we say that $\textrm{U}(1)^{(p)}$ is \textit{screened} to $\mathds{Z}^{(p)}_q$. 
See e.g.\ \cite{DelZotto:2015isa,Bhardwaj:2021mzl,DelZotto:2022ras} for applications of this idea to determine $p$-form symmetries of various theories without conventional Lagrangian formulations,
as well as their topological symmetry theories, which we introduce in Section \ref{sec:symTFT}.
\end{remark}

\begin{example}\label{ex:Max}\noindent\textbf{(Maxwell theory)} 
The Maxwell action on a 4-dimensional spacetime~$X$
is 
\begin{equation}\label{eq:Maxwell}
\mathcal S_{\textrm{Maxwell}} 
    = 
    \frac{1}{2 e^2} \int_X f^{(2)} \wedge \ast f^{(2)}
\ ,
\end{equation}
where the fundamental field is a $\textrm{U}(1)$-connection, locally a 1-form $a^{(1)}$ on~$X$, and $f^{(2)} = da^{(1)}$ is the field strength. 
Maxwell theory enjoys a $U(1)^{(1)}_{\textrm{e}} \times U(1)^{(1)}_{\textrm{m}}$ 1-form symmetry with 2-form currents
\begin{equation}
j^{(2)}_{\textrm{e}} = \frac{f^{(2)}}{e^2} \ , \quad\qquad j^{(2)}_{\textrm{m}} = 
\frac{\ast f^{(2)}}{2\pi} \ .
\end{equation}
Their conservation follows from the vacuum Maxwell equations
$d\,{\ast}j^{(2)}_{\textrm{e}} = \frac{1}{e^2} d\,{\ast}  f^{(2)} = 0$ and  $d\,{\ast}j^{(2)}_{\textrm{m}} = \frac{1}{2 \pi} d  f^{(2)} = 0$.\footnote{
These 1-form symmetries are spontaneously broken and the resulting massless Goldstone particles are the photons \cite{Rosenstein:1990py,Gaiotto:2014kfa}. There is a higher form version of the Coleman--Mermin--Wagner 
theorem which states that higher $p$-form symmetry cannot be spontaneously broken whenever $d-p \leqslant 2$ \cite{Gaiotto:2014kfa,Lake:2018dqm}.} 
The charged operators are Wilson lines and 't~Hooft lines. 
The Wilson lines $\mathcal W_{q_{\textrm{e}}}$ of charge $q_{\textrm{e}}$ are the holonomies of $a^{(1)}$, 
\begin{equation}
\mathcal W_{q_{\textrm{e}}}(\gamma^{1}) 
    = 
    \exp \Big( \textrm{i} q_{\textrm{e}} \int_{\gamma^{1}} a^{(1)} \Big) \,,
\end{equation}
while an 't~Hooft line $\mathcal W^D_{q_{\textrm{m}}}$ of charge $q_{\textrm{m}}$ 
is defined as a line $\gamma^{1}$ where $f^{(2)}$ becomes singular such that for an $S^2$ linking $\gamma^{1}$, we have $q_{\textrm{m}} = \frac{1}{2\pi}\int_{S^2} f^{(2)}$. 
The action of the symmetry operators justifies calling $q_{\textrm{e}}$ and $q_{\textrm{m}}$ charges:
\begin{equation}
\mathcal D^{(1)}_{\textrm{e},\theta}(S^{2}) \,\mathcal W_{q_{\textrm{e}}}(\gamma^{1}) = \textrm{e}^{\textrm{i} \theta q_{\textrm{e}}}  \mathcal W_{q_{\textrm{e}}}(\gamma^{1}) \ , \quad
\mathcal D^{(1)}_{\textrm{m},\theta}(S^{2}) \, \mathcal W^D_{q_{\textrm{m}}}(\gamma^{1}) = \textrm{e}^{\textrm{i} \theta q_{\textrm{m}} }  \mathcal W^D_{q_{\textrm{m}}}(\gamma^{1})
\ .
\end{equation}
The second identity is manifest from the definition of 't~Hooft lines. 
For the first identity, one interprets the insertion of a Wilson line as the presence of an electric current density $\ast J^{(1)}_{\textrm{e}} = \delta(\gamma^{1}) q_{\textrm{e}}$. 
The latter modifies the equation of motion to $d \, {\ast} F = \delta(\gamma^{1}) q_{\textrm{e}}$, from which the action of the symmetry operator follows.

We note that these two symmetries have a mixed 't~Hooft anomaly and hence cannot be gauged simultaneously, see e.g.\ \cite[App.\,C]{Cordova:2018cvg} for a detailed discussion.\footnote{An \textit{'t~Hooft anomaly} is a feature of the symmetry which signals that when coupled to a background, it breaks the gauge invariance of the background gauge fields. A mixed 't~Hooft anomaly between two symmetries signals that the background fields for one symmetry break the gauge invariance of the backgrounds of the other. The background gauge invariance can be restored thanks to the \textit{inflow mechanism}, which consists of coupling the $d$-dimensional QFT to a $(d+1)$-dimensional invertible TFT (known as an \textit{anomaly theory} or a \textit{symmetry protected topological phase}) -- see e.g.\ \cite[Sect.\,1]{Cordova:2019jnf} or \cite[Sect.\,3]{McGreevy:2022oyu} for reviews.\label{footnote:inflow}}
It is however possible to gauge finite subgroups of each of these symmetries separately.
\end{example}

There are more general invertible topological defects than those described in this section, corresponding to finite global symmetry (higher) groups. 
For more details on the resulting (higher) group structures and their applications in field theory, see e.g.\ \cite{Gukov:2008sn,Kapustin:2013uxa,Gaiotto:2014kfa,Barkeshli:2014cna,Tachikawa:2017gyf,Cordova:2018cvg,Benini:2018reh} as well as \cite{Schafer-Nameki:2023jdn,Bhardwaj:2023kri} for reviews.

\subsection{Identities between correlators via topological defects} 
\begin{figure}
$$
\begin{aligned}
& \Bigg\langle
\raisebox{-0.47\height}{\scalebox{.5}{\includegraphics{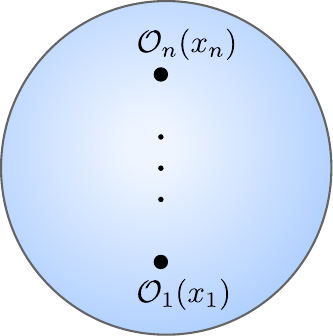}}}
\Bigg\rangle_{\!\!S^d} 
\overset{(1)}=
~
\textcolor{blue}{N^{-1}}\, 
\Bigg\langle
\raisebox{-0.47\height}{\scalebox{.5}{\includegraphics{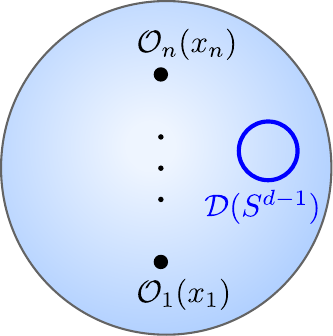}}}
\Bigg\rangle_{\!\!S^d} 
\overset{(2)}= 
~
\textcolor{blue}{N^{-1}}\, 
\Bigg\langle
\raisebox{-0.47\height}{\scalebox{.5}{\includegraphics{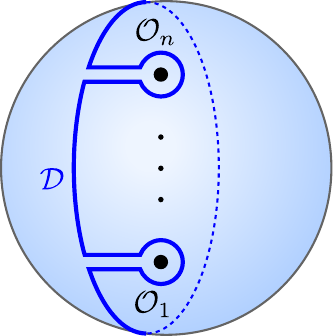}}}
\Bigg\rangle_{\!\!S^d}
\\[-1em]
&\hspace{3.5em}
\overset{(3)}= 
~
\textcolor{blue}{N^{-1}}\, 
\Bigg\langle
\raisebox{-0.47\height}{\scalebox{.5}{\includegraphics{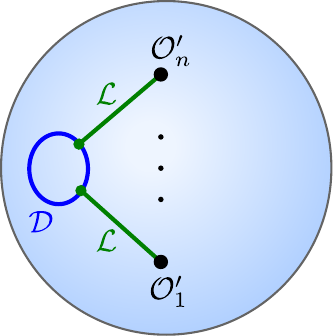}}}
\Bigg\rangle_{\!\!S^d} 
~
\overset{(4)}= 
~~~
\Bigg\langle
\raisebox{-0.47\height}{\scalebox{.5}{\includegraphics{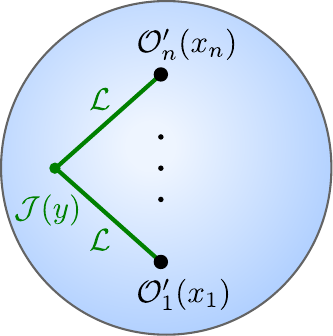}}}
\Bigg\rangle_{\!\!S^d}
\end{aligned}
$$
\caption{Example of an identity among correlators induced by a topological defect. 
Here, $\mathcal{D} = \mathcal{D}^{(0)}$ is a codimension-1 topological defect of $\mathcal{T}$
for which we assume that $\mathcal{D}(S^{d-1}) = N \cdot \one_{\mathcal T}$ for some $N\in\mathds C^\times$. 
(1) Insert $\mathcal D(S^{d-1})$ and multiply the correlators by $N^{-1}$; 
(2) expanding the bubble across $S^d$, the topological defect $\mathcal D$ wraps the supports of the operators $\mathcal O_i$ and moves to the opposite side of the sphere; (3) the topological defect shrinks on the opposite side of $S^d$, and the local operators $\mathcal{O}_i$ become twist operators $\mathcal{O}_i'$ sitting at the end of topological line defects $\mathcal L^{(d-2)}$;
(4) the remaining $\mathcal D$-sphere shrinks into a topological point defect $\mathcal{J}(y)$, into which we also absorb the prefactor $N^{-1}$.} 
\label{fig:WTid}
\end{figure}

Topological defects $\mathcal{D}^{(p)}$ of a $d$-dimensional QFT~$\mathcal{T}$ 
give identities between correlators of~$\mathcal{T}$. 
As a simple example, consider the spacetime $X=S^d$ with insertions of local point operators $\mathcal O_i(x_i)$. Suppose $\mathcal{T}$ possesses a (possibly non-invertible) codimension-1 topological defect $\mathcal{D}^{(0)}$ whose value on the sphere is invertible in the following sense:
For a sphere $S^{d-1}$ bounding a ball $B^{d} \subset X$ which does not contain any operator insertions,
$\mathcal{D}^{(0)}(S^{d-1})=N \cdot \one_{\mathcal{T}}$ 
for some $N \in \mathds{C}^\times$, and where $\one_{\mathcal{T}}$ denotes the identity field of $\mathcal{T}$. 
In other words, the defect sphere is shrunk to a point, resulting in a non-zero multiple of the identity field. 
The moves in Figure~\ref{fig:WTid} then give the identity
\begin{equation}\label{eq:corr-identity-from-defect}
\big\langle \mathcal O_1(x_1) \cdots  \mathcal O_n(x_n) \big\rangle_{S^d}
=
\big\langle 
\mathcal O'_1(x_1)\mathcal L^{(d-2)}(x_1,y) \cdots \mathcal O'_n(x_n) \mathcal L^{(d-2)}(x_n,y) \mathcal J^{(d-1)}(y) 
\big\rangle_{S^d} \ ,
\end{equation}
where the $\mathcal O'_i(x_i)$ are disorder operators sitting at the end of topological line defects $\mathcal L^{(d-2)}$ and meeting in a topological point junction $\mathcal J^{(d-1)}(y)$. 
Such identities are also referred to as generalised Ward--Takahashi identities. 

If $\mathcal{D}^{(0)}$ is an invertible topological defect, the $\mathcal{O}'_i$ on the right-hand side of \eqref{eq:corr-identity-from-defect} are again local fields, resulting in an identity between correlators of local fields
\begin{equation}
\big\langle \mathcal O_1(x_1) \cdots  \mathcal O_n(x_n) \big\rangle_{S^d}
=
\big\langle 
\mathcal O'_1(x_1) \cdots  \mathcal O'_n(x_n)\big\rangle_{S^d} \ .
\end{equation}
If furthermore $N=1$ and $\mathcal{D}^{(0)}(S^{d-1})$ acts diagonally on the $\mathcal{O}_i$ in the sense that $\mathcal{O}'_i = R_i \mathcal{O}_i$ for some $R_i \in \mathds{C}$, we obtain the selection rule that correlators can be non-zero only if $\prod_{i=1}^n R_i = 1$.

For topological defects $\mathcal{D}^{(p)}$ with $p>0$ and for which $\mathcal{D}^{(p)}(S^{d-p-1})$ is a non-zero multiple of the identity field, one can carry out manipulations of correlators analogous to Figure~\ref{fig:WTid}. Such $\mathcal{D}^{(p)}$ can move past operators of dimension less than $p$ but act on operators $\mathcal{W}_j(\gamma_j^{q})$ 
with support of dimension $q\geqslant p$ possibly generating additional topological defects attached to $\gamma_j^{q}$ in the process.

\medskip

Just as for TFTs, it is expected that $d$-dimensional QFTs~$\mathcal T$ and their codimension-$k$ topological defects form the objects and $k$-morphisms of a $d$-category. 
In this language, a topological defect $\mathcal{D}^{(p)}$ as above is a $(p+1)$-morphism, which for $p=0$ describes a codimension-1 endo-defect of the fixed bulk QFT~$\mathcal T$, and for $p>0$ is an endomorphism of the $p$-fold unit $1_{{1_{\dots}}_{1_{\mathcal T}}}$, i.e.\ of the unit $p$-morphism of the unit $(p-1)$-morphism of\dots\ the object~$\mathcal T$. 
In terms of defects this means that $\mathcal{D}^{(p)}$ is not attached to defects of lower codimension; e.g.\ for $d=4$ and $p=2$ we have that $1_{1_\mathcal T}$ is the invisible surface defect so that $\mathcal D^{(2)}$ is a line defect in~$\mathcal T$.

Write $[\mathrm{End}(1_{{1_{\dots}}_{1_{\mathcal T}}})]$ for the ring of equivalence classes of $(p+1)$-endomorphisms of $1_{{1_{\dots}}_{1_{\mathcal T}}}$, where addition is given by direct sums and multiplication by composition (and which is completed with respect to taking differences as in $K$-theory).
This describes the fusion of defects $\mathcal{D}^{(p)}$ up to equivalences induced by defects of lower codimension and is also called a \textit{fusion ring}. 
Note that by the arguments at the end of Section~\ref{subsec:DefectTFTs}, for $p>0$ this fusion ring is necessarily commutative.
Consider a non-topological dimension-$p$ operator $\mathcal{W}$, and let $A_{\mathcal{W}}$ be the $\mathds{C}$-algebra of topological point operators on $\mathcal{W}$, i.e.\ field insertions on the support $\gamma^{p}$ 
of $\mathcal{W}$ such that correlators do not depend on the precise location of the insertion. We call $\mathcal{W}$ \textsl{simple} if $A_{\mathcal{W}}$ is 1-dimensional (if $A_{\mathcal{W}}$ is semisimple and of dimension greater than one, $\mathcal{W}$ can be decomposed into a direct sum). 
If $\mathcal{W}$ is simple, by the procedure of linking spheres as in \eqref{eq:actionlocal} we obtain a 1-dimensional representation of $[\mathrm{End}(1_{{1_{\dots}}_{1_{\mathcal T}}})]$ in terms of charges.
We conclude that the charges of $p$-dimensional operators give us information about the fusion ring $[\mathrm{End}(1_{{1_{\dots}}_{1_{\mathcal T}}})]$, i.e.\ about the composition of topological defects of codimension $p+1$ up to equivalence (and determines it completely if the combined action on all charged objects is faithful). 
E.g.\ for $d=4$ and $p=2$, $[\mathrm{End}(1_{1_{\mathcal T}})]$ consists of isomorphism classes of topological line defects with product given by fusion, and each non-topological simple surface operator $\mathcal{W}$ gives a 1-dimensional representation of $[\mathrm{End}(1_{1_{\mathcal T}})]$.

\subsection{The topological symmetry theory}\label{sec:symTFT}

Topological (finite) symmetries of a $d$-dimensional QFT $\mathcal T$ can often be described via a TFT $\zz$ in $n = d+1$ dimensions, the so-called \textit{topological symmetry theory} (or ``symTFT'').
The TFT $\zz$ is placed on an interval times a $d$-dimensional region, and the left and right boundary are labelled by a  topological boundary condition~$\mathcal B$ and a non-topological boundary condition~$\widehat{\mathcal T}$, respectively.
Since~$\zz$ is topological, the size of the interval does not matter, and we can collapse it without affecting correlators.
We assume that the result is equivalent as a $d$-dimensional QFT to~$\mathcal T$:\footnote{This is meant in the sense that with proper identifications of the operators, correlators on both sides are equal. 
It may be possible to give a formulation in terms of a $(d+1)$-category whose objects are $(d+1)$-dimensional TFTs and whose 1-morphisms are possibly non-topological codimension-1 defects, but where all higher codimension defects are again topological.}
\begin{equation}\label{eq:symTFT-setup}
\begin{tikzpicture}[scale=0.7, baseline=0.35cm]
\fill[orange!40!white, opacity=0.7] 
(0,0) to [out=90, in=-90]  (0,1.5)
	to [out=0,in=180] (6,1.5)
	to [out = -90, in =90] (6,0)
	to [out=180, in =0]  (0,0);
	\node[above] at (3,0.3) {{\footnotesize$\zz$}};
	\draw[very thick,color=blue!50!black] (0,0) -- (0,1.5);
	\draw[very thick, color=blue!90!black] (6,0) -- (6,1.5);
	\node[right] at (6,0.6) {{\footnotesize$\widehat{\mathcal T}$}}; 
	\node[left] at (0,0.6) {{\footnotesize$\mathcal B\vphantom{\widehat{\mathcal T}}$}};
\end{tikzpicture}
\quad \cong \quad
\begin{tikzpicture}[scale=0.7, baseline=0.35cm]
	\draw[very thick, color=blue!90!black] (3,0) -- (3,1.5);
	\node[right] at (3,0.6) {{\footnotesize$\mathcal T\vphantom{\widehat{\mathcal T}}$}}; 
\end{tikzpicture}
\end{equation}
Different choices of $\mathcal B$ describe different QFTs~$\mathcal T$ that can be obtained from $\widehat{\mathcal T}$, and topological defects of~$\mathcal T$ are described by topological defects of~$\zz$. 
Depending on $\mathcal B$, some topological defects of $\zz$ may become trivial when collapsing the interval.
We stress that a given QFT~$\mathcal{T}$ can allow for several different topological symmetry theories $\mathcal{Z}$, analogous to the fact that a given vector space can be part of representations of different groups.\footnote{
Theories like $\widehat{\mathcal T}$ (and also $\mathcal{B}$) which are defined on the boundary of a $(d+1)$-dimensional TFT are sometimes called \textsl{relative field theories} \cite{Freed:2012bs}. 
A hallmark of relative field theories is that to a closed compact $d$-dimensional manifold $X^d$ they assign a vector in the state space $\zz(X^d)$ which depends on the cobounding $(d+1)$-manifold, rather than a partition function.
In the present case, $\zz$ has topological boundary conditions, and combining the topological boundary condition $\mathcal{B}$ with the non-topological one given by $\widehat{\mathcal{T}}$ as in \eqref{eq:symTFT-setup} produces a $d$-dimensional theory $\mathcal T$, which is independent of 
a higher-dimensional bulk theory
(possibly up to an invertible anomaly theory -- see Footnote \ref{footnote:inflow}). 
There are examples of TFTs that do not allow for topological boundary conditions, for example 3d $\textrm{U}(1)_k$ Chern--Simons theory (or more generally Reshetikhin--Turaev TFTs for modular categories that are not Drinfeld centres) \cite{Kapustin:2010if,Fuchs:2012dt}. For such theories, one can still combine two non-topological boundaries to obtain an absolute theory (as in Figure~\ref{fig:2dCFT-3dTFT}\,a), or work with the doubled theory (as in Figure~\ref{fig:2dCFT-3dTFT}\,c).
} 

Consider the case $d=2$ and let $\mathcal{E}_{\mathcal{T}}$ be the pivotal category of topological defects of the 2d\,QFT $\mathcal{T}$ as in Section~\ref{sec:2cat-form-def}. 
Suppose $\mathcal{T}$ has a spherical fusion category $\mathcal{F}$ as topological symmetry via a realisation $R\colon \mathcal{F} \to \mathcal{E}_{\mathcal{T}}$ as in \eqref{eq:piv-fusion-cat-sym}. Then the topological symmetry theory $\mathcal{Z}$ is given by  the Turaev--Viro--Barrett--Westbury state sum TFT for $\mathcal F$ \cite[App.\,B]{Lin:2022dhv}, and $\widehat{\mathcal T}$ is obtained by coupling $\mathcal{Z}$ to $\mathcal{T}$ via $R$. 
When taking $\mathcal{B}$ to be the ``end-of-foam'' boundary condition for $\mathcal{Z}$ (i.e.\ the locus where the foam-like defect network in the state sum construction ends), in \eqref{eq:symTFT-setup} one recovers the theory $\mathcal{T}$ one started from. 
     More generally, for any $d$ a $(d+1)$-dimensional state sum model~$\mathcal Z_{\mathcal A}$ is constructed from a network of defects~$\mathcal A$ (``orbifold data''), generalising the setting in Sections~\ref{subsec:defect2categories}--\ref{subsec:defect3categories}, see \cite{Carqueville:2017aoe,Carqueville:2023qrk}; 
     such a symmetry TFT~$\mathcal Z_{\mathcal A}$ has a canonical end-of-foam boundary condition~$\mathcal B$ (often called ``Dirichlet''), which algebraically is~$\mathcal A$ as a right module over itself. 
For more details on the symmetry TFT see e.g.~\cite{Kapustin:2014gua,Gaiotto:2014kfa,Ji:2019,Gaiotto:2020iye,Kong:2020,Apruzzi:2021nmk,Freed:2022qnc,Kaidi:2022cpf,Bhardwaj:2023ayw,Bartsch:2023wvv}.

One can also give a symmetry TFT for lattice models \cite{Freed:2018cec, Aasen:2020jwb}. The symmetry TFT description of rational 2d\,CFT was discussed in Section~\ref{sec:2dCFT-3dTFT}.
When $\mathcal T$ has 't~Hooft anomalies in the presence of background fields, 
background gauge invariance can be restored by turning the left boundary $\mathcal B$ into an interface between the theory $\mathcal Z$ and the $(d+1)$-dimensional invertible anomaly theory arising from inflow (see Footnote~\ref{footnote:inflow}).

\medskip

\noindent\textbf{Application: Duality defects from the symmetry TFT perspective. } 
Placing a $d$-di\-men\-sional topological endo-defect $\mathcal{D}$ of $\zz$ on the interval between $\mathcal{B}$ and $\widehat{\mathcal T}$ and fusing it with $\mathcal{B}$ produces a new topological boundary condition~$\mathcal B'$ for~$\zz$, and so a new QFT ${\mathcal{T}'}$:
\begin{equation}
\begin{gathered}
\begin{tikzpicture}[scale=0.7, yscale=0.5]
\fill[orange!40!white, opacity=0.7] 
(0,0) to [out=90, in=-90]  (0,3)
	to [out=0,in=180] (6,3)
	to [out = -90, in =90] (6,0)
	to [out=180, in =0]  (0,0);
	\node[above] at (3,1) {{\footnotesize$\zz$}};
	\draw[very thick,color=green!60!black] (1.5,0) -- (1.5,3);
	\node at (1.2,1.2) {{\footnotesize$\mathcal D$}};
	\draw[very thick] (0,0) -- (0,3);
	\draw[very thick, color=blue!90!black] (6,0) -- (6,3);
        \node[right] at (6,1.2) {{\footnotesize$\widehat{\mathcal T}$}}; 
	\node[left] at (0,1.2) {{\footnotesize$\mathcal B\vphantom{\widehat{\mathcal T}}$}};
\end{tikzpicture}
\end{gathered} 
\quad
\cong 
\quad 
\begin{gathered}
\begin{tikzpicture}[scale=0.7, yscale=0.5]
\fill[orange!40!white, opacity=0.7] 
(0,0) to [out=90, in=-90]  (0,3)
	to [out=0,in=180] (6,3)
	to [out = -90, in =90] (6,0)
	to [out=180, in =0]  (0,0);
	
	\node[above] at (3,1) {{\footnotesize$\zz$}};
	
	\draw[very thick,color=green!40!black] (0,0) -- (0,3);
	\draw[very thick, color=blue!90!black] (6,0) -- (6,3);
        \node[right] at (6,1.2) {{\footnotesize$\widehat{\mathcal T}$}}; 
	\node[left] at (0,1.2) {{\footnotesize$\mathcal B'\vphantom{\widehat{\mathcal T}}$}};
\end{tikzpicture}
\end{gathered}
\quad\cong\quad
\begin{gathered}
\begin{tikzpicture}[scale=0.7, yscale=0.5]
	\draw[very thick, color=blue!60!green] (3,0) -- (3,3);
	\node[right] at (3,1.2) {{\footnotesize${\mathcal T'}$}}; 
\end{tikzpicture}
\end{gathered}    
\end{equation}
If $\mathcal{D}$ can end on a $(d-1)$-dimensional defect 
$\mathcal{E}$ of $\zz$ (a so-called \textsl{twist defect})
one obtains a topological interface $\mathcal{G}$
between $\mathcal{T}'$ and ${\mathcal{T}}$.
The non-topological theories often depend on parameters $\tau$, and the topological interface $\mathcal{G}$ then sits between $\mathcal{T}(\tau)$ and ${\mathcal{T}'}(\tau)$:
\begin{equation}\label{eq:interface-via-symTFT}
\begin{gathered}
\begin{tikzpicture}[scale=0.7, yscale=0.5]
\fill[orange!40!white, opacity=0.7] 
(0,0) to [out=90, in=-90]  (0,3)
	to [out=0,in=180] (6,3)
	to [out = -90, in =90] (6,0)
	to [out=180, in =0]  (0,0);
	
	\node[above] at (3,1.2) {{\footnotesize$\zz$}};
	
	\draw[very thick] (0,0) -- (0,3);
	\draw[very thick, color=blue!90!black] (6,0) -- (6,3);
        \node[right] at (6,1.2) {{\footnotesize$\widehat{\mathcal T}(\tau)$}}; 
	\node[left] at (0,1.2) {{\footnotesize$\mathcal B\vphantom{\widehat{\mathcal T}}$}};
	
	\draw[very thick,color=green!60!black] (1.5,0) -- (1.5,1.5);
	\node[above right] at (1.5,0) {{\footnotesize$\mathcal D$}};
        \node[left] at (1.5,1.55) {{\footnotesize$\mathcal E$}};
        \draw[thick,color=red,fill=red] (1.5,1.55) circle (0.08);
\end{tikzpicture}
\end{gathered} 
\quad \cong \quad
\begin{gathered}
\begin{tikzpicture}[scale=0.7, yscale=0.5]
	\node at (3.5,2.7) {{\footnotesize$\quad\mathcal T(\tau)$}};
	\draw[very thick, color=blue!90!black] (3,1.5) -- (3,3);
	\node at (3.5,1.55) {{\footnotesize$\textcolor{red}{\mathcal{G}}$}}; 
	\draw[very thick, color=blue!60!green] (3,0) -- (3,1.5);
	\node[right] at (3,0.5) {{\footnotesize${\mathcal T}'(\tau)$}}; 
 \draw[thick,color=red,fill=red] (3,1.55) circle (0.08);
\end{tikzpicture}
\end{gathered}    
\end{equation}
Suppose now that $\mathcal T$ and ${\mathcal{T}'}$ are related by a \textit{field theory duality}, that is, there is an equivalence of the form
\begin{equation}\label{eq:QFT-duality-iso}
\mathcal T(\tau) \cong {\mathcal T}'(f(\tau))
\ .
\end{equation}
If there is a fixed point $\tau_* = f(\tau_*)$, the interface $\mathcal{G}$ above can be combined with the isomorphism $\mathcal T(\tau_*) \cong {\mathcal T}'(\tau_*)$ to obtain a new topological endo-defect $\mathcal{N}$ of $\mathcal{T}(\tau_*)$.

Let us specialise the situation further by taking the $d$-dimensional defect $\mathcal{D}$ in $\zz$ to be the result of gauging lower-dimensional defects $\mathcal A$ on the support of $\mathcal{D}$, i.e.\ by gauging an orbifold datum as described for $d=2,3$ in Section~\ref{sec:TFT}. 
We further take the $(d-1)$-dimensional twist defect $\mathcal{E}$ to be given by ``ending the defect foam''. 
When composing $\mathcal{N}$ with its orientation reversal  $\mathcal{N}^{\textrm{op}}$ (note that $(-)^*$ and $(-)^\#$ from Section~\ref{sec:TFT} are special cases of $(-)^{\textrm{op}}$), the equivalence
$\mathcal T(\tau_*) \cong {\mathcal T}'(\tau_*)$ cancels and we are left with
\begin{equation}
\begin{gathered}
\begin{tikzpicture}[scale=0.7]

\fill[orange!40!white, opacity=0.7] 
(0,0) to [out=90, in=-90]  (0,3)
	to [out=0,in=180] (6,3)
	to [out = -90, in =90] (6,0)
	to [out=180, in =0]  (0,0);
	
	\node[above] at (3,1.7) {{\footnotesize$\zz$}};
	
	\draw[thick] (0,0) -- (0,3);
	\draw[very thick, color=blue!90!black] (6,0) -- (6,3);
        \node[right] at (6,1.2) {{\footnotesize$\widehat{\mathcal T}(\tau_*)$}}; 
	\node[left] at (0,1.2) {{\footnotesize$\mathcal B\vphantom{\widehat{\mathcal T}}$}};
	
        \node[left] at (1.5,2) {{\footnotesize$\mathcal E$}};
	\draw[very thick,color=green!60!black] (1.5,1) -- (1.5,2);
	\node[right] at (1.5,1.5) {{\footnotesize$\mathcal D$}};
	
	\draw[thick,color=red,fill=red] (1.5,1) circle (0.07);
        \draw[thick,color=red,fill=red] (1.5,2) circle (0.07);
        \node[left] at (1.5,1) {{\footnotesize$\mathcal E^{\textrm{op}}$}};
\end{tikzpicture}
\end{gathered} 
\quad \cong \quad
\begin{gathered}
\begin{tikzpicture}[scale=0.7]
	\node at (3.4,2.8) {{\footnotesize$\qquad\mathcal T(\tau_*)$}};
	\draw[very thick, color=blue!90!black] (3,1.6) -- (3,3);
	\node at (3.4,2) {{\footnotesize$\textcolor{red}{\mathcal{G}}$}}; 
	\draw[very thick,color=blue!60!green] (3,1) -- (3,1.93);
	\node at (3.6,1) {{\footnotesize$\textcolor{red}{\mathcal{G}^{\textrm{op}}}$}};
	\draw[very thick, color=blue!90!black] (3,0) -- (3,1);
	\node at (3.4,0.3) {{\footnotesize$\qquad\mathcal T(\tau_*)$}}; 
 \draw[thick,color=red,fill=red] (3,2) circle (0.07);
 \draw[thick,color=red,fill=red] (3,1.0) circle (0.07);
\end{tikzpicture}
\end{gathered}
\quad ,
\end{equation}
where on the right-hand side the symmetry is gauged in the region between $\mathcal{G}^{\textrm{op}}$ and $\mathcal{G}$. 
The fusion $\mathcal{G} \otimes \mathcal{G}^{\textrm{op}}$ in $\mathcal{T}$ can be computed in $\zz$ as the composition of 2-morphisms $1_\zz \xrightarrow{\mathcal{E}^{\textrm{op}}} \mathcal{D} \xrightarrow{\mathcal{E}} 1_\zz$, which just returns the $(d-1)$-dimensional component $A$ of the orbifold datum $\mathcal A$ used to define~$\mathcal{D}$, compare~\eqref{eq:2d-orb-datum} and~\eqref{eq:3d-orb-datum}. 
Typically~$A$ is non-invertible and in that case it follows that $\mathcal{N}$ is an example of a non-invertible topological symmetry. 

Suppose that $A$ itself is obtained from gauging a topological symmetry, and so is given by a foam of defects of dimensions $\leqslant d-2$ which is ``porous'' in the sense that it is transparent to point operators. In this case one cannot detect whether $\mathcal{G} \otimes \mathcal{G}^{\textrm{op}}$ (and hence $\mathcal{N}$) is invertible or not from the action on point operators and one needs to include higher-dimensional operators. This happens in the 4-dimensional examples below.

For more on the symmetry TFT approach to topological duality defects we refer to \cite{Freed:2018cec,Kaidi:2022cpf}.

\begin{example}
\noindent\textbf{(Kramers--Wannier duality of the 2d Ising model)}  
In the case of the lattice model in Section~\ref{sec:Ising-lattice}, the interface $\mathcal{G}$ above corresponds to $G$ in \eqref{eq:interface}, and the duality isomorphism \eqref{eq:QFT-duality-iso} corresponds to the duality interface $D$ in \eqref{eq:weights-duality-defect}. 
Passing to the continuum Ising CFT at the critical point $\beta = \beta^*$, the 3d symmetry TFT $\mathcal{Z}$
is the Reshetikhin--Turaev TFT for $\mathcal{C} \boxtimes \overline{\mathcal{C}}$, where $\mathcal{C}$ denotes the Ising modular category. 
In this example, $\mathcal{Z}$ has up to equivalence a unique topological boundary condition $\mathcal{B}$ 
(see Example~\ref{ex:surface-defect-interfaces}\,(2) and take $\mathcal{M}=\mathcal{I}$ in Figure~\ref{fig:2dCFT-3dTFT}).
The defect $\mathcal{D}$ is obtained by gauging the orbifold datum $\mathcal{A} = (A,\mu,\Delta)$ described 
in Example~\ref{ex:surface-defect-interfaces}\,(2).\footnote{There a surface defect in $\mathcal{C}$ is described, but we can interpret it as living in $\mathcal{C} \boxtimes \overline{\mathcal{C}}$ by taking it to be trivial in $\overline{\mathcal{C}}$.} 
The 1d defect $\mathcal{E}$ is given by $A$ with regular $\mathcal{A}$-action. The composition of $\mathcal{E}$ and $\mathcal{E}^{\textrm{op}}$ is $A = \one \oplus \varepsilon$, which is non-invertible (and does act non-trivially on point operators). 
For several other perspectives on the defects of the Ising model, see \cite[Sect.\,3]{Shao:2023gho}.
\end{example}

\begin{example}\label{ex:MaxD}\noindent\textbf{(Duality defects of Maxwell theory)} 
To derive electric-magnetic duality of quantum Maxwell theory in the path integral formalism, one treats $f^{(2)}$ as the fundamental field rather than $a^{(1)}$. 
To impose the Bianchi identity $d f^{(2)}=0$ one introduces a Lagrange multiplier 1-form field $a^{(1)}_D$ and adds the coupling
$\frac{\textrm{i}}{2\pi}\int d a^{(1)}_D \wedge f^{(2)}$
to the Maxwell action in \eqref{eq:Maxwell}. Integrating out $f^{(2)}$ gives
\begin{equation}\label{eq:Sdul}
f^{(2)}_D = \frac{2\pi \textrm{i}}{e^{2}} \,{\ast}f^{(2)} 
\ , \qquad  f^{(2)}_D = d a^{(1)}_D \ .
\end{equation}
Using this relation, the Maxwell action can be rewritten purely in terms of dual variables, with a dual coupling $e_D$:
\begin{equation}\label{dualityact}
  \mathcal S_{\textrm{Maxwell}} =\frac{1}{2 e_D^{2}}\int f^{(2)}_D \wedge \ast f^{(2)}_D \,\qquad \frac{e^{2}}{2\pi}=\frac{2\pi}{e_D^{2}}~.
\end{equation}
This implies the famous electric-magnetic duality, which states that there is an equivalence between Maxwell theory with coupling~$e$ and with coupling~$e_D$,
\begin{equation}\label{eq:EM-duality}
\mathcal T_{\textrm{Maxwell}}(e) \cong \mathcal T_{\textrm{Maxwell}}(2\pi/e) \ .
\end{equation}

Now consider gauging a $\mathds Z^{(1)}_N$ 1-form symmetry subgroup of the electric 1-form symmetry $\textrm{U}(1)_{\textrm{e}}^{(1)}$ of Maxwell theory, for some positive integer~$N$. 
This has the effect of shifting the fields as
\begin{equation}
a^{(1)}
\leadsto
\frac{1}{N} a^{(1)} \,\qquad a_D^{(1)}
\leadsto
N a_D^{(1)} \ ,
\end{equation}
and since the Maxwell action is quadratic in $a^{(1)}$, this is equivalent to shifting the coupling as $e \leadsto N e$, see e.g.~\cite[Sect.\,4.1]{Gaiotto:2014kfa}.
Altogether, one obtains the equivalences
\begin{equation}\label{eq:EM-duality-with-gaugeing}
\mathcal T_{\textrm{Maxwell}}(e)/\mathds Z^{(1)}_N \cong   \mathcal T_{\textrm{Maxwell}}(Ne) 
 \overset{\eqref{eq:EM-duality}}\cong
 \mathcal T_{\textrm{Maxwell}}(2\pi/(Ne))
 \ .
\end{equation}

In terms of the notation in \eqref{eq:interface-via-symTFT} we have $\mathcal{T} = \mathcal T_{\textrm{Maxwell}}(e)$ and ${\mathcal{T}'} = \mathcal T_{\textrm{Maxwell}}(e)/\mathds Z^{(1)}_N$, and the above equivalence amounts to \eqref{eq:QFT-duality-iso}. The duality isomorphism \eqref{eq:EM-duality-with-gaugeing} has the fixed point $e_* = \sqrt{2\pi / N}$, and by the reasoning in the beginning of this section we obtain a topological endo-defect $\mathcal{N}$ of $\mathcal T_{\textrm{Maxwell}}(e_*)$. 
The composition $\mathcal{N} \otimes \mathcal{N}^{\textrm{op}}$ is equivalent to the defect obtained by gauging the $\mathds Z^{(1)}_N$ 1-form symmetry on its 3d support and so is transparent to point operators, but considering the action on line operators shows its non-invertibility. 
In summary, $\mathcal N$ is a simple non-invertible topological defect of Maxwell theory at coupling $e_*$.
\end{example}

\begin{example}\label{ex:SYM}\noindent\textbf{(Duality defects from S-duality)} 
Self-dual points of more complicated 
4d models are a further source of examples of higher-dimensional theories with non-invertible duality defects. 
The simplest explicit example is given by taking $\mathcal{T}$ to be the 
4d maximally supersymmetric Yang--Mills theory with gauge group $\mathrm{SU}(N)$, and $\mathcal{T}'$ the corresponding theory for $\mathrm{SU}(N) / \mathds{Z}_N$ obtained by gauging the 1-form centre symmetry. 
The equivalence \eqref{eq:QFT-duality-iso} is provided by S-duality which gives an equivalence between $\mathcal{T}$ at coupling $\tau$ and $\mathcal{T}'$ at coupling $\tau' = -1/\tau$.
At the self-dual coupling $\tau_* = \mathrm{i}$ we obtain a non-invertible codimension-1 topological endo-defect $\mathcal{N}$ of $\mathcal{T}$.
\end{example}

More details about the examples above can be found in
\cite{Choi:2021kmx,Kaidi:2021xfk,Choi:2022zal}. 
Extending the above construction to more general (even non-abelian) duality groups, many further
examples of 4d theories with and without conventional Lagrangian descriptions that support non-invertible duality defects can be found in
\cite{Bashmakov:2022uek}. Of course one can also describe the symmetry TFT in all these examples. In particular for Examples \ref{ex:MaxD} and \ref{ex:SYM} above, $\mathcal Z$ is given by a 5-dimensional $\mathds{Z}_N$-gauge theory, or in other words an untwisted Dijkgraaf--Witten theory.
For more details see e.g.\ \cite{Kapustin:2014gua,Freed:2022iao}.

\subsection{Topological defects in quantum electrodynamics}

Quantum electrodynamics in $3+1$ dimensions is the theory of a Dirac fermion coupled to a Maxwell field. In this section we review its non-invertible family of codimension-one topological defects labeled by elements in $\mathds{Q}/\mathds{Z}$, building on \cite{Choi:2022jqy,Cordova:2022ieu}. The action of a free Dirac fermion of mass~$m$ is
\begin{equation}
\begin{aligned}
\mathcal S_{\textrm{Dirac}} =& \int  \,d^4x \, \Big(\chi_{\textrm{L}}(x)^\dagger (\mathrm{i} \overline{\sigma}^\mu \partial_\mu) \chi_{\textrm{L}}(x) + \chi_{\textrm{R}}(x)^\dagger (\mathrm{i} \sigma^\mu \partial_\mu) \chi_{\textrm{R}}(x)\Big)
\\ &  - m \int d^4x \, \Big(\chi_{\textrm{R}}(x)^\dagger  \chi_{\textrm{L}}(x) + \chi_{\textrm{L}}(x)^\dagger \chi_{\textrm{R}}(x)\Big)
~,
\end{aligned}
\end{equation}
where we are decomposing the Dirac fermion as a left-handed and a right-handed Weyl spinor field $\chi_{\textrm{L}}(x)$ and $\chi_{\textrm{R}}(x)$, 
and we used the standard shorthands $\sigma^\mu = (1,\sigma^x,\sigma^y,\sigma^z)$ and $\overline{\sigma}^\mu = (1,-\sigma^x,-\sigma^y,-\sigma^z)$ in terms of Pauli matrices.
For $m=0$ this action is invariant under rotating $\chi_{\textrm{L}}(x)$ and $\chi_{\textrm{R}}(x)$ independently by a phase, and Noether's theorem gives two conserved 1-form currents
\begin{equation}
j^{(1)}_{\textrm{L}} =
\big(\chi_{\textrm{L}}(x)^\dagger \overline{\sigma}_\mu \chi_{\textrm{L}}(x)\big) dx^\mu \ , \qquad 
j^{(1)}_{\textrm{R}} =
\big(\chi_{\textrm{R}}(x)^\dagger \sigma_\mu \chi_{\textrm{R}}(x)\big)dx^\mu  \ .
\end{equation}
From these one obtains 3d invertible topological defects as in \eqref{eq:pform}. 
When $m\neq 0$ these currents are explicitly broken: the massive Dirac fermion is invariant only under simultaneous rotation of both of its chiral components. 
Thus, only the diagonal symmetry $j^{(1)}_{\textrm{V}} = j^{(1)}_{\textrm{L}} + j^{(1)}_{\textrm{R}}$ is conserved. 
The action of QED is given by gauging this symmetry by coupling it to a Maxwell field, 
\begin{equation}
\mathcal S_{\textrm{QED}} = \mathcal S_{\textrm{Maxwell}} + \mathcal S_{\textrm{Dirac}} + \textrm{i} q_{\textrm{e}} \int_X a^{(1)} \wedge \ast j^{(1)}_{\textrm{V}}\,,
\end{equation}
where $q_{\textrm{e}}$ is the electric charge of the electron. 
That $j^{(1)}_{\textrm{V}}$ is conserved is crucial for the gauge invariance of the action. 
This coupling has the effect of screening the $\textrm{U}(1)^{(1)}_{\textrm{e}}$ electric 1-form symmetry down to $\mathds Z^{(1)}_{q_{\textrm{e}}}$ -- in the language of Remark~\ref{rem:screening}, a Wilson line of charge~$q_\textrm{e}$ 
can end on this electron. 
The magnetic 1-form symmetry is not affected by this coupling, ${\textrm{U}}(1)^{(1)}_{\textrm{m}}$ is still a symmetry of QED. 
In what follows we will work with electrons of unit electric charge $q_{\textrm{e}}=1$,
so that the ${\textrm{U}}(1)^{(1)}_{\textrm{e}}$ is completely screened.

Classically, setting $m=0$, it is natural to expect that massless QED acquires a $\textrm{U}(1)^{(0)}_{\textrm{A}}$ symmetry, corresponding to the additional conserved current $j^{(1)}_{\textrm{A}} = j^{(1)}_{\textrm{L}} - j^{(1)}_{\textrm{R}}$. 
However, in correlators this current is not conserved: quantum effects imply that it satisfies the Adler--Bell--Jackiw anomaly equation
\begin{equation}
d \,{\ast} j^{(1)}_{\textrm{A}} 
= \frac{1}{8 \pi^2} f^{(2)} \wedge f^{(2)}\,.
\end{equation}
Inserting $j^{(1)}_{\textrm{A}}$ into the construction \eqref{eq:pform} would result in a non-topological 3-dimensional operator of QED. However, for every value of $\theta \neq 0$ the expression $\textrm{exp}(\textrm{i}\theta\int_{\Sigma^3} \ast j^{(1)}_{\textrm{A}})$ does give rise to a topological interface between two distinct 4d theories: 
Writing 
\begin{equation}
\mathcal S_{\textrm{QED}_\theta} 
:= 
\mathcal S_{\textrm{QED}}
+ \textrm{i}\frac{\theta}{8\pi^2} \int f^{(2)}\wedge f^{(2)}
\end{equation}
for the theory with a non-trivial $\theta$-term, we obtain a topological
codimension-1 defect as follows: 
\begin{equation}\label{eq:interth}
\begin{gathered}
\begin{tikzpicture}[scale=0.7]
\fill[blue!40!white, opacity=0.7] 
(-5,1) to [out=90, in=-90]  (-5,3)
	to [out=0,in=180] (0,3)
	to [out = -90, in =90] (0,1)
	to [out=180, in =0]  (-5,1);
 \fill[blue!20!white, opacity=0.7] 
(0,1) to [out=90, in=-90]  (0,3)
	to [out=0,in=180] (11,3)
	to [out = -90, in =90] (11,1)
	to [out=180, in =0]  (0,1);
	\node[above] at (-3,1.75) {{\footnotesize $\textrm{QED}_\theta$}};
	\node[above] at (8.5,1.75) {{\footnotesize $\textrm{QED}_0 = \textrm{QED}$}};
	\draw[ultra thick, color=green!60!black] (0,1) -- (0,3);
    \node at (2.1,1.5) {{\footnotesize$\text{exp}\left( \textrm{i}\theta \int_{\Sigma^3} \ast j^{(1)}_{\textrm{A}} \right)\,$}};
\end{tikzpicture}
\end{gathered} 
\end{equation}

It turns out that for $\frac{\theta}{2\pi} \in \mathds{Q}/\mathds{Z}$ one can construct another topological interface between $\textrm{QED}$ and $\textrm{QED}_\theta$, which, however, is non-invertible. 
Concretely, let us consider $\theta = 2\pi p/N$ where $p$ and $N$ are coprime integers. 
Due to the quantisation condition $\frac{1}{2\pi}\oint f^{(2)} \in \mathds{Z}$, with such a choice of $\theta$ the only difference between the path integrals on the two sides of the interface in~\eqref{eq:interth} arises from the contributions of those fluxes whose square is non-zero modulo $N$. 
These identify a $\mathds{Z}_N$ subgroup of the $\textrm{U}(1)^{(1)}_{\textrm{m}}$ magnetic 1-form symmetry of QED which suggests to look for a 3d (spin)
TFT $\zz^{N,p}_3$ with an anomalous 1-form symmetry $\mathds{Z}_{N}^{(1)}$ such that, in the language introduced in Footnote~\ref{footnote:inflow}, its 4d anomaly inflow theory $\zz^{N,p}_4$ cancels the mismatching contributions. 
One option is given by the 4d anomaly theory
\begin{equation}\label{eq:SPT}
\zz^{N,p}_4[B^{(2)}] = \text{exp}\left(-\frac{2\pi \textrm{i} p}{2N}\int \mathcal{P}(B^{(2)})\right)~,
\end{equation}
where $B^{(2)} \in H^2(X,\mathds{Z}_N)$
denotes the background field for $\mathds{Z}_{N}^{(1)}$,
$\mathcal{P}$ is the Pontryagin square operation,
and $\int \mathcal{P}(B^{(2)})$ refers to evaluating against a 4-chain in $X$.
Indeed, denoting by $[f^{(2)}/(2\pi)] \in H^2(X,\mathds{Z})$ the cohomology class of $f^{(2)}/(2\pi)$, we couple this 3d TFT to the Maxwell field of the 4d theory via the identification
\begin{equation}\label{eq:coupling}
B^{(2)} \equiv \, [f^{(2)}/(2\pi) ] \mod N~.
\end{equation}
In this way the anomaly inflow action precisely cancels all the mismatching contributions to the path integral and we obtain a non-invertible interface from of QED$_{\frac{2\pi p}{N}}$ to QED$_0$: 
\begin{equation}
\begin{gathered}
\begin{tikzpicture}[scale=0.7]
\fill[blue!40!white, opacity=0.7] 
(-5,1) to [out=90, in=-90]  (-5,3)
	to [out=0,in=180] (3,3)
	to [out = -90, in =90] (3,1)
	to [out=180, in =0]  (-5,1);
\fill[blue!20!white, opacity=0.7] 
(-9,1) to [out=90, in=-90]  (-9,3)
	to [out=0,in=180] (-5,3)
	to [out = -90, in =90] (-5,1)
	to [out=180, in =0]  (-9,1);
\fill[blue!20!white, opacity=0.7] 
(3,1) to [out=90, in=-90]  (3,3)
	to [out=0,in=180] (8,3)
	to [out = -90, in =90] (8,1)
	to [out=180, in =0]  (3,1);
    \node[above] at (-4.3,1.2) {\textcolor{black}{{\footnotesize$\zz_3^{N,p}$}}};
    \node[above] at (-6.7,1.75) {{\footnotesize $\textrm{QED}_0$}};
    \node[above] at (-1,1.75) {{\footnotesize $\textrm{QED}_\frac{2\pi p}{N}$}};
	\node[above] at (5.7,1.75) {{\footnotesize $\textrm{QED}_0$}};
    \node at (5,1.5) {{\tiny$\text{exp}\left( \textrm{i} \frac{2\pi p}{N} \int_{\Sigma^3} \ast j^{(1)}_{\textrm{A}} \right)\,$}};
	\draw[ultra thick, color=green!60!black] (3,1) -- (3,3);
	\draw[ultra thick, color=orange!90!black] (-5,1) -- (-5,3);
\end{tikzpicture}
\end{gathered} 
\end{equation}
By sending the distance between the above two defects to zero, we obtain a codimension-1 topological defect
\begin{equation}\label{eq:chiralsy}
\topD{\frac{p}{N}}{(\Sigma^3)} = \zz^{N,p}_3(\Sigma^3) \otimes \text{exp}\left(  \textrm{i}\frac{2\pi p}{N} \int_{\Sigma^3} \ast j^{(1)}_{\textrm{A}} \right)\,.
\end{equation}

The only missing ingredient in the construction is a 3d TFT with a $\mathds{Z}^{(1)}_N$ form symmetry and anomaly \eqref{eq:SPT}. 
As discussed in detail in \cite{Hsin:2018vcg}, any such 3d TFT can be written as a product $\zz^{N,p}_3 \cong \mathcal A^{N,p}_3 \otimes \widetilde{\zz}_3^{N,p}$ where $\widetilde{\zz}^{N,p}_3$ is a factor which is completely neutral with respect to the $\mathds{Z}^{(1)}_N$ charge. 
Then one can choose to use this minimal abelian TFT $\mathcal{A}_3^{N,p}$ as a candidate 3d TFT to define a chiral symmetry topological defect for QED$_0$. 
The $(\mathds Q/\mathds Z)^{(0)}_A$ family of such defects gives rise to the chiral symmetry of massless QED. 
Indeed, in massless QED the chirality of the fermions is conserved in scattering processes, which gives a selection rule on correlation functions. This selection rule is a consequence of the non-invertible chiral symmetry \cite{Choi:2022jqy}. 

\medskip
\setlength{\leftmargini}{1.4em}

\noindent
We conclude by highlighting some features of the non-invertible symmetry:
\begin{itemize}
\item[$\triangleright$]
\textsl{(Charged lines)} 
Notice that by the relation~\eqref{eq:coupling} the 1-form symmetry on the defect is identified with a $\mathds Z_N^{(1)}$ subgroup of the magnetic 1-form symmetry of QED. 
The simple line operators $\mathcal{L}^s$ of $\mathcal A^{N, p}_3$ have $\mathds{Z}_N$ fusion rules, and 
by the above identifications, they acquire the charge
\begin{equation}\label{eq:linecharged}
q_{\textrm{m}}(\mathcal L^s) 
= p s \in \mathds Z_N \subset \textrm{U}(1)^{(1)}_{\textrm{m}}\,.
\end{equation}
under the bulk $\textrm{U}(1)^{(1)}_{\textrm{m}}$.
For this reason suitable 1-form magnetic symmetry defects can end on the defect $\mathcal D_{\frac{p}{N}}$ ending on such magnetically charged lines.

\item[$\triangleright$]
\textsl{(Value on the sphere)} The value of $\topD{\frac{p}{N}}$ on a 3-sphere bounding an empty 3-ball is $\topD{\frac{p}{N}}{(S^3)} = \mathcal A_3^{N,p}(S^3)=1 / \sqrt{N}$ \cite{Hsin:2018vcg}, 
which gives an analogue of a quantum dimension for this operator.

\item[$\triangleright$]
\textsl{(Action on Hilbert space)} The non-invertibility of $\topD{\frac{1}{N}}$ can be made explicit by considering its action on the 
Hilbert space for $S^1 \times S^2$ of QED. 
It splits in superselection sectors graded by magnetic flux $q_\mathrm{m} \in \mathds{Z}$, and each operator $\topD{\frac{1}{N}}{(S^1 \times S^2)}$ acts as a projector onto sectors where the magnetic flux is $0$ mod $N$, see \cite[Sect.\,A]{Cordova:2022ieu}. 

\item[$\triangleright$] 
\textsl{(Action on 't~Hooft lines of QED)} 
The topological defects $\mathcal D_{\frac{p}{N}}$ act on the 't~Hooft lines of QED similar to the order/disorder transition in the Ising model. 
By the Witten effect, when crossing the topological defect the 't~Hooft line becomes an ill-quantised Wilson line with a magnetic 1-form surface defect connecting it to the worldvolume of $\mathcal D_{\frac{p}{N}}$.

\item[$\triangleright$] 
\textsl{(Higher structure)} The higher category of topological defects of QED, capturing properties of the $(\mathds Q/\mathds Z)^{(0)}_A$ topological defects $\topD{\frac{p}{N}}$, as well as their interplay with the $\textrm{U}(1)^{(1)}_{\textrm{m}}$ topological defects (along with the higher gaugings \cite{Roumpedakis:2022aik} of its finite subgroups), is investigated in detail in
\cite{Copetti:2023mcq}. 
The two main outcomes are that the higher structure
\begin{enumerate}
\item \textit{is physical}: there are observable consequences on correlators of QED of the non-triviality of associators like the ones defined around \eqref{eq:3dCalcExample} for chiral symmetry topological defects;
\item \textit{gives relations among correlators on more general topologies}: 
expressing a closed compact 4-manifold via a handle decomposition allows one to compute the defect network resulting from a bubble expansion as in Figure~\ref{fig:WTid}.
\end{enumerate}
\end{itemize}
For further details about the material of this section, 
see~\cite{Putrov:2022pua,Copetti:2023mcq}. 

\newcommand\arxiv[2]      {\href{https://arXiv.org/abs/#1}{#2}}
\newcommand\doi[2]        {\href{https://dx.doi.org/#1}{#2}}

\end{document}